\DeclareMathOperator*{\argmin}{arg\,min}
\def\openone{\leavevmode\hbox{\small1\kern-3.8pt\normalsize1}}
\DeclarePairedDelimiter\ceil{\lceil}{\rceil}
\def\11{\mathbb{I}}
\newtheorem{definition}{Definition}[section]
\newtheorem{proposition}[definition]{Proposition}
\newtheorem{lemma}[definition]{Lemma}
\newtheorem{theorem}[definition]{Theorem}
\newtheorem{corollary}[definition]{Corollary}
\newcommand{\SC}{n^{\star}}
\newcommand{\JS}{\operatorname{JS}}
\newcommand{\LDP}{\operatorname{LDP}}
\newcommand{\sym}{\operatorname{B}}
\newcommand{\asym}{\operatorname{PF}}
\newcommand{\tr}{\mathop{\rm Tr}\nolimits}
\newcommand{\cA}{{\cal A}}
\newcommand{\cB}{{\cal B}}
\newcommand{\cN}{{\cal N}}
\newcommand{\cR}{{\cal R}}
\newcommand{\cM}{{\mathcal{M}}}
\def\d{\mathrm{d}}
\numberwithin{equation}{section}
\DeclareRobustCommand\openone{\leavevmode\hbox{\small1\normalsize\kern-.33em1}}
\newcommand{\id}{{\rm{id}}}
\newcommand{\be}{\begin{equation}}
	\newcommand{\ee}{\end{equation}}
\newcommand{\bea}{\begin{eqnarray}}
	\newcommand{\eea}{\end{eqnarray}}
\newcommand{\beas}{\begin{eqnarray*}}
	\newcommand{\eeas}{\end{eqnarray*}}
\DeclareFontFamily{U}{mathx}{\hyphenchar\font45}
\DeclareFontShape{U}{mathx}{m}{n}{<-> mathx10}{}
\DeclareSymbolFont{mathx}{U}{mathx}{m}{n}
\DeclareMathAccent{\widebar}{0}{mathx}{"73}
\newcommand{\Renyi}{R{\'e}nyi~}
\renewcommand{\d}{\textnormal{d}}
\newcommand{\Id}{{\mathds{1}}}
\DeclareMathAccent{\widehat}{0}{mathx}{"70}
\DeclareMathAccent{\widecheck}{0}{mathx}{"71}
\title{Sample Complexity of Locally Differentially Private \\ Quantum Hypothesis Testing}
\author{Hao-Chung Cheng$^{3,4,5}$, Christoph Hirche$^{1}$, Cambyse Rouz\'e$^{2}$}
\affil{$^1$\textit{Institute for Information Processing (tnt/L3S), Leibniz Universit\"at Hannover, Germany}\\
$^2$\textit{Inria, T\'{e}l\'{e}com Paris - LTCI, Institut Polytechnique de Paris, 91120 Palaiseau, France}
\\ 
\textit{Zentrum Mathematik, Technische Universit\"{a}t M\"{u}nchen, 85748 Garching, Germany}\\
$^3$\textit{Department of Electrical Engineering,
National Taiwan University
, Taipei 10617, Taiwan (R.O.C.)
}\\
$^4$\textit{Physics Division, National Center for Theoretical Sciences, Taipei 10617, Taiwan (R.O.C.)}\\	
$^5$\textit{Hon Hai (Foxconn) Quantum Computing Centre, New Taipei City 236, Taiwan (R.O.C.)}}
\begin{document}

\maketitle

\begin{abstract}
Quantum state discrimination is an important problem in many information processing tasks. In this work we are concerned with finding its best possible sample complexity when the states are preprocessed by a quantum channel that is required to be locally differentially private. To that end we provide achievability and converse bounds for different settings. This includes symmetric state discrimination in various regimes and the asymmetric case. On the way, we also prove new sample complexity bounds for the general unconstrained setting. An important tool in this endeavor are new entropy inequalities that we believe to be of independent interest. 
\end{abstract}

\section{Introduction}

Hypothesis testing is a fundamental primitive in information theory. The most basic setting is that of state discrimination where we are given a quantum state that is either in the state $\rho$ or in the state $\sigma$. The goal is to identify which of these is the case. The corresponding error probability, assuming symmetric priority of the types of errors, is given by
\begin{align}\label{Eq:Error-prob}
        p_e(\rho,\sigma,p) & = p - p E_\frac{1-p}{p}(\rho^{\otimes n}\|\sigma^{\otimes n}),
\end{align}
where $E_\gamma(\rho\|\sigma):=\tr(\rho-\gamma\sigma)_+$ is the Hockey Stick divergence between the states $\rho$ and $\sigma$. The chance of identifying the state correctly can be improved by considering the availability of $n$ copies of the state which allows to measure them jointly. At this point several figures of merit can become useful. If one simply aims to minimize the probability of error in the asymptotic limit then the corresponding error exponent is famously given by the Chernoff exponent~\cite{ACM+07, ANS+08}.  

A different approach is to instead minimize the number of samples needed to achieve a certain error probability. This is called the sample complexity of the discrimination problem and is defined as 
\begin{align} \label{eq:defn:sym}
    \SC_{\sym}(\rho,\sigma,p,\delta) := \inf\{ n \,|\, p_{e}(\rho^{\otimes n},\sigma^{\otimes n},p) \leq\delta \}. 
\end{align}
Here, the subscript `B' stands for the Bayesian setting.
For simplicity, we will sometimes fix $p=0.5$ and $\delta=0.1$ and then drop them from the notation. It is a folklore result that the sample complexity in the setting with equal priors is given by \cite{FG99, Wilde2024_review}
\begin{align}
    \SC_{\sym}\left(\rho,\sigma\right) = \Theta\left(\frac1{-\log F(\rho,\sigma)}\right),
\end{align}
where $F(\rho,\sigma)$ is the quantum fidelity.
A related problem concerned with an intermediate setting is that of sequential hypothesis testing~\cite{vargas2021quantum,li2022optimal}.
In this work, we consider the sample complexity when the state in question is affected by a noisy channel before we receive it. Explicitly, we are considering the class of $\epsilon$-locally differentially private quantum channels, denoted $\LDP_\epsilon$, which is defined by~\cite{hirche2023quantumDP}
\begin{align}
    \LDP_\epsilon = \{ \cA\,|\, E_{e^\epsilon}(\cA(\rho)\|\cA(\sigma))=0\quad \forall\rho,\sigma \}. 
\end{align}
That is, we are interested in the sample complexity
\begin{align}
  \SC_{\sym,\epsilon}(\rho,\sigma,p,\delta) := \inf_{\cA\in \LDP_\epsilon}\,\SC_{\sym}(\mathcal{A}(\rho),\mathcal{A}(\sigma),p,\delta)\,, 
\end{align}
namely the smallest possible sample size given that the states are subject to locally differetially private noise. In the classical case such problems have received a fair amount of attention recently~\cite{Evfimievski_2003, Kasiviswanathan_2008, duchi2013local, kamath2020LDP}. In the case of equal priors, $p=0.5$, the classical equivalent of our particular problem was investigated in~\cite{asoodeh2022contraction}. There, lower and upper bounds on differentially private sample complexity were given. In this work, we build on these results to investigate local differential privacy in the sample complexity of quantum state discrimination. 
The main result of our work can be summarized by the following inequalities. For equal priors, we find, 
\begin{align}
    &\max\left\{\frac{(e^\epsilon+1)\log 2.5}{2(e^\epsilon-1)H_{\frac12}(\rho\|\sigma)}, \frac{16}{25} \frac{e^{\epsilon}}{(e^\epsilon-1)^2\,E_1(\rho\|\sigma)^2 } \right\} \leq \SC_{\sym,\epsilon}\left(\rho,\sigma,0.5,0.1\right) \leq \left(\frac{e^\epsilon+1}{e^\epsilon-1}\right)^2\frac{2 \log 5}{ E_1(\rho\|\sigma)^2}. 
\end{align}
These mirror mostly the classical result in~\cite[Lemma 2]{asoodeh2022contraction}, up to some differences in the $\epsilon$-dependent constants that we will discuss later. 
The sample complexity with arbitrary priors was recently investigated in~\cite{pensia2024sample,Wilde2024_review}. We extend the discussion of local differential privacy to this setting and find, 
\begin{align} \label{eq:bounds_p/4}
        \max\left\{ \frac3{16}\frac{e^\epsilon+1}{e^\epsilon-1} \frac{p\log{p^{-1}}}{\JS_p(\rho\|\sigma)}, \frac9{16}\frac{1}{e^{-\epsilon}(e^\epsilon-1)E_1^2(\rho\|\sigma)}  \right\} \leq \SC_{\sym,\epsilon }\left(\rho,\sigma,p,\frac{p}4\right) &\leq 
         \left(\frac{e^\epsilon+1}{e^\epsilon-1}\right)^2 \frac{2\log{p^{-1}}+2\log{4}}{ E_1(\rho\|\sigma)^2}\,.
\end{align}
It should be noted that the optimal bounds on the sample complexity necessarily depend on the relationship between $\delta$ and $p$. This is why we choose $\delta=\frac{p}4$, which gives the most interesting region. For a detailed discussion we refer to~\cite{pensia2024sample} and the later sections. 
In particular, we recover the interesting observation that the best known lower bound is given by a different divergence depending on the value of $\epsilon$. Note that this ``phase-transition'' is known to persist in the exact behaviour of binary classical probability distributions~\cite{pensia2023simple}. 

The main challenge in deriving the above result lies in the need of several new entropic inequalities that we believe should also be of independent interest. To that end we use the recently established framework of $f$-divergences defined via integral representations from~\cite{hirche2023quantum}. In particular, these divergences behave well when we consider their contraction coefficients which allows us to prove new bounds for several such coefficients under local differential privacy. 

Lastly, the above established bounds along with the equivalence \cite[Corollary 4.8]{pensia2024sample} allow us to obtain bounds on sample complexity for asymmetric hypothesis testing as follows: if the type-I error $\alpha$ and type-II error $\beta$ are at most $1/8$, then
\begin{align}
\frac{1}{e^{-\epsilon}(e^\epsilon-1)E_1(\rho\|\sigma)^2}
        \lesssim \SC_{\asym,\epsilon }\left(\rho,\sigma,\alpha,\beta\right) &\lesssim
         \left(\frac{e^\epsilon+1}{e^\epsilon-1}\right)^2 \frac{1 }{ E_1(\rho\|\sigma)^2},
\end{align}
where `$\lesssim$' means an inequality up to a universal constant and we refer the readers to Section~\ref{sec:asym} for more precise definitions of $\SC_{\asym,\epsilon }\left(\rho,\sigma,\alpha,\beta\right)$.

\textit{Note:} During the completion of this work we became aware of the related work~\cite{Theshani2024} that also considers hypothesis testing under differential privacy constraints. 

\section{Preliminaries}
\subsection{Notations}
We denote by $\mathcal{S}_d$ the set of $d$-dimensional quantum states, and by $\mathcal{L}_d$ that of $d\times d$ arbitrary matrices. We will also adopt standard notations from quantum information theory: for a quantum system $A$, $|A|$ denotes the dimension of its associated Hilbert space, $\mathcal{S}_A $ the set of its quantum states, and $\mathcal{L}_A\equiv \mathcal{L}_{|A|}$. Given a self-adjoint matrix $H$, we denote by $\{H\ge 0\}$ the projection onto the sum of eigensubspaces corresponding to the non-negative eigenvalues of $H$, and write $H_+:=H\{H\ge 0\}$. Similarly, for two states $\rho,\sigma\in\mathcal{S}_A$, $\{\rho\ge \sigma\}:=\{\rho-\sigma\ge 0\}$.  
We use the natural logarithm $\log$ throughout this paper.
We use $I$ for the identity matrix.

\subsection{Divergences and distance measures}

The classical Renyi divergence is defined as, 
\begin{align*}
D_\alpha(P\|Q)&= \frac{1}{\alpha-1} \log\left( 1 +(\alpha-1)H_{\alpha}(P\|Q)\right) =\frac{1}{\alpha-1}\log(\tr(P^\alpha Q^{1-\alpha})) 
\end{align*}
where,
\begin{align}
    H_\alpha(P\|Q)=\frac{1}{\alpha-1}\big(\tr(P^\alpha Q^{1-\alpha})-1\big)\,.
\end{align}
is the Hellinger divergence.
An important special case which we will consider here is the case $\alpha=1/2$. Then $H_{\frac{1}{2}}(P,Q)=2\big(1-\tr P^{\frac{1}{2}} Q^{\frac{1}{2}}\big)=\tr\big[(\sqrt{P}-\sqrt{Q})^2\big]$ corresponds to (twice) the squared Hellinger distance between $P$ and $Q$. 

In the quantum setting, we have several different definitions of \Renyi divergences. Recently, a family of \Renyi divergences was introduced based on the following integral representation~\cite{hirche2023quantum}: for $\alpha>0$ with $\alpha\ne 1$,
\begin{align}\label{Eq:Def-Da-Ha}
    D_{\alpha}(\rho\|\sigma) = \frac{1}{\alpha-1} \log\left( 1 +(\alpha-1)H_{\alpha}(\rho\|\sigma)\right),
\end{align}
with
\begin{align}
    \label{eq:H_alpha}
    H_{\alpha}(\rho\|\sigma) = \alpha\int_1^\infty\Big(\gamma^{\alpha-2}E_\gamma(\rho\|\sigma)+\gamma^{-\alpha-1}E_\gamma(\sigma\|\rho)\,\Big)d\gamma\,, 
\end{align}
where $E_\gamma(\rho\|\sigma):=\tr(\rho-\gamma\sigma)_+$ denotes the quantum Hockey-Stick divergence. Note that more generally,~\cite{hirche2023quantum} defines a family of quantum $f$-divergences,
\begin{align}
    \label{eq:D_f}
    D_{f}(\rho\|\sigma) = \int_1^\infty\Big(f''(\gamma) E_\gamma(\rho\|\sigma)+\gamma^{-3} f''(\gamma^{-1}) E_\gamma(\sigma\|\rho)\,\Big)d\gamma\,, 
\end{align} which gives the Hellinger divergence for $f(x)=\frac{x^\alpha-1}{\alpha-1}$. 
Two alternative definitions of \Renyi divergences are the Petz \Renyi divergence~\cite{petz1986quasi}, 
\begin{align}
    \widebar D_\alpha(\rho\|\sigma) &:= \frac{1}{\alpha -1} \log \tr \left( \rho^{\alpha}\sigma^{1-\alpha} \right) , \qquad &&\textnormal{for}\ \alpha \in (0, 1) \cup (1, 2] ,
\end{align}
and the sandwiched \Renyi divergence~\cite{muller2013quantum,wilde2014strong},
\begin{align}
    \widetilde D_\alpha(\rho\|\sigma) &= \frac{1}{\alpha -1} \log \tr \left( \left(\sigma^{\frac{1-\alpha}{2\alpha}}\rho\sigma^{\frac{1-\alpha}{2\alpha}}\right)^{\alpha} \right), \qquad &&\textnormal{for}\ \alpha \in \Big[\frac12, 1\Big) \cup (1, \infty) .
\end{align}
We define the corresponding Hellinger divergences $\widebar H_\alpha(\rho\|\sigma)$ and $\widetilde H_\alpha(\rho\|\sigma)$ analog to Equation~\eqref{Eq:Def-Da-Ha}. 

We recall that the fidelity is given by the $\frac{1}{2}$-sandwiched \Renyi divergence via $\widetilde D_{\frac12}(\rho\|\sigma) = -2\log F(\rho,\sigma)$,
where $F(\rho,\sigma) := \|\sqrt{\rho}\sqrt{\sigma}\|_1$ denotes the fidelity between the quantum states $\rho$ and $\sigma$,
and $d_B(\rho,\sigma) := \sqrt{1-F(\rho,\sigma)}$ is their so-called Bures distance.  In the following, we will also make use of the max-relative entropy $D_{\infty}(\rho\|\sigma):=\log\|\sigma^{-\frac{1}{2}}\rho\sigma^{-\frac{1}{2}}\|_\infty$, where we assume that $\sigma$ is invertible for simplicity.
Note that all the R\'{e}nyi divergences defined above are different in general, but they coincide when $\rho$ and $\sigma$ commute and correspond to two probability distributions $P$ and $Q$. 
We also recall that for all of them the limit $\alpha\to 1$ leads to the standard Umegaki relative entropy $D_1(\rho\|\sigma)\equiv D(\rho\|\sigma):=\tr\big[\rho(\log\rho-\log\sigma)\big]$.
The Jensen-Shannon divergence is defined as
\begin{align}
    \JS_p(\rho\|\sigma) = p D(\rho\|p\rho+(1-p)\sigma) + (1-p) D(\sigma\|p\rho+(1-p)\sigma). 
\end{align}
Finally we remark that the aforementioned $f$-divergences from~\cite{hirche2023quantum} also include the relative entropy for $f(x)=x\log x$ and the Jensen-Shannon divergence for $f(x)=px\log(x) + (1-p+px)\log((1-p+px)^{-1})$. 

\section{Sample complexity of hypothesis testing}

Given any two states $\rho,\sigma\in \mathcal{S}_d$, the optimal sample complexity for the task of hypothesis testing between $\rho$ and $\sigma$, namely the number of copies needed to distinguish $\rho$ from $\sigma$, satisfies \cite{Wilde2024_review}
\begin{align}\label{SCquantum}
    \SC_{\sym}(\rho,\sigma) \!=
    \Theta\left(\frac{-1}{\log F(\rho,\sigma)}\right)  \!= \Theta\left(\frac1{d_B^2(\rho,\sigma)}\right)\!.
\end{align}
It is a well-known fact that the sample complexity of hypothesis testing between two distributions satisfies \begin{align}
    \SC_{\sym}(P,Q)=\Theta\left(\frac{1}{H_{\frac{1}{2}}(P,Q)}\right)
\end{align} 
(see \cite{bar2002complexity} for the lower bound, and \cite{canonne2019structure} for the upper bound).
To recover this classical asymptotic relation from \eqref{SCquantum}, we need to relate $H_{\frac12}(\rho\|\sigma)$ to the quantum fidelity.
\begin{lemma}\label{Lem:FvdG}
   For any two states $\rho,\sigma\in\mathcal{S}_d$,
    \begin{align}
        1 - F(\rho,\sigma) \leq \frac{1}{2} H_{\frac12}(\rho\|\sigma) \leq E_1(\rho\|\sigma) \leq \sqrt{1-F(\rho,\sigma)^2} \leq \sqrt{H_{\frac12}(\rho\|\sigma)}.
    \end{align}
\end{lemma}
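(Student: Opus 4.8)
The five quantities form a chain, so the plan is to prove the four successive inequalities separately, using throughout the identity $E_1(\rho\|\sigma)=\tr(\rho-\sigma)_+=\frac12\norm{\rho-\sigma}_1$, i.e.\ that $E_1$ is just the trace distance $T(\rho,\sigma)$ (both the positive and the negative parts of the traceless operator $\rho-\sigma$ carry equal trace).

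Two of the links are comparatively routine. The third inequality $E_1\le\sqrt{1-F^2}$ is exactly the upper Fuchs--van de Graaf bound $T(\rho,\sigma)\le\sqrt{1-F(\rho,\sigma)^2}$, which I will simply invoke. For the second inequality $\frac12 H_{\frac12}(\rho\|\sigma)\le E_1(\rho\|\sigma)$ I will work directly from the integral representation \eqref{eq:H_alpha}: at $\alpha=\frac12$ it reads $H_{\frac12}(\rho\|\sigma)=\frac12\int_1^\infty\gamma^{-3/2}\big(E_\gamma(\rho\|\sigma)+E_\gamma(\sigma\|\rho)\big)\,d\gamma$. Since $\gamma\mapsto E_\gamma$ is non-increasing for $\gamma\ge 1$ (because $\rho-\gamma\sigma$ is decreasing in $\gamma$, and $A\le B$ implies $\tr A_+\le\tr B_+$), each term in the integrand is bounded by its value $E_1=T$ at $\gamma=1$; evaluating $\int_1^\infty\gamma^{-3/2}\,d\gamma=2$ then yields $H_{\frac12}\le 2E_1$, as desired.

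The heart of the lemma is the first inequality $1-F\le\frac12 H_{\frac12}$. My plan is to route through the measured Hellinger divergence $H_{\frac12}^{\mathrm{meas}}$, obtained by optimizing the classical $H_{\frac12}$ over all measurements of the two states. On the one hand, the Fuchs--Caves variational formula identifies the fidelity as the minimal Bhattacharyya overlap over measurements, which is precisely the statement $\frac12 H_{\frac12}^{\mathrm{meas}}(\rho\|\sigma)=1-F(\rho,\sigma)$. On the other hand, since the integral-representation divergence of \cite{hirche2023quantum} satisfies the data-processing inequality and reduces to the classical Hellinger divergence on (classical) measurement outcomes, applying data processing across all measurements and taking the supremum gives $H_{\frac12}^{\mathrm{meas}}\le H_{\frac12}$. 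Here one must check that, because $\alpha=\frac12<1$, the monotone map $h\mapsto -2\log(1-h/2)$ relating $D_{\frac12}$ to $H_{\frac12}$ is increasing, so the ordering of the $D_{\frac12}$'s transfers to the intended ordering of the $H_{\frac12}$'s. Combining the two facts gives $1-F\le\frac12 H_{\frac12}$, and the last inequality then comes for free: $1-F^2=(1-F)(1+F)\le 2(1-F)\le H_{\frac12}$, so $\sqrt{1-F^2}\le\sqrt{H_{\frac12}}$.

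The main obstacle is this first inequality, and specifically the comparison $H_{\frac12}^{\mathrm{meas}}\le H_{\frac12}$. The subtlety is that the integral-representation divergence is a priori unrelated to the fidelity, so one genuinely needs its data-processing property together with its agreement with the classical Hellinger divergence on commuting states; the sign bookkeeping for $\alpha<1$ (where $\alpha-1$ is negative) must be handled carefully so that the inequality is not inadvertently reversed when passing between $D_{\frac12}$ and $H_{\frac12}$. Everything else is either a one-line integral estimate or a citation of the standard Fuchs--van de Graaf bound.
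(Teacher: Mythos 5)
Your proof is correct, and for three of the four links it is exactly the paper's argument: the bound $\tfrac12 H_{\frac12}\le E_1$ from monotonicity of $\gamma\mapsto E_\gamma$ in the integral representation (the paper just says ``follows from $E_\gamma\le E_1$''), the third link cited as Fuchs--van de Graaf, and the last link obtained from the first via $1-F^2\le 2(1-F)$. The only genuine difference is the first inequality. The paper deduces $1-F\le\tfrac12 H_{\frac12}$ in one line from the ordering $\widetilde D_{\frac12}(\rho\|\sigma)\le D_{\frac12}(\rho\|\sigma)$ of Rényi divergences (citing \cite{hirche2023quantum}), using $\widetilde D_{\frac12}=-2\log F$ and $D_{\frac12}=-2\log\big(1-\tfrac12 H_{\frac12}\big)$. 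You instead re-derive this ordering at $\alpha=\tfrac12$ from first principles: Fuchs--Caves identifies $1-F$ with $\tfrac12 H_{\frac12}^{\mathrm{meas}}$, and data processing for the integral-representation divergence gives $H_{\frac12}^{\mathrm{meas}}\le H_{\frac12}$. These are two faces of the same fact, since $\widetilde D_{\frac12}$ \emph{is} the measured Rényi-$\tfrac12$ divergence by Fuchs--Caves; your version is more self-contained, the paper's is shorter but leans on the cited ordering. One simplification to your argument: the sign bookkeeping you worry about when passing between $D_{\frac12}$ and $H_{\frac12}$ can be avoided entirely, because $H_{\frac12}$ is itself a quantum $f$-divergence of the form \eqref{eq:D_f} with convex $f(x)=2(1-\sqrt{x})$, so the data-processing inequality applies to it directly (as $E_\gamma$ contracts under channels and $f''\ge 0$), with no detour through the logarithmic reparametrization.
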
 
\begin{proof}
    The first  
 inequality follows from $\widetilde D_{\frac12}(\rho\|\sigma) \leq D_{\frac12}(\rho\|\sigma)$ and the second from $E_\gamma \leq E_1$. This is essentially also the argument in the proof of~\cite[Corollary 5.6]{hirche2023quantum}. The third inequality is the usual Fuchs-van-de-Graaf inequality \cite{FG99}. The final inequality follows from the first. 
\end{proof}
Hence, we can argue from these bounds that also in the quantum setting we can at least partially express SC in terms of $H_{\frac12}(\rho\|\sigma)$. Note that the following fidelity based bounds are similar to results recently provided in~\cite{Wilde2024_review}. We provide a full proof for completeness. 
\begin{proposition} \label{prop:SC_original}
For $\delta\leq p\leq\frac12$ and any two states $\rho$ and $\sigma$ with $H_{\frac12}(\rho\|\sigma) \leq 1$, we have
\begin{align}\label{asymptquant2}
    \frac{\frac12 \log{\frac{p(1-p)}{\delta(1-\delta)}}}{   H_{\frac12}(\rho\|\sigma)  }
    \leq \frac{\frac12 \log{\frac{p(1-p)}{\delta(1-\delta)}}}{ - \log F(\rho,\sigma) }
    \leq
    \SC_{\sym}\left(\rho,\sigma,p, \delta\right)
&\leq
\frac{ \log (\frac{1-p}{\delta}) }{ -\log F(\rho,\sigma) }
\leq
\frac{ \log (\frac{1-p}{\delta}) }{ 1 - F(\rho,\sigma) }\,.
\end{align}
Specifically, for $p=\frac12$ and $\delta\leq\frac14$, we have
\begin{align}\label{12asymptquant2}
    \frac{\frac12 \log{\frac{1}{4\delta}}}{   H_{\frac12}(\rho\|\sigma)  }
    \leq \frac{\frac12 \log{\frac{1}{4\delta}}}{ - \log F(\rho,\sigma) }
    \leq
    \SC_{\sym}\left(\rho,\sigma,0.5, \delta\right)
&\leq
\frac{ \log (\frac{1}{2\delta}) }{ -\log F(\rho,\sigma) }
\leq
\frac{ \log (\frac{1}{2\delta}) }{ 1 - F(\rho,\sigma) }\,.
\end{align}
\end{proposition}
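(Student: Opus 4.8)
The plan is to read the four-term chain as two elementary ``outer'' inequalities sandwiching the genuine sample-complexity bounds in the two inner slots, and to treat the inner bounds through weighted Fuchs--van de Graaf estimates combined with multiplicativity of the fidelity. Throughout I would work from the Holevo--Helstrom identity
\begin{align}
  p_e(\rho^{\otimes n},\sigma^{\otimes n},p) = \tfrac12\big(1-\|p\rho^{\otimes n}-(1-p)\sigma^{\otimes n}\|_1\big),
\end{align}
which follows from the stated definitions of $p_e$ and $E_\gamma$ (using $\tr(X)_+=\tfrac12(\|X\|_1+\tr X)$ with $X=p\rho^{\otimes n}-(1-p)\sigma^{\otimes n}$), together with $F(\rho^{\otimes n},\sigma^{\otimes n})=F(\rho,\sigma)^n$. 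Since $E_\gamma$ is monotone under partial trace, $p_e(\rho^{\otimes n},\sigma^{\otimes n},p)$ is nonincreasing in $n$, so $\{n:p_e\le\delta\}$ is an up-set and it suffices, for each bound, to locate the threshold value of $n$.

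For the two outer inequalities I would argue as follows. The leftmost inequality is equivalent to $-\log F(\rho,\sigma)\le H_{\frac12}(\rho\|\sigma)$ (the common numerator $\tfrac12\log\frac{p(1-p)}{\delta(1-\delta)}$ is nonnegative because $t\mapsto t(1-t)$ increases on $[0,\tfrac12]$ and $\delta\le p\le\tfrac12$). From \Cref{Lem:FvdG} we have $F\ge 1-\tfrac12 H_{\frac12}$, hence $-\log F\le-\log(1-\tfrac12 H_{\frac12})$, and the elementary bound $-\log(1-x)\le 2x$ on $[0,\tfrac12]$ — applied with $x=\tfrac12 H_{\frac12}\le\tfrac12$, which is exactly where the hypothesis $H_{\frac12}\le1$ enters — closes it. The rightmost inequality is equivalent to $1-F\le-\log F$, i.e.\ $\log t\le t-1$ at $t=F$.

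The converse (third $\le$) is the lower bound on $\SC$. Assuming $p_e(\rho^{\otimes n},\sigma^{\otimes n},p)\le\delta$ gives $\|p\rho^{\otimes n}-(1-p)\sigma^{\otimes n}\|_1\ge 1-2\delta$. The key ingredient is the weighted Fuchs--van de Graaf upper bound
\begin{align}
  \|p\rho-(1-p)\sigma\|_1 \le \sqrt{1-4p(1-p)F(\rho,\sigma)^2},
\end{align}
which I would prove by purifying $\rho,\sigma$ to vectors with overlap $F$ via Uhlmann's theorem, computing the two nonzero eigenvalues of $p\proj{\psi}-(1-p)\proj{\phi}$ explicitly (their squared difference equals $1-4p(1-p)F^2$, so the trace norm is exactly the claimed square root for pure states), and then invoking monotonicity of the trace norm under partial trace. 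Applied to the $n$-fold powers (with $F\mapsto F^n$), the estimate $1-2\delta\le\sqrt{1-4p(1-p)F^{2n}}$ rearranges — squaring is legitimate since $\delta\le\tfrac12$ — to $F^{2n}\le\frac{\delta(1-\delta)}{p(1-p)}$, i.e.\ $n\ge\frac{\frac12\log\frac{p(1-p)}{\delta(1-\delta)}}{-\log F}$, as claimed.

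The achievability (second $\le$) uses the companion generalized Fuchs--van de Graaf bound for positive operators, $\|A-B\|_1\ge\tr A+\tr B-2\|\sqrt A\sqrt B\|_1$, with $A=p\rho^{\otimes n}$ and $B=(1-p)\sigma^{\otimes n}$; since $\|\sqrt A\sqrt B\|_1=\sqrt{p(1-p)}F^n$ and $\tr A+\tr B=1$, this yields $p_e(\rho^{\otimes n},\sigma^{\otimes n},p)\le\sqrt{p(1-p)}F^n$. Requiring the right-hand side to be at most $\delta$ and using $\sqrt{p(1-p)}\le 1-p$ (valid for $p\le\tfrac12$) gives the threshold $\frac{\log\frac{1-p}{\delta}}{-\log F}$, with the slack between $\sqrt{p(1-p)}$ and $1-p$ serving to pass from the real-valued threshold to the integer-valued $\SC$. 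The specialization to $p=\tfrac12$ is then pure substitution, together with $\frac{1/4}{\delta(1-\delta)}\ge\frac{1}{4\delta}$ (so the clean numerator $\tfrac12\log\frac1{4\delta}$ only weakens the lower bound) and $1-p=\tfrac12$; the restriction $\delta\le\tfrac14$ merely keeps that numerator nonnegative. I expect the one genuine obstacle to be establishing the weighted upper bound on $\|p\rho-(1-p)\sigma\|_1$ cleanly, since the reduction to the explicit two-dimensional pure-state computation carries all the content; the remaining steps are bookkeeping.
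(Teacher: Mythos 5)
Your proposal is correct and follows essentially the same route as the paper: the Helstrom/trace-norm form of the error probability, multiplicativity of the fidelity, and Fuchs--van de Graaf estimates in both directions, with the two outer inequalities handled by the same elementary bounds $-\log(1-x)\le 2x$ and $1-F\le -\log F$. The only difference is that you derive the two key fidelity estimates from first principles (the weighted bound $\|p\rho-(1-p)\sigma\|_1\le\sqrt{1-4p(1-p)F^2}$ via Uhlmann, and $p_e\le\sqrt{p(1-p)}F^n$ via the $s=\tfrac12$ Chernoff-type inequality), whereas the paper imports the equivalent statements as Lemmas II.3 and II.4 of the cited quantum differential privacy paper.
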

\begin{proof}
We first show the upper bound.
By Equation~\eqref{Eq:Error-prob} and Lemma~\ref{Lem:FvdG},
\begin{align}
\delta 
&= p - p E_\frac{1-p}{p}(\rho^{\otimes n}\|\sigma^{\otimes n})
\\ 
&\leq p - p \left( 1- \frac{1-p}{p} E_1(\rho^{\otimes n}\|\sigma^{\otimes n}) \right) \\
&= (1-p)\left( 1- E_1(\rho^{\otimes n}\|\sigma^{\otimes n}) \right) \\
&\leq (1-p) F\left( \rho^{\otimes n} \| \sigma^{\otimes n} \right)
\\
&= (1-p) F\left( \rho \| \sigma \right)^n,
\end{align}
where the first inequality is from~\cite[Lemma II.4]{hirche2023quantumDP}. We obtain
\begin{align}
\SC_{\sym}\left(\rho,\sigma,p, \delta\right)
&\leq
\frac{ \log (\frac{1-p}{\delta}) }{ -\log F(\rho,\sigma) }
\leq
\frac{ \log (\frac{1-p}{\delta}) }{ 1 - F(\rho,\sigma) },
\end{align}
where the last inequality follows from 
$\log(1+x)\leq x$ for all $x > -1$.

Next, we show the lower bound. By~\cite[Lemma II.3]{hirche2023quantumDP}, we have, 
\begin{align}
    E_\gamma(\rho\|\sigma) \leq \frac12 \sqrt{(1+\gamma)^2-4\gamma F(\rho,\sigma)^2}. 
\end{align}
Following the idea in \cite[Theorem 4.7]{bar2002complexity} and using Lemma~\ref{Lem:FvdG}, we then have
\begin{align}
 F(\rho,\sigma)^n
 &= F\left(\rho^{\otimes n},\sigma^{\otimes n}\right) 
 \\
 &\leq \sqrt{\frac{1}{p(1-p)}\left[ \frac{1}{4} - \left(\frac12 -p +p E_{\frac{1-p}{p}}\left(\rho^{\otimes n},\sigma^{\otimes n}\right)\right)^2 \right]}
 \\
 &\leq \sqrt{\frac{1}{p(1-p)}\left[ \frac{1}{4} - \left(\frac12 -\delta\right)^2 \right]}
 \\
 &= \sqrt{\frac{\delta(1-\delta)}{p(1-p)} }.
\end{align}
This gives,
\begin{align}
    n \geq \frac{\frac12 \log{\frac{p(1-p)}{\delta(1-\delta)}}}{-\log F(\rho,\sigma)}. 
\end{align}
Finally, we bound this in terms of $H_{\frac12}(\rho\|\sigma)$, 
\begin{align}
\SC_{\sym}\left(\rho,\sigma,0.5,\delta\right)
&\geq
\frac{\frac12 \log{\frac{p(1-p)}{\delta(1-\delta)}}}{ - \log F(\rho,\sigma) }
\\
&\geq
\frac{\frac12 \log{\frac{p(1-p)}{\delta(1-\delta)}}}{  \log \left( 1 - \frac{H_{\frac12}(\rho\|\sigma)}{2} \right) }
\\
&\geq \frac{\frac12 \log{\frac{p(1-p)}{\delta(1-\delta)}}}{   H_{\frac12}(\rho\|\sigma)  },
\end{align}
where the last inequality follows from $\log (1-x) \geq -2x$, for all $0\leq x \leq \frac{1}{2}$ provided that 
$H_{\frac12}(\rho\|\sigma)  \leq 1$.
\end{proof}
As was pointed out in~\cite{pensia2024sample}, the general lower and upper bounds provided in this way don't match in all parameters. In particular, the bounds in Equation~\eqref{asymptquant2} are tight when $\delta$ depends quadratic on $p$, specifically $\delta\leq\frac{p^2}{4}$. The counterpoint is set by linear dependence, here $\delta=\frac{p}4$, for which we derive new bounds below. These are quantum generalizations of~\cite[Theorem 2.1]{pensia2024sample}. 
\begin{theorem}\label{Thm:SC-ap}
For $\delta=\frac{p}4$ and $p\in(0,\frac12]$, 
    \begin{align}
        \frac{3\sqrt{2}}{16}  \frac{\log2}{\lambda  H_{1-\lambda}(\rho\|\sigma)} \leq \frac3{16} \frac{p\log{p^{-1}}}{\JS_p(\rho\|\sigma)}\leq  \SC_{\sym}(\rho,\sigma,p,\frac{p}4) \leq \frac{2}{\lambda \widebar H_{1-\lambda}(\rho\|\sigma)}, 
    \end{align}
    where $\lambda=\frac{\log2}{2\log{p^{-1}}}$. 
\end{theorem}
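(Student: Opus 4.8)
The statement decomposes into three essentially independent claims: a Jensen–Shannon lower bound on $\SC_{\sym}$ (the second inequality), a Petz–Hellinger upper bound (the third), and a purely entropic comparison of $\JS_p$ with $H_{1-\lambda}$ that supplies the first inequality. The plan is to prove the two sample-complexity bounds directly from the error formula~\eqref{Eq:Error-prob} together with the integral/Chernoff machinery, and to isolate the first inequality as a standalone divergence inequality inside the $f$-divergence framework of~\cite{hirche2023quantum}.

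\textbf{The lower bound.} I would use that $\JS_p(\rho\|\sigma)$ is exactly the Holevo information of the binary ensemble $\{(p,\rho),(1-p,\sigma)\}$, i.e. the mutual information $I(\Theta;B)$ for a label $\Theta$ with prior $(p,1-p)$. For $n$ copies the relevant ensemble is $\{(p,\rho^{\otimes n}),(1-p,\sigma^{\otimes n})\}$; since $B_1,\dots,B_n$ are conditionally independent given $\Theta$, subadditivity of the von Neumann entropy gives $\JS_p(\rho^{\otimes n}\|\sigma^{\otimes n})=I(\Theta;B^n)\le\sum_i I(\Theta;B_i)=n\,\JS_p(\rho\|\sigma)$. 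Fixing any $n$ achieving $p_e\le\delta=\tfrac p4$ and letting $\hat\Theta$ be the outcome of the optimal two-outcome test, data processing of the mutual information under this measurement gives $\JS_p(\rho^{\otimes n}\|\sigma^{\otimes n})\ge I(\Theta;\hat\Theta)$, and binary Fano yields $I(\Theta;\hat\Theta)\ge h(p)-h(p_e)\ge h(p)-h(\tfrac p4)$ with $h$ the binary entropy. Hence $n\ge\big(h(p)-h(\tfrac p4)\big)/\JS_p(\rho\|\sigma)$, and it remains only to verify the elementary calculus inequality $h(p)-h(\tfrac p4)\ge\tfrac{3}{16}\,p\log p^{-1}$ on $(0,\tfrac12]$ (for small $p$ the left side is $\sim\tfrac34\,p\log p^{-1}$, so there is ample room).

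\textbf{The upper bound.} By~\eqref{Eq:Error-prob}, $p_e\le\tfrac p4$ is equivalent to $E_\gamma(\rho^{\otimes n}\|\sigma^{\otimes n})\ge\tfrac34$ with $\gamma=\tfrac{1-p}p$. Writing $1-E_\gamma(\rho\|\sigma)=\tfrac12\big(1+\gamma-\tr|\rho-\gamma\sigma|\big)$ and applying the Audenaert et al. inequality $\tfrac12(\tr A+\tr B-\tr|A-B|)\le\tr[A^sB^{1-s}]$ underlying the quantum Chernoff bound~\cite{ACM+07,ANS+08}, with $A=\rho$, $B=\gamma\sigma$ and $s=1-\lambda$, gives $1-E_\gamma(\rho^{\otimes n}\|\sigma^{\otimes n})\le\gamma^{\lambda}\big(\tr[\rho^{1-\lambda}\sigma^{\lambda}]\big)^{n}=\gamma^{\lambda}\big(1-\lambda\widebar H_{1-\lambda}(\rho\|\sigma)\big)^{n}$. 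Imposing $\gamma^{\lambda}e^{-n\lambda\widebar H_{1-\lambda}(\rho\|\sigma)}\le\tfrac14$ and solving shows that $n\ge\big(\lambda\log\gamma+2\log2\big)/\big(\lambda\widebar H_{1-\lambda}(\rho\|\sigma)\big)$ suffices; since $\log\gamma\le\log p^{-1}$ and $\lambda\log p^{-1}=\tfrac{\log2}2$, the numerator is at most $\tfrac52\log2<2$, which yields the stated bound.

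\textbf{The entropic comparison.} Using $\lambda\log p^{-1}=\tfrac{\log2}2$, the first inequality reduces, after clearing the factor $\tfrac3{16}$ and cross-multiplying, to a comparison between $\JS_p(\rho\|\sigma)$ and $\tfrac{p}{2\sqrt2}H_{1-\lambda}(\rho\|\sigma)$. I would establish this inside the integral representation of~\cite{hirche2023quantum}: both divergences are integrals over $\gamma\ge1$ of the Hockey-stick terms $E_\gamma(\rho\|\sigma)$ and $E_\gamma(\sigma\|\rho)$ weighted by the second derivatives $f_{\mathrm{JS}}''(\gamma)=\tfrac{p(1-p)}{\gamma(1-p+p\gamma)}$ and $f_{H}''(\gamma)=(1-\lambda)\gamma^{-1-\lambda}$ (together with their $\gamma^{-3}f''(\gamma^{-1})$ partners), so it suffices to compare these weights. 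The two weights cross over: the $\JS$ weight dominates near $\gamma=1$ but decays faster ($\sim\gamma^{-2}$ versus $\gamma^{-1-\lambda}$) as $\gamma\to\infty$, so a naive pointwise comparison cannot succeed, and this large-$\gamma$ tail is where I expect the real difficulty to lie. I would control it using that $E_\gamma(\rho\|\sigma)$ is nonincreasing in $\gamma$, is bounded by the trace distance $E_1(\rho\|\sigma)$, and is supported on $\gamma\le e^{D_\infty(\rho\|\sigma)}$, combining these monotonicity and support properties with an integration by parts (or a reweighting) of the two integral representations to close the comparison.
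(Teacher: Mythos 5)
Your treatment of the two sample-complexity inequalities is sound and matches the paper's route almost exactly: the lower bound via subadditivity of $\JS_p$ (as a mutual information), data processing, and binary Fano is the paper's Proposition~\ref{Prop:prior-lower} (the paper invokes the algebra of \cite{pensia2024sample} where you reduce to the elementary inequality $h(p)-h(p/4)\ge\frac{3}{16}p\log p^{-1}$, which does hold on $(0,\frac12]$); the upper bound via the Audenaert inequality $1-E_{\frac{1-p}{p}}(\rho^{\otimes n}\|\sigma^{\otimes n})\le \gamma^\lambda\,\widebar Q_{1-\lambda}(\rho\|\sigma)^n$ and $1-x\le e^{-x}$ is the paper's Proposition~\ref{Prop:prior-upper}, constant for constant.

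The gap is in the first inequality, i.e.\ the comparison $\JS_p(\rho\|\sigma)\le \sqrt{2}\,p\,H_{1-\lambda}(\rho\|\sigma)$. You correctly identify the weights $f''_{\JS}(\gamma)=\frac{p(1-p)}{\gamma(1-p+p\gamma)}$ and $g''_\lambda(\gamma)=(1-\lambda)\gamma^{-1-\lambda}$ in the integral representation, but your claim that ``a naive pointwise comparison cannot succeed'' is false, and your plan for circumventing it is left unexecuted. The directional reasoning is backwards: since $f''_{\JS}(\gamma)\sim(1-p)\gamma^{-2}$ decays \emph{faster} than $\gamma^{-1-\lambda}$, the large-$\gamma$ tail is where the inequality $f''_{\JS}\le c\lambda g''_\lambda$ is \emph{easiest}, not hardest. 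In fact the required ratio
\begin{align}
h(\gamma):=\frac{\lambda(1-\lambda)\,g''_\lambda(\gamma)}{\lambda(1-\lambda)^{-1} f''_{\JS}(\gamma)\cdot\frac{\lambda(1-\lambda)}{p(1-p)}}\quad\text{reduces to}\quad (1-p)\gamma^{-\lambda}+p\,\gamma^{1-\lambda},
\end{align}
which diverges at both $\gamma\to 0$ and $\gamma\to\infty$ (because $0<\lambda<1$) and hence attains a strictly positive minimum at the interior point $\gamma_0=\frac{\lambda(1-p)}{(1-\lambda)p}$. Evaluating there gives the pointwise domination $f''_{\JS}(\gamma)\le c\lambda g''_\lambda(\gamma)$ for all $\gamma>0$ with $c=\frac{p}{\lambda}\big(\frac{\lambda(1-p)}{(1-\lambda)p}\big)^{\lambda}$, which (applied to both terms of the integral representation \eqref{eq:D_f}) yields $\JS_p\le p\,e^{\lambda\log p^{-1}}H_{1-\lambda}$ up to a factor at most $1$, and $e^{\lambda\log p^{-1}}=\sqrt2$ for $\lambda=\frac{\log 2}{2\log p^{-1}}$. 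This is precisely the paper's Corollary~\ref{Cor:L74}; no integration by parts, support bounds on $E_\gamma$, or reweighting is needed. As written, your proposal does not establish the first inequality, and the obstruction you flag is not actually present.
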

\begin{proof}
    The first inequality follows from Corollary~\ref{Cor:L74}, the second is Proposition~\ref{Prop:prior-lower} and the upper bound is Proposition~\ref{Prop:prior-upper}. 
\end{proof}
The individual technical contributions will be proven in the following section. There we also include more general relationships between $p$ and $\delta$. Note that the upper bound uses the Petz Hellinger divergence, while the lower bound usues the Hellinger divergence defined in~\cite{hirche2023quantum}. The precise relation between the two remains an open problem, although there are examples in~\cite{hirche2023quantum} for which $\widebar H_s(\rho\|\sigma)\leq H_s(\rho\|\sigma)$. 

\subsection{Proof of Theorem~\ref{Thm:SC-ap}}

We start with the following observation, giving two properties of the Jensen-Shannon divergence.  
\begin{lemma}\label{lem:JS-prop}
    The following two properties hold. 
    \begin{enumerate}
        \item We have, 
    \begin{align}
        \JS_p(\rho\|\sigma) = I(\Theta : A),
    \end{align}
    where $\tau_{\Theta A}=p |0\rangle\langle 0|\otimes \rho + (1-p) |1\rangle\langle 1|\otimes\sigma$.
    \item The Jensen-Shannon divergence is subadditive, i.e. 
    \begin{align}
        \JS_p(\rho^{\otimes n}\|\sigma^{\otimes n}) \leq n \JS_p(\rho\|\sigma).
    \end{align}
    \end{enumerate}
\end{lemma}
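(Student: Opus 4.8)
The plan is to treat the two parts separately, using the identification of $\JS_p$ with a Holevo-type mutual information as the linchpin for both.

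For the first part, I would recall the standard formula for the mutual information of a classical-quantum state. Writing the von Neumann entropy as $S(\omega)=-\tr[\omega\log\omega]$, for the cq-state $\tau_{\Theta A}=p|0\rangle\langle 0|\otimes\rho+(1-p)|1\rangle\langle 1|\otimes\sigma$ one has $S(\Theta A)=H_2(p)+pS(\rho)+(1-p)S(\sigma)$, where $H_2(p)=-p\log p-(1-p)\log(1-p)$, because the orthogonality of $|0\rangle$ and $|1\rangle$ in the $\Theta$ register makes the spectrum of $\tau_{\Theta A}$ the union of the scaled spectra of $\rho$ and $\sigma$. Since $S(\Theta)=H_2(p)$ and $S(A)=S(p\rho+(1-p)\sigma)$, the mutual information $I(\Theta:A)=S(\Theta)+S(A)-S(\Theta A)$ collapses to the Holevo quantity $S(\bar\rho)-pS(\rho)-(1-p)S(\sigma)$ with $\bar\rho=p\rho+(1-p)\sigma$. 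Expanding $D(\rho\|\bar\rho)=-S(\rho)-\tr[\rho\log\bar\rho]$ and the analogue for $\sigma$, then summing with weights $p$ and $1-p$, reproduces exactly $pD(\rho\|\bar\rho)+(1-p)D(\sigma\|\bar\rho)=\JS_p(\rho\|\sigma)$, which is the claim.

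For the second part, I would apply the first part at the level of $n$ copies, writing $\JS_p(\rho^{\otimes n}\|\sigma^{\otimes n})=I(\Theta:A_1\cdots A_n)$ for the cq-state $\tau=p|0\rangle\langle 0|\otimes\rho^{\otimes n}+(1-p)|1\rangle\langle 1|\otimes\sigma^{\otimes n}$. The key structural observation is that, conditioned on either value of $\Theta$, the joint state of $A_1,\dots,A_n$ is a product, so the $A_i$ are conditionally independent given $\Theta$. Using the chain rule $I(\Theta:A_1\cdots A_n)=\sum_{i=1}^n I(\Theta:A_i\,|\,A_1\cdots A_{i-1})$ and writing each conditional mutual information as $S(A_i|A_{<i})-S(A_i|A_{<i}\Theta)$, conditional independence gives $S(A_i|A_{<i}\Theta)=S(A_i|\Theta)$, while subadditivity of the von Neumann entropy (``conditioning reduces entropy'') gives $S(A_i|A_{<i})\le S(A_i)$. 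Hence each term is bounded by $S(A_i)-S(A_i|\Theta)=I(\Theta:A_i)$, and by permutation symmetry each of these equals $\JS_p(\rho\|\sigma)$ via the first part. Summing the $n$ terms yields the desired subadditivity.

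I do not expect a genuine obstacle; the only points requiring care are purely quantum. First, the spectral computation of $S(\Theta A)$ must invoke the orthogonality of the $\Theta$ register explicitly. Second, the bound $S(A_i|A_{<i})\le S(A_i)$ is the quantum subadditivity of entropy rather than a classical conditioning argument, and the identity $S(A_i|A_{<i}\Theta)=S(A_i|\Theta)$ must be justified from the product structure within each $\Theta$-branch rather than assumed. With these two facts in hand the argument reduces to the chain rule, so the write-up should be short.
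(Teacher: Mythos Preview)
Your proposal is correct and follows essentially the same route as the paper: both parts hinge on identifying $\JS_p$ with the mutual information of the cq-state, and Part~2 then uses subadditivity of the von Neumann entropy together with the product structure conditional on $\Theta$. The only cosmetic difference is that the paper bounds $H(A^n)\le nH(A)$ in one shot rather than unrolling the chain rule, and for Part~1 it uses the relative-entropy form $I(\Theta:A)=D(\tau_{\Theta A}\|\tau_\Theta\otimes\tau_A)$ directly instead of going through the entropy decomposition.
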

\begin{proof}
    We start by proving the first claim. We note,
    \begin{align}
        I(\Theta : A) = D(\tau_{\Theta A}\|\tau_{\Theta }\otimes \tau_{A}) 
        = p D(\rho\|\tau_{A}) + (1-p) D(\sigma\|\tau_{A}) = \JS_p(\rho\|\sigma),  
    \end{align}
    where $\tau_{A}=\tr_\Theta \tau_{\Theta A}=p \rho + (1-p)\sigma$. 
    To prove the second claim, set $\tau_{\Theta A^n}=p |0\rangle\langle 0|\otimes \rho^{\otimes n} + (1-p) |1\rangle\langle 1|\otimes\sigma^{\otimes n}$. Then, using the first claim, we have,
    \begin{align}
        \JS_p(\rho^{\otimes n}\|\sigma^{\otimes n}) &= I(\Theta : A^n) \\
        &= H(A^n) - H(A^n|\Theta) \\
        &\leq n H(A) - H(A^n|\Theta) \\
        &= n H(A) - p H(\rho^{\otimes n})- (1-p) H(\sigma^{\otimes n}) \\
        &= n H(A) - n p H(\rho)- n (1-p) H(\sigma) \\
        &= n I(\Theta:A)=n \JS_p(\rho\|\sigma),
    \end{align}
    where the inequality is subadditivity of the entropy. This completes the proof. 
\end{proof}
With this, we can give a lower bound on the sample complexity. 
\begin{proposition}\label{Prop:prior-lower}
    For $p\in(0,\frac12]$ and $p_e^*\leq p(1-\gamma)$, we have 
    \begin{align}
        \SC_{\sym}\left(\rho,\sigma,p,p(1-\gamma)\right)  \geq \frac{p\gamma\log{\frac{1-p}{p}}+p^2\gamma^2}{\JS_p(\rho\|\sigma)}
    \end{align}
    and in particular for $\gamma=\frac34$, 
    \begin{align}
        \SC_{\sym}\left(\rho,\sigma,p,\frac{p}{4}\right) 
 \geq \frac3{16} \frac{p\log{p^{-1}}}{\JS_p(\rho\|\sigma)}. 
    \end{align}
\end{proposition}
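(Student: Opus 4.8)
\textbf{Proof proposal for Proposition~\ref{Prop:prior-lower}.}

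The plan is to lower bound the sample complexity by relating the error probability to the Jensen-Shannon divergence via a Fano-type / data-processing argument, exploiting the mutual-information identity established in Lemma~\ref{lem:JS-prop}. First I would set up the hypothesis-testing problem as a communication problem: let $\Theta$ be a uniform-over-$\{0,1\}$ (well, $\text{Bernoulli}(p)$) random variable indexing which state we receive, and consider the joint state $\tau_{\Theta A^n} = p\,|0\rangle\langle 0|\otimes\rho^{\otimes n} + (1-p)\,|1\rangle\langle 1|\otimes\sigma^{\otimes n}$. Any decision rule that distinguishes $\rho^{\otimes n}$ from $\sigma^{\otimes n}$ with error $p_e^*$ is precisely an attempt to recover $\Theta$ from $A^n$, so its Bayesian error is $p_e^*$. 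By Lemma~\ref{lem:JS-prop} (part 1 applied to $n$ copies) together with subadditivity (part 2), we have $I(\Theta:A^n) = \JS_p(\rho^{\otimes n}\|\sigma^{\otimes n}) \le n\,\JS_p(\rho\|\sigma)$, which gives the denominator once I produce a matching lower bound on $I(\Theta:A^n)$ in terms of $p_e^*$.

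The key step is therefore a lower bound on the mutual information $I(\Theta:A^n) = H(\Theta) - H(\Theta|A^n)$ in terms of the achievable error $p_e^*$. Here $H(\Theta) = h(p)$ is the binary entropy, and the measurement that achieves error $p_e^*$ produces a guess $\hat\Theta$; by data processing $H(\Theta|A^n) \le H(\Theta|\hat\Theta)$. The heart of the argument is a \emph{sharp, non-symmetric} estimate of $H(\Theta|\hat\Theta)$ that is tailored to the regime $p_e^* \le p(1-\gamma)$ rather than the usual symmetric Fano bound; I expect it to take the form $H(\Theta|A^n) \le h(p) - \big(p\gamma\log\tfrac{1-p}{p} + p^2\gamma^2\big)$, so that $I(\Theta:A^n) \ge p\gamma\log\tfrac{1-p}{p} + p^2\gamma^2$. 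Combining this with the subadditivity bound $I(\Theta:A^n)\le n\JS_p(\rho\|\sigma)$ and rearranging for $n$ yields exactly
\begin{align}
    \SC_{\sym}\left(\rho,\sigma,p,p(1-\gamma)\right) \geq \frac{p\gamma\log\frac{1-p}{p}+p^2\gamma^2}{\JS_p(\rho\|\sigma)}.
\end{align}

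The main obstacle will be establishing the tight conditional-entropy bound with the two explicit terms $p\gamma\log\tfrac{1-p}{p}$ and $p^2\gamma^2$, since the generic Fano inequality is too lossy to recover these constants. I would obtain it by writing out $H(\Theta|\hat\Theta)$ explicitly in terms of the joint distribution of $(\Theta,\hat\Theta)$, using that the optimal error satisfies $p_e^* \le p(1-\gamma)$ to constrain the off-diagonal probabilities, and then bounding the resulting expression by convexity of the binary entropy together with a second-order (quadratic) refinement around the point dictated by the prior $p$ — the linear term capturing the prior-dependent leading behaviour and the quadratic term $p^2\gamma^2$ coming from the Taylor remainder. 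This mirrors the classical estimate in~\cite[Theorem 2.1]{pensia2024sample}, and the quantum case follows once the problem has been reduced, via the mutual-information identity, to an entirely classical computation about the distribution of $(\Theta,\hat\Theta)$. Finally, the specialization $\gamma=\tfrac34$ gives $p(1-\gamma)=\tfrac{p}{4}$ and, using $\log\tfrac{1-p}{p} \ge \log p^{-1} - \log 2$ handled appropriately together with the quadratic term, reduces to the stated clean bound $\tfrac{3}{16}\,\frac{p\log p^{-1}}{\JS_p(\rho\|\sigma)}$.
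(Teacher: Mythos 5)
Your proposal follows essentially the same route as the paper's proof: the mutual-information identity and subadditivity from Lemma~\ref{lem:JS-prop} give $I(\Theta:A^n)\le n\,\JS_p(\rho\|\sigma)$, and a Fano-type bound $H(\Theta|A^n)\le h(p_e^*)$ reduces the claim to the purely classical binary-entropy estimate $h(p)-h(p_e^*)\ge p\gamma\log\tfrac{1-p}{p}+p^2\gamma^2$ under $p_e^*\le p(1-\gamma)$, which the paper (like you) delegates to the argument of \cite{pensia2024sample}. The only difference is presentational: the paper invokes Fano's inequality directly rather than passing through $H(\Theta|\hat\Theta)$, but the content is identical.
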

\begin{proof}
    We begin by noting, 
    \begin{align}
        I(\Theta : A^n) &= \JS_p(\rho^{\otimes n}\|\sigma^{\otimes n}) \leq n \JS_p(\rho\|\sigma), 
    \end{align}
    which follows from Lemma~\ref{lem:JS-prop}. This implies
    \begin{align}
        H(\Theta|A^n) \geq H(\Theta) + n \JS_p(\rho\|\sigma) \\
        = h(p) - n \JS_p(\rho\|\sigma). 
    \end{align}
    Then by Fano's inequality,
    \begin{align}
        h(p_e^*) \geq  h(p) - n \JS_p(\rho\|\sigma). 
    \end{align}
    Now, fixing $p\in(0,\frac12]$ and $p_e^*\leq p(1-\gamma)$ and following the exact same argument as~\cite{pensia2024sample}, we get
    \begin{align}
        n \geq \frac{p\gamma\log{\frac{1-p}{p}}+p^2\gamma^2}{\JS_p(\rho\|\sigma)}.
    \end{align}
    Further, specializing the above to $\gamma=\frac34$, one finds
    \begin{align}
        \frac3{16} \frac{p\log{p^{-1}}}{\JS_p(\rho\|\sigma)}. 
    \end{align}
    This concludes the proof. 
\end{proof}

For the next result, we give an upper bound on the sample complexity in terms of the Petz Hellinger divergence. 
\begin{proposition}\label{Prop:prior-upper}
    We have, 
    \begin{align}
        \SC_{\sym}(\rho,\sigma,p,\delta) &\leq \frac{\log{\left(p^{1-\lambda}(1-p)^\lambda/\delta\right)}}{\lambda \widebar H_{1-\lambda}(\rho\|\sigma)} .
    \end{align}
    And if we fix $\delta=\frac{p}4$ and $\lambda=\frac{\log2}{2\log{p^{-1}}}$, then we have 
    \begin{align}
        \SC_{\sym}\left(\rho,\sigma,p,\frac{p}{4}\right) &\leq \frac{2}{\lambda \widebar H_{1-\lambda}(\rho\|\sigma)}. 
    \end{align}
\end{proposition}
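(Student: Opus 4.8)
The plan is to bound the Bayesian error probability $p_e(\rho^{\otimes n},\sigma^{\otimes n},p)$ from above by a power of the Petz quantity $\tr(\rho^{1-\lambda}\sigma^\lambda)$ and then solve the resulting inequality $p_e\le\delta$ for $n$. First I would rewrite the error probability as $p_e(\rho^{\otimes n},\sigma^{\otimes n},p)=\tfrac12\big(1-\|p\rho^{\otimes n}-(1-p)\sigma^{\otimes n}\|_1\big)$, which follows from $p\,E_{(1-p)/p}(\rho^{\otimes n}\|\sigma^{\otimes n})=\tr(p\rho^{\otimes n}-(1-p)\sigma^{\otimes n})_+$ together with the identity $\tr X_+=\tfrac12(\|X\|_1+\tr X)$ for self-adjoint $X$. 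The crucial input is then the quantum Chernoff-bound trace inequality of Audenaert et al.~\cite{ACM+07}: for positive semidefinite $A,B$ and any $s\in[0,1]$, $\tfrac12(\tr A+\tr B-\|A-B\|_1)\le\tr(A^sB^{1-s})$. Applying it with $A=p\rho^{\otimes n}$, $B=(1-p)\sigma^{\otimes n}$ and $s=1-\lambda$ (note $\lambda\le\tfrac12$ here, so $s\in[0,1]$), together with the multiplicativity of $\tr(\rho^s\sigma^{1-s})$ under tensor powers, yields
\begin{align}
  p_e(\rho^{\otimes n},\sigma^{\otimes n},p)\le p^{1-\lambda}(1-p)^\lambda\,\tr(\rho^{1-\lambda}\sigma^\lambda)^n.
\end{align}

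Next I would convert the trace term into the Petz Hellinger divergence. By the definition of $\widebar H_\alpha$ with $\alpha=1-\lambda$ one has $\tr(\rho^{1-\lambda}\sigma^\lambda)=1-\lambda\widebar H_{1-\lambda}(\rho\|\sigma)$. Requiring the right-hand side above to be at most $\delta$, taking logarithms, and dividing by the negative quantity $\log\!\big(1-\lambda\widebar H_{1-\lambda}(\rho\|\sigma)\big)$ gives the sufficient condition
\begin{align}
  n\ge\frac{\log\big(p^{1-\lambda}(1-p)^\lambda/\delta\big)}{-\log\!\big(1-\lambda\widebar H_{1-\lambda}(\rho\|\sigma)\big)}.
\end{align}
Finally, the elementary inequality $-\log(1-x)\ge x$, valid for all $x<1$ (here $x=\lambda\widebar H_{1-\lambda}(\rho\|\sigma)=1-\tr(\rho^{1-\lambda}\sigma^\lambda)\in[0,1)$), replaces the denominator by $\lambda\widebar H_{1-\lambda}(\rho\|\sigma)$, so that the smallest admissible $n$ is at most $\log\big(p^{1-\lambda}(1-p)^\lambda/\delta\big)/\big(\lambda\widebar H_{1-\lambda}(\rho\|\sigma)\big)$, which is the first claimed bound.

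For the second statement I would substitute $\delta=p/4$ and $\lambda=\tfrac{\log2}{2\log p^{-1}}$ into the numerator and simplify, using $\lambda\log p^{-1}=\tfrac12\log2$:
\begin{align}
  \log\!\Big(\tfrac{p^{1-\lambda}(1-p)^\lambda}{p/4}\Big)
  =\log4+\lambda\log p^{-1}+\lambda\log(1-p)
  =2\log2+\tfrac12\log2+\lambda\log(1-p).
\end{align}
Since $\log(1-p)\le0$ for $p\in(0,\tfrac12]$, the numerator is bounded by $\tfrac52\log2\approx1.733<2$, which immediately gives the stated bound $2/\big(\lambda\widebar H_{1-\lambda}(\rho\|\sigma)\big)$.

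The only non-elementary ingredient — and hence the step to get right — is the Audenaert-type trace inequality bounding the Bayesian error by $\tr(A^sB^{1-s})$; everything downstream is algebraic manipulation and the two scalar estimates $-\log(1-x)\ge x$ and $\log(1-p)\le0$. A minor technical point is the passage from the real-valued threshold to an integer sample size: strictly speaking one obtains $\SC_{\sym}\le\lceil\,\cdot\,\rceil$ of the displayed expression, which I would either absorb into the statement or note is understood up to rounding.
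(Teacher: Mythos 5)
Your proposal is correct and follows essentially the same route as the paper: the central estimate $p_e(\rho^{\otimes n},\sigma^{\otimes n},p)\le p^{1-\lambda}(1-p)^{\lambda}\,\widebar Q_{1-\lambda}(\rho\|\sigma)^n$ is exactly what the paper derives, except that the paper obtains it from the hockey-stick bound $E_\gamma(\rho\|\sigma)\ge 1-\gamma^{1-s}\widebar Q_s(\rho\|\sigma)$ of~\cite{hirche2023quantum}, which is just a repackaging of the Audenaert trace inequality you invoke directly. The remaining steps ($1-x\le \mathrm{e}^{-x}$, solving for $n$, and the numerical simplification to $\tfrac52\log 2<2$) coincide with the paper's.
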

\begin{proof}
    We have, compare~\cite[Corollary 3.5]{hirche2023quantum},
    \begin{align}
        E_\gamma(\rho\|\sigma) &\geq 1-\gamma^{1-s}\widebar Q_s(\rho\|\sigma).
    \end{align}
    Thus we can bound, using $\lambda=1-s$,  
    \begin{align}
        p_e^*(\rho^{\otimes n},\sigma^{\otimes n},p) & = p - p E_\frac{1-p}{p}(\rho^{\otimes n}\|\sigma^{\otimes n}) \\
        &\leq p - p \left(1-\left(\frac{1-p}{p}\right)^{\lambda}\widebar Q_{1-\lambda}(\rho^{\otimes n}\|\sigma^{\otimes n})\right) \\
        &= p^{1-\lambda} (1-p)^{\lambda} \widebar Q_{1-\lambda}(\rho\|\sigma)^n \\
        &= p^{1-\lambda} (1-p)^{\lambda} (1-\lambda \widebar H_{1-\lambda}(\rho\|\sigma))^n \\
        &\leq p^{1-\lambda} (1-p)^{\lambda} e^{-n\lambda \widebar H_{1-\lambda}(\rho\|\sigma)},
    \end{align}
    where we used $1-x\leq e^{-x}$ in the final inequality. This implies that we get $p_e^*(p,\rho^{\otimes n},\sigma^{\otimes n})\leq\delta$, if
    \begin{align}
        n\geq \frac{\log{\left(p^{1-\lambda}(1-p)^\lambda/\delta\right)}}{\lambda \widebar H_{1-\lambda}(\rho\|\sigma)}.
    \end{align}
    Hence, 
    \begin{align}
        \SC_{\sym}(\rho,\sigma,p,\delta) &\leq \frac{\log{\left(p^{1-\lambda}(1-p)^\lambda/\delta\right)}}{\lambda \widebar H_{1-\lambda}(\rho\|\sigma)} \\
        &= \frac{\log{\left(p^{-\lambda}(1-p)^\lambda/(1-\gamma)\right)}}{\lambda \widebar H_{1-\lambda}(\rho\|\sigma)} \\
        &\leq \frac{\lambda\log{\left(p^{-1}\right)}-\log{(1-\gamma)}}{\lambda \widebar H_{1-\lambda}(\rho\|\sigma)}, 
    \end{align}
    where we fixed $\delta=p(1-\gamma)$ for the equality. Finally, fix $\lambda=\frac{\log2}{2\log{p^{-1}}}$ and $\gamma=\frac34$, then
    \begin{align}
        \SC_{\sym}(\rho,\sigma,p,\frac{p}4) &\leq \frac{\lambda\log{\left(p^{-1}\right)}-\log{(1-\gamma)}}{\lambda \widebar H_{1-\lambda}(\rho\|\sigma)} \\
        &= \frac{\frac12\log{2}-\log{\frac14}}{\lambda \widebar H_{1-\lambda}(\rho\|\sigma)} \\
        &\leq \frac{2}{\lambda \widebar H_{1-\lambda}(\rho\|\sigma)}. 
    \end{align}
    This completes the proof. 
\end{proof}

In~\cite{pensia2024sample} several bounds on the sample complexity of hypothesis testing are proven. In that work, an important role is taken by their~\cite[Lemma 7.4]{pensia2024sample}, which essentially states, 
\begin{align}
    \JS_\alpha(p\|q) \leq \alpha 32 e^{2\lambda\log(\frac1\alpha)} H_{1-\lambda}(p\|q), 
\end{align}
for $\alpha\in(0,\frac12]$ and $\lambda\in(0,\frac12]$.

We now proof the following lemma that is much tighter and also holds for quantum states. 
\begin{lemma} For any $\alpha\in(0,1)$ and $\lambda\in(0,1)$, 
    \begin{align}
    \JS_\alpha(\rho\|\sigma) \leq \alpha \left(\frac{\lambda(1-\alpha)}{(1-\lambda)\alpha}\right)^{\lambda} H_{1-\lambda}(\rho\|\sigma), 
\end{align}
\end{lemma}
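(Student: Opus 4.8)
The plan is to exploit the integral representation \eqref{eq:D_f} for both divergences, so that the claimed inequality reduces to a \emph{pointwise} comparison of scalar weight functions multiplying the nonnegative Hockey-Stick divergences $E_\gamma(\rho\|\sigma)$ and $E_\gamma(\sigma\|\rho)$. Concretely, the Jensen--Shannon divergence $\JS_\alpha$ is the $f$-divergence generated by $f_{\JS}(x)=\alpha x\log x-(1-\alpha+\alpha x)\log(1-\alpha+\alpha x)$, as recorded after \eqref{eq:D_f}, and a direct computation gives $f_{\JS}''(x)=\frac{\alpha(1-\alpha)}{x(1-\alpha+\alpha x)}$. The Hellinger divergence $H_{1-\lambda}$ is generated by $f_H(x)=\frac{x^{1-\lambda}-1}{-\lambda}$, with $f_H''(x)=(1-\lambda)x^{-1-\lambda}$; equivalently its weights can be read off directly from \eqref{eq:H_alpha}. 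Writing $C=\alpha\big(\tfrac{\lambda(1-\alpha)}{(1-\lambda)\alpha}\big)^{\lambda}$, the target inequality would follow if I establish the two pointwise bounds $f_{\JS}''(\gamma)\le C\,f_H''(\gamma)$ and $\gamma^{-3}f_{\JS}''(\gamma^{-1})\le C\,\gamma^{-3}f_H''(\gamma^{-1})$ for all $\gamma\ge 1$: the two integrands in \eqref{eq:D_f} are then dominated term by term, and since each $E_\gamma\ge 0$, integrating over $[1,\infty)$ yields $\JS_\alpha(\rho\|\sigma)\le C\,H_{1-\lambda}(\rho\|\sigma)$.

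The core step is two elementary scalar optimizations. After clearing powers of $\gamma$, the first bound is equivalent to $g(\gamma):=\frac{\alpha(1-\alpha)\gamma^{\lambda}}{1-\alpha+\alpha\gamma}\le C(1-\lambda)$. Differentiating, the unique stationary point of $g$ on $(0,\infty)$ solves $\lambda(1-\alpha+\alpha\gamma)=\alpha\gamma$, i.e.\ it occurs exactly at $\gamma^\star=\frac{\lambda(1-\alpha)}{(1-\lambda)\alpha}$; since $g\to 0$ at both ends, this is the global maximum. Substituting $\gamma^\star$, and using the simplification $1-\alpha+\alpha\gamma^\star=\frac{1-\alpha}{1-\lambda}$, gives $g(\gamma^\star)=\alpha(1-\lambda)(\gamma^\star)^{\lambda}=C(1-\lambda)$, so the first bound holds with equality precisely at $\gamma^\star$. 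The second bound is equivalent to $\tilde g(\gamma):=\frac{\alpha(1-\alpha)\gamma^{1-\lambda}}{(1-\alpha)\gamma+\alpha}\le C(1-\lambda)$; by the same computation its maximizer is the reciprocal $\gamma^{\star\star}=1/\gamma^\star$, and the maximal value is again exactly $C(1-\lambda)$. Thus the constant $C$ is precisely calibrated to the larger of the two suprema, which accounts for its sharp form and for where the extremal $\gamma$ sits.

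I expect the main obstacle to be merely the bookkeeping of these two optimizations together with the verification that they share the same extremal value $C(1-\lambda)$ (so that a single constant controls both the $E_\gamma(\rho\|\sigma)$ and the $E_\gamma(\sigma\|\rho)$ contributions). Conceptually the only thing to notice is that the integral representation \eqref{eq:D_f} legitimizes a term-by-term, pointwise domination because $E_\gamma\ge 0$: no cancellation or operator-ordering subtlety can arise, so the quantum statement collapses to the two real-variable inequalities above and holds verbatim for states and for probability distributions alike.
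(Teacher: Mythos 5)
Your proposal is correct and follows essentially the same route as the paper: both reduce the claim via the integral representation \eqref{eq:D_f} to the pointwise inequality $f_{\JS}''(x)\le C\,f_H''(x)$ for all $x>0$ (your two bounds over $\gamma\ge 1$ and over $\gamma^{-1}\in(0,1]$ together are exactly this), and both locate the extremal point $\gamma^\star=\frac{\lambda(1-\alpha)}{(1-\lambda)\alpha}$ by the same one-variable optimization, yielding the same sharp constant.
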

\begin{proof}
    We want to prove something of the form of 
    \begin{align}
        D_f(\rho\|\sigma) \leq c \lambda D_g(\rho\|\sigma).
    \end{align}
    Using the integral representation of the f-divergences, this holds if 
    \begin{align}
        f''(\gamma) \leq c \lambda g''(\gamma),\quad\forall \gamma\in[0,\infty]. 
    \end{align}
    For the Jensen-Shannon divergence $JS_\alpha$, we have
    \begin{align}
        f_\alpha''(x)=\frac{\alpha(1-\alpha)}{x(1-\alpha+\alpha x)}, 
    \end{align}
    and for the Hellinger divergence $H_{1-\lambda}$, 
    \begin{align}
        g_\lambda''(x)=(1-\lambda)x^{-\lambda-1}.
    \end{align}
    Hence, we need to find a $c$ such that 
    \begin{align}
        &\frac{\alpha(1-\alpha)}{x(1-\alpha+\alpha x)} \leq c \lambda  (1-\lambda)x^{-\lambda-1} \\
        \Leftrightarrow &\frac{\alpha(1-\alpha)}{\lambda(1-\lambda)} \leq c   (1-\alpha+\alpha x)x^{-\lambda} =: c\cdot h(x). \label{Eq:hx}
        \end{align}
    We are now left with finding the minimum of 
    \begin{align}
        h(x) = \frac{1-\alpha}{x^\lambda} + \alpha x^{1-\lambda}. 
    \end{align}
    This function diverges to $+\infty$ for $x\in\{0,\infty\}$ and has a unique minimum in between. To find the minimum, we check
    \begin{align}
        h'(x) = -\lambda\frac{1-\alpha}{x^{\lambda+1}} + (1-\lambda)\alpha x^{-\lambda}.  
    \end{align}
    Setting $h'(x_0)=0$ gives, 
    \begin{align}
        &\lambda\frac{1-\alpha}{x_0}=   (1-\lambda)\alpha \\
        \Leftrightarrow\quad &x_0 = \frac{\lambda(1-\alpha)}{(1-\lambda)\alpha}.
    \end{align}
    Putting this back into Equation~\eqref{Eq:hx}, we have
    \begin{align}
        h(x_0) =  (1-\alpha+\alpha x_0)x_0^{-\lambda} \\
        =  (1-\alpha+\alpha \frac{\lambda(1-\alpha)}{(1-\lambda)\alpha})\left(\frac{\lambda(1-\alpha)}{(1-\lambda)\alpha}\right)^{-\lambda} \\
        =  \left(\frac{1-\alpha}{1-\lambda}\right)\left(\frac{\lambda(1-\alpha)}{(1-\lambda)\alpha}\right)^{-\lambda}.
    \end{align}
    Hence we can choose any $c$ such that 
    \begin{align}
        &\frac{\alpha(1-\alpha)}{\lambda(1-\lambda)} \leq c   \left(\frac{1-\alpha}{1-\lambda}\right)\left(\frac{\lambda(1-\alpha)}{(1-\lambda)\alpha}\right)^{-\lambda} \\
        \Leftrightarrow\quad & \frac{\alpha}{\lambda} \left(\frac{\lambda(1-\alpha)}{(1-\lambda)\alpha}\right)^{\lambda} \leq c 
    \end{align}
    This gives the claimed statement. 
\end{proof}
This might look quite different from the previous result, but we can rewrite it as 
    \begin{align}
         \JS_\alpha(\rho\|\sigma) \leq \alpha \left(\frac{\lambda(1-\alpha)}{(1-\lambda)}\right)^{\lambda} e^{\lambda\log(\frac1\alpha)} H_{1-\lambda}(\rho\|\sigma). 
    \end{align}
For the special case considered in~\cite{pensia2024sample}, we can simplify this as follows. 
\begin{corollary}
    For any $\alpha\in(0,1)$ and $\lambda\in(0,\frac12]$, 
    \begin{align}
    \JS_\alpha(\rho\|\sigma) &\leq \alpha (1-\alpha)^{\lambda} e^{\lambda\log(\frac1\alpha)} H_{1-\lambda}(\rho\|\sigma) \\
    &\leq \alpha e^{\lambda\log(\frac1\alpha)} H_{1-\lambda}(\rho\|\sigma)
\end{align}
\end{corollary}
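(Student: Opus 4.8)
The plan is to start directly from the rewritten form of the preceding Lemma, which the text has already recorded as
\begin{align}
\JS_\alpha(\rho\|\sigma) \leq \alpha \left(\frac{\lambda(1-\alpha)}{1-\lambda}\right)^{\lambda} e^{\lambda\log(\frac1\alpha)} H_{1-\lambda}(\rho\|\sigma),
\end{align}
and then merely bound the prefactor $\left(\frac{\lambda(1-\alpha)}{1-\lambda}\right)^{\lambda}$ from above. The first displayed inequality of the corollary amounts to replacing this prefactor by $(1-\alpha)^{\lambda}$, and the second inequality comes from the trivial bound $(1-\alpha)^{\lambda}\leq 1$. So the whole statement is a two-line monotonicity argument on top of the Lemma.

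For the first inequality I would factor the prefactor as
\begin{align}
\left(\frac{\lambda(1-\alpha)}{1-\lambda}\right)^{\lambda} = (1-\alpha)^{\lambda}\left(\frac{\lambda}{1-\lambda}\right)^{\lambda},
\end{align}
so that it suffices to verify $\left(\frac{\lambda}{1-\lambda}\right)^{\lambda}\leq 1$. Since $\lambda>0$, the map $t\mapsto t^{\lambda}$ is strictly increasing on $(0,\infty)$, hence this is equivalent to $\frac{\lambda}{1-\lambda}\leq 1$, i.e.\ to $\lambda\leq\tfrac12$. This is exactly the hypothesis $\lambda\in(0,\tfrac12]$, and it is precisely the reason the admissible range of $\lambda$ is narrowed here compared with the Lemma: the clean factor $(1-\alpha)^{\lambda}$ is only available once $\lambda\leq\tfrac12$.

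For the second inequality I would simply observe that $0<1-\alpha<1$ when $\alpha\in(0,1)$, and $\lambda>0$, so $(1-\alpha)^{\lambda}\leq 1$; multiplying through by the nonnegative quantity $\alpha\, e^{\lambda\log(\frac1\alpha)} H_{1-\lambda}(\rho\|\sigma)$ yields the second bound. There is no real obstacle in this proof: the only point that demands attention is that the step $\frac{\lambda}{1-\lambda}\leq 1$ uses the full strength of $\lambda\leq\tfrac12$, so the restricted range is genuinely needed and cannot be relaxed while retaining the stated form of the constant.
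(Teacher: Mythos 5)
Your argument is correct and is essentially the paper's own proof: both reduce the corollary to the elementary bound $\left(\frac{\lambda}{1-\lambda}\right)^{\lambda}\leq 1$ for $\lambda\in(0,\tfrac12]$ applied to the rewritten form of the preceding Lemma, followed by $(1-\alpha)^{\lambda}\leq 1$. No further comment is needed.
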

\begin{proof}
    This follows simply from
    \begin{align}
        \left(\frac{\lambda}{1-\lambda}\right)^{\lambda} \leq \left(\frac{1/2}{1-1/2}\right)^{\lambda} = 1
    \end{align}
\end{proof}
Hence we are improving on the previous result by at least a factor $32$. 
Finally, we can apply this to the special case considered in the second half of~\cite[Lemma 7.4]{pensia2024sample}. 
\begin{corollary}\label{Cor:L74}
    For any $\alpha\in(0,\frac12]$ and $\lambda=\frac{\log2}{2\log\frac1\alpha}$, we have
    \begin{align}
        \JS_\alpha(\rho\|\sigma) \leq \alpha \sqrt{2} H_{1-\lambda}(\rho\|\sigma)
    \end{align}
\end{corollary}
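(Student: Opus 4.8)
The plan is to simply specialize the preceding Corollary to the prescribed value of $\lambda$. First I would verify that this choice lies in the admissible range $(0,\tfrac12]$ required by that Corollary: since $\alpha\in(0,\tfrac12]$ we have $\log(1/\alpha)\geq\log 2>0$, hence $\lambda=\frac{\log2}{2\log(1/\alpha)}\leq\frac{\log 2}{2\log 2}=\frac12$, and obviously $\lambda>0$. This is precisely the point at which the hypothesis $\alpha\leq\tfrac12$ is used, and it is the only range-checking one needs to do. Consequently the bound $\JS_\alpha(\rho\|\sigma)\leq\alpha\, e^{\lambda\log(1/\alpha)}H_{1-\lambda}(\rho\|\sigma)$ from the preceding Corollary is applicable.

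The key observation is that $\lambda$ has been engineered so that the exponent $\lambda\log(1/\alpha)$ collapses to a constant independent of $\alpha$. Indeed, $\lambda\log(1/\alpha)=\frac{\log2}{2\log(1/\alpha)}\cdot\log(1/\alpha)=\frac12\log 2$, so that $e^{\lambda\log(1/\alpha)}=e^{\frac12\log 2}=\sqrt{2}$. Substituting this single evaluation into the inequality above yields $\JS_\alpha(\rho\|\sigma)\leq\alpha\sqrt{2}\,H_{1-\lambda}(\rho\|\sigma)$, which is exactly the claimed statement.

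In short, there is essentially no obstacle: all of the analytic work has already been front-loaded into the earlier Lemma (the pointwise comparison $f_\alpha''(x)\leq c\lambda\, g_\lambda''(x)$ via the minimization of $h$) and its first Corollary, so the present statement reduces to the elementary cancellation $\lambda\log(1/\alpha)=\tfrac12\log 2$ together with the elementary check that the corresponding $\lambda$ is at most $\tfrac12$. I would therefore present the proof as a two-line specialization rather than repeating any of the divergence-comparison machinery.
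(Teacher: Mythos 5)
Your proposal is correct and follows exactly the paper's argument: specialize the preceding corollary and observe that $e^{\lambda\log(1/\alpha)}=e^{\frac12\log 2}=\sqrt{2}$. The only addition is your explicit check that $\lambda\in(0,\tfrac12]$, which the paper leaves implicit but which is a worthwhile verification.
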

\begin{proof}
    Note, 
    \begin{align}
        e^{\lambda\log(\frac1\alpha)} = e^{\frac{\log2}{2\log\frac1\alpha}\log(\frac1\alpha)}=\sqrt{2}. 
    \end{align}
\end{proof}
This implies for $\lambda=\frac{\log2}{2\log\frac1\alpha}$, 
\begin{align}
    \ceil*{\frac{2}{\lambda H_{1-\lambda}(\rho\|\sigma)}} \leq \ceil*{\frac{4\sqrt{2}\,\alpha\log\frac1\alpha}{\log2 \, \JS_\alpha(\rho\|\sigma)}},  
\end{align}
which can be compared to the previous result~\cite{pensia2024sample}, 
\begin{align}
    \ceil*{\frac{2}{\lambda H_{1-\lambda}(p\|q)}} \leq \ceil*{\frac{256\,\alpha\log\frac1\alpha}{\log2 \, \JS_\alpha(p\|q)}},  
\end{align}
and hence we got an improvement of $32\sqrt{2}$.

\section{Locally differentially private hypothesis testing}

This paper aims at finding tight bounds on the sample complexity of optimal hypothesis tests subject to local privacy guarantees \cite{evfimievski2003limiting,kasiviswanathan2011can}. Loosely speaking, a random mechanism (e.g. an algorithm or communication protocol) is said to be locally differentially private (LDP) if its output does not vary significantly with arbitrary perturbation of the input. LDP was initially introduced in the classical setting for the scenario where a database is compiled from numerous clients, each insisting on individual privacy assurances. In this scenario, each client employs an algorithm $\mathcal{A}$ to obfuscate their input to the database. The primary objective is not to create similarities between neighboring states, but rather to conceal the broader information being transmitted.

In the present article, we focus on the task of LDP hypothesis testing. There, identical copies of a state $\omega\in\{\rho,\sigma\}$ are given and we need to discriminate between the hypotheses $\omega=\rho$ and $\omega=\sigma$. However, as opposed to the nonprivate setting, we are restricted in the set of positive operator-valued measures (POVM) which we can use in order to complete the task:

\begin{definition}[$\epsilon$-locally differentially private channel \cite{hirche2023quantumDP}] Given $\epsilon\ge 0$, a quantum channel $\mathcal{A}:\mathcal{L}_A\to \mathcal{L}_B$ is called $\epsilon$-locally differentially private if for all states $\rho,\sigma\in\mathcal{S}_A$,
\begin{align*}
E_{e^\epsilon}(\mathcal{A}(\rho)\|\mathcal{A}(\sigma))=0\,. 
\end{align*}
We denote by $\LDP_\epsilon(A,B)$ the set of $\epsilon$-locally differentially private quantum channels from $A$ to $B$.
\end{definition}
Then, given two states $\rho,\sigma\in\mathcal{S}_A$ and $\epsilon\ge  0$, their optimal sample complexity for $\epsilon$-LDP hypothesis testing is defined as
\begin{align}
  \SC_{\sym,\epsilon}(\rho,\sigma,p,\delta) := \inf_{\cA\in \LDP_\epsilon(A,B)}\,\SC_{\sym}(\mathcal{A}(\rho),\mathcal{A}(\sigma))\,. 
\end{align}
Here, we often will not specify the output system to the locally differentially private algorithms $\mathcal{A}$ as it won't play an important role in our derivations. 
Again we will throw away the dependence of $p$ and $\delta$ if they are some fixed constants.
From \eqref{asymptquant2}, we directly get that 
\begin{align}
    \SC_{\sym,\epsilon}(\rho,\sigma) = \Theta\left(\frac{-1}{\sup_{\cA\in \LDP_\epsilon} \log F(\cA(\rho)\|\cA(\sigma))}\right).
\end{align}
The above expression is somewhat unsatisfactory due to the presence of an optimization over all LDP mechanisms. Ideally, we would like to derive tight upper and lower bounds for the $\SC_{\sym,\epsilon}$ which do not depend on such optimization.

In the classical setting, fundamental limits of statistical problems under LDP have been successfully characterized using information-theoretic concepts. Arguably one of the most fundamental notions in (quantum) information theory revolves around data processing. Under the influence of a quantum channel, numerous relevant quantities exhibit monotonic behavior. This characteristic allows us to attribute operational significance to these quantities concerning distinguishability, consequently facilitating their utility in assessing physical properties. For instance, the data processing inequality states that, for any two states $\rho,\sigma\in\mathcal{S}_A$ and any quantum channel $\mathcal{N}:\mathcal{L}_A\to\mathcal{L}_B$, $D(\mathcal{N}(\rho)\|\mathcal{N}(\sigma))\le D(\rho\|\sigma)$. Intuitively, and in view of the operational interpretation of the relative entropy as a measure of distinguishability between $\rho$ and $\sigma$, it becomes clear that applying a quantum channel to the state never simplifies the discrimination task, thus leading to a reduction in the relative entropy. The above contraction is so fundamental to information theory that it is often taken as a requirement for any metric on quantum states to be 
called an information measure. Data processing can further be quantified through the use of so-called contraction coefficients, defined as
\begin{align}
    \eta(\mathcal{N}):=\sup_{\substack{\rho,\sigma\in\mathcal{S}_A \\\rho\ne \sigma}}\, \frac{D(\cN(\rho)\|\cN(\sigma))}{D(\rho\|\sigma)}. 
\end{align}
See also~\cite{hiai2016contraction, hirche2022contraction, hirche2023quantum} for additional properties and discussions.
The study of classical statistical problems under local privacy through the use of contraction coefficients was initiated in \cite{duchi2013local,duchi2018minimax}, where it was shown that $\SC_{\sym,\epsilon}(P,Q)=\Theta(\epsilon^{-2}\|P-Q\|_{\operatorname{TV}}^{-2})$, where $\|P-Q\|_{\operatorname{TV}}:=\frac{1}{2}\sum_{x\in\mathcal{X}}|P(x)-Q(x)|$ denotes the total variation between distributions $P$ and $Q$ over the alphabet $\mathcal{X}$. More recently, the following lower and upper bounds were derived in \cite[Lemma 2]{asoodeh2022contraction}:

\begin{align}\label{upperlowerSCe}
\max\left\{\left(\frac{e^\epsilon+1}{e^\epsilon-1}\right)^2\,\frac{\log(2.5)}{8 H_{\frac{1}{2}}(P,Q)},\frac{2}{25e^{-\epsilon}(e^\epsilon-1)^2\,\|P-Q\|_{\operatorname{TV}}^2}\right\} &\le \SC_{\sym,\epsilon}\left(P,Q,0.5,0.1\right) \\
&\le \left(\frac{e^\epsilon+1}{e^\epsilon-1}\right)^2\frac{2\log(5)}{\|P-Q\|^2_{\operatorname{TV}}}\,.
\end{align}
In the next sections, we aim at extending the above upper and lower bounds in to the framework of quantum states. All omitted proofs can be found in the appendix. 

\subsection{Symmetric hypothesis testing with uniform prior}
We begin with the symmetric setting, generalizing results from~\cite{asoodeh2022contraction} to the quantum setting. 
\subsubsection{Achieving LDP optimal sample complexity}

Our first main result is the following achievability bound for LDP hypothesis testing: 
\begin{theorem}[Achievability of LDP hypothesis testing] \label{Thm:Achievability}
For any two states $\rho,\sigma\in\mathcal{S}_A$, 
\begin{align}
\SC_{\sym,\epsilon}(\rho,\sigma) 
\leq
\left(\frac{e^\epsilon+1}{e^\epsilon-1}\right)^2\frac{2 \log 5}{ E_1(\rho\|\sigma)^2}\,.
\end{align}
\end{theorem}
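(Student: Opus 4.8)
The plan is to establish this upper bound by exhibiting a single explicit $\epsilon$-locally differentially private channel $\mathcal{A}$ and then controlling $\SC_{\sym}(\mathcal{A}(\rho),\mathcal{A}(\sigma))$ via the fidelity-based upper bound of Proposition~\ref{prop:SC_original}. Since $\SC_{\sym,\epsilon}(\rho,\sigma)$ is an infimum over $\LDP_\epsilon$, any admissible choice of $\mathcal{A}$ immediately yields an upper bound, so the whole task reduces to designing a mechanism that contracts the distinguishability of $\rho$ and $\sigma$ as little as the privacy budget permits.

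For the construction I would take $M_0=\{\rho\ge\sigma\}$ to be the Holevo--Helstrom projector and $M_1=I-M_0$, and let $\mathcal{A}$ be the two-outcome measurement $\{M_0,M_1\}$ followed by classical randomized response, which reports the measured bit faithfully with probability $\tfrac{e^\epsilon}{e^\epsilon+1}$ and flips it with probability $\tfrac{1}{e^\epsilon+1}$. The first step is to verify $\mathcal{A}\in\LDP_\epsilon$: for an arbitrary input $\omega$ the output is the binary distribution placing weight $\tfrac{1}{e^\epsilon+1}+\tfrac{e^\epsilon-1}{e^\epsilon+1}\tr(M_0\omega)$ on the first symbol, and as $\tr(M_0\omega)$ ranges over $[0,1]$ every output weight stays in $[\tfrac{1}{e^\epsilon+1},\tfrac{e^\epsilon}{e^\epsilon+1}]$, so the likelihood ratio between any two inputs never exceeds $e^\epsilon$. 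Since the outputs are classical, this is exactly the condition $E_{e^\epsilon}(\mathcal{A}(\omega)\|\mathcal{A}(\omega'))=0$ for all $\omega,\omega'$.

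The heart of the argument is the key computation: the output states $\mathcal{A}(\rho),\mathcal{A}(\sigma)$ are binary distributions whose first-symbol weights differ by $\tfrac{e^\epsilon-1}{e^\epsilon+1}\bigl(\tr(M_0\rho)-\tr(M_0\sigma)\bigr)$, and because $M_0$ is the Helstrom projector, $\tr(M_0\rho)-\tr(M_0\sigma)=\tr(\rho-\sigma)_+=E_1(\rho\|\sigma)$. Hence the randomized-response post-processing contracts the trace distance by exactly the factor $\tfrac{e^\epsilon-1}{e^\epsilon+1}$, giving $E_1(\mathcal{A}(\rho)\|\mathcal{A}(\sigma))=\tfrac{e^\epsilon-1}{e^\epsilon+1}E_1(\rho\|\sigma)$. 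To assemble the bound I would feed the output states into Proposition~\ref{prop:SC_original} with $p=\tfrac12$ and $\delta=0.1$ to obtain $\SC_{\sym}(\mathcal{A}(\rho),\mathcal{A}(\sigma))\le\tfrac{\log 5}{1-F(\mathcal{A}(\rho),\mathcal{A}(\sigma))}$, and then convert the denominator using the Fuchs--van-de-Graaf inequality of Lemma~\ref{Lem:FvdG} applied to the output states, namely $E_1(\mathcal{A}(\rho)\|\mathcal{A}(\sigma))^2\le 1-F^2\le 2(1-F)$, so that $1-F\ge\tfrac12\bigl(\tfrac{e^\epsilon-1}{e^\epsilon+1}\bigr)^2 E_1(\rho\|\sigma)^2$. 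Substituting yields $\SC_{\sym,\epsilon}(\rho,\sigma)\le\SC_{\sym}(\mathcal{A}(\rho),\mathcal{A}(\sigma))\le\bigl(\tfrac{e^\epsilon+1}{e^\epsilon-1}\bigr)^2\tfrac{2\log 5}{E_1(\rho\|\sigma)^2}$.

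I expect the main obstacle to be the $\LDP_\epsilon$ verification: one must check privacy for \emph{all} input states rather than only the pair $\rho,\sigma$ of interest, and confirm that measure-and-prepare channels produce commuting (classical) outputs for which the condition $E_{e^\epsilon}=0$ reduces to the familiar uniform likelihood-ratio bound. A secondary point is to ensure the hypotheses of Proposition~\ref{prop:SC_original} are met; here $\delta=0.1<\tfrac12=p$, and since the two output distributions are close the requirement $H_{\frac12}(\mathcal{A}(\rho)\|\mathcal{A}(\sigma))\le 1$ is automatically satisfied, so no extra work is needed on that front.
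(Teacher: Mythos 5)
Your proposal is correct and follows essentially the same route as the paper: the same Helstrom-measurement-plus-randomized-response channel with retention probability $\tfrac{e^\epsilon}{e^\epsilon+1}$, the same exact contraction $E_1(\mathcal{A}(\rho)\|\mathcal{A}(\sigma))=\tfrac{e^\epsilon-1}{e^\epsilon+1}E_1(\rho\|\sigma)$, and the same assembly via Lemma~\ref{Lem:FvdG} and the fidelity upper bound of Proposition~\ref{prop:SC_original}. Your explicit verification of the $\LDP_\epsilon$ condition is a welcome addition the paper leaves implicit; note only that the hypothesis $H_{\frac12}\le 1$ is irrelevant here since it is used solely for the lower bound in Proposition~\ref{prop:SC_original}.
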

\begin{proof}
We resort to Lemma~\ref{Lem:FvdG} and observe that,
\begin{align}
    E_1(\rho\|\sigma)^2 &\leq 1 - F(\rho,\sigma)^2 
    \leq 2 (1-F(\rho,\sigma)).
    \label{E1H12}
\end{align}
It remains to show that there exists an LDP algorithm $\cA:\mathcal{L}_A\to\mathcal{L}_B$ achieving the scaling for optimal sample complexity. Inspired by the use of classical binary algorithm in~\cite{asoodeh2022contraction}, we introduce the following channel for $\kappa\in[0,1]$:
\begin{align*}
    \cB(\cdot) = &|0\rangle\langle 0| \left( \kappa \tr(\{\rho\geq\sigma\} \,\cdot\,) + (1-\kappa) \tr(\{\rho<\sigma\} \,\cdot\, ) \right) \\
   + &|1\rangle\langle 1| \left( (1-\kappa) \tr(\{\rho\geq\sigma\} \,\cdot\,) + \kappa \tr(\{\rho<\sigma\} \,\cdot\, ) \right)\,.
\end{align*}

We can easily verify that 
\begin{align*}
    E_1(\cB(\rho)\|\cB(\sigma)) &= \frac12 \tr |\cB(\rho)-\cB(\sigma)| \\
    &= \frac12 | \kappa \tr(\{\rho\geq\sigma\} (\rho-\sigma)) + (1-\kappa) \tr(\{\rho<\sigma\} (\rho-\sigma) )| \\
    &\qquad + \frac12| (1-\kappa) \tr(\{\rho\geq\sigma\} (\rho-\sigma)) + \kappa \tr(\{\rho<\sigma\} (\rho-\sigma) ) | \\
    &= | (2\kappa-1) E_1(\rho\|\sigma) | .
\end{align*}
We now choose $\kappa:=\frac{e^\epsilon}{1+e^\epsilon}$. Therefore,
\begin{align*}
    E_1(\cB(\rho)\|\cB(\sigma)) 
    &= \left| \Big(\frac{2e^\epsilon}{1+e^\epsilon}-1\Big) E_1(\rho\|\sigma) \right| \\
    &=  \frac{e^\epsilon-1}{e^\epsilon+1} E_1(\rho\|\sigma)\,. 
\end{align*}
Combining with \eqref{E1H12}, we get 
\begin{align*}
    \sup_{\cA\in \LDP_\epsilon(A,B)} 1 - F (\cA(\rho)\|\cA(\sigma)) 
    &\geq \frac12\, \sup_{\cA\in \LDP_\epsilon(A,B)} E_1(\cA(\rho)\|\cA(\sigma))^2 \\
    &\geq \frac12\, \left(\frac{e^\epsilon-1}{e^\epsilon+1}\right)^2 E_1(\rho\|\sigma)^2.
\end{align*}
Combining with the upper bound in Proposition~\ref{prop:SC_original},
the result follows.
\end{proof}

\subsubsection{Optimality of LDP sample complexity}

Next, we aim at finding a lower bound for $\SC_\epsilon$. In fact, we derive two different ones.

\paragraph{Converse, Part I}

For our first lower bound, we make use of contraction coefficients for the trace distance and the relative entropy: given a quantum channel $\mathcal{N}:\mathcal{L}_A\to\mathcal{L}_B$, 
\begin{align*}
\eta_{\tr}(\mathcal{N})
&:=\sup_{\substack{\rho,\sigma\in\mathcal{S}_A\\\rho\ne \sigma}}\, \frac{\|\mathcal{N}(\rho-\sigma)\|_1}{\|\rho-\sigma\|_1} \\
&= \sup_{\Psi\perp\Phi} E_1(\cN(\Psi)\|\cN(\Phi))\,,
\end{align*}
where the second equality was shown in~\cite{ruskai1994beyond} with an optimization over orthogonal pure states. We start by proving a couple of Lemmas that generalize their classical analogues given in~\cite{asoodeh2022contraction}. The first one, whose proof we defer to Appendix \ref{appLem:H12-LDP-E1}, uses the tools from~\cite[Section 5.1]{hirche2023quantum}. 
\begin{lemma}\label{Lem:H12-LDP-E1}
    For any two states $\rho,\sigma\in\mathcal{S}_A$, 
    \begin{align}
        H_{\frac12}(\rho\|\sigma)\leq \left(\frac{(e^{\frac12D_{\infty}(\rho\|\sigma)}-1)^2}{e^{D_{\infty}(\rho\|\sigma)}-1} + \frac{(e^{\frac12D_{\infty}(\sigma\|\rho)}-1)^2}{e^{D_{\infty}(\sigma\|\rho)}-1}\right)E_1(\rho\|\sigma)
    \end{align}
\end{lemma}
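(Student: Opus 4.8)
The plan is to work directly from the integral representation \eqref{eq:H_alpha} specialized to $\alpha=\tfrac12$, which reads
\[
H_{\frac12}(\rho\|\sigma) = \frac12\int_1^\infty \gamma^{-3/2}\Big(E_\gamma(\rho\|\sigma) + E_\gamma(\sigma\|\rho)\Big)\,d\gamma,
\]
and to bound the two summands by the same estimate. Since $E_1(\rho\|\sigma)=\tfrac12\|\rho-\sigma\|_1=E_1(\sigma\|\rho)$, both contributions can be written in terms of the single quantity $E_1(\rho\|\sigma)$, which is what produces the two-term right-hand side of the claim.

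The estimate rests on two elementary properties of $\gamma\mapsto E_\gamma(\rho\|\sigma)$. First, writing $M:=e^{D_{\infty}(\rho\|\sigma)}$, the definition of the max-relative entropy gives $\rho\le M\sigma$, hence $(\rho-\gamma\sigma)_+=0$ and $E_\gamma(\rho\|\sigma)=0$ for all $\gamma\ge M$; this truncates the first integral to $[1,M]$. Second, from the Hermitian identity $E_\gamma(\rho\|\sigma)=\tfrac12\big(\|\rho-\gamma\sigma\|_1+1-\gamma\big)$ the map $\gamma\mapsto E_\gamma(\rho\|\sigma)$ is convex, being the norm of an affine function of $\gamma$ plus an affine function. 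Combined with the endpoint values $E_M(\rho\|\sigma)=0$ and $E_1(\rho\|\sigma)$, convexity yields the chord bound
\[
E_\gamma(\rho\|\sigma)\le E_1(\rho\|\sigma)\,\frac{M-\gamma}{M-1},\qquad \gamma\in[1,M].
\]

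Substituting this bound and evaluating the elementary integral $\int_1^M \gamma^{-3/2}(M-\gamma)\,d\gamma = 2(\sqrt M-1)^2$ collapses the first summand to $E_1(\rho\|\sigma)\,\frac{(\sqrt M-1)^2}{M-1}=E_1(\rho\|\sigma)\,\frac{(e^{\frac12 D_{\infty}(\rho\|\sigma)}-1)^2}{e^{D_{\infty}(\rho\|\sigma)}-1}$. I would then repeat the identical argument for $E_\gamma(\sigma\|\rho)$ with $M':=e^{D_{\infty}(\sigma\|\rho)}$, using $E_1(\sigma\|\rho)=E_1(\rho\|\sigma)$, to obtain the second summand; summing the two gives the stated inequality. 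The degenerate case $M=1$ (i.e.\ $\rho\le\sigma$) and the assumed invertibility of $\sigma$ are dealt with by continuity of both sides.

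The crux — and the step where a naive approach fails — is the convexity refinement. Merely using $E_\gamma(\rho\|\sigma)\le E_1(\rho\|\sigma)$ on $[1,M]$ and integrating gives the coefficient $1-M^{-1/2}=\frac{\sqrt M-1}{\sqrt M}$, which is strictly larger than the target $\frac{(\sqrt M-1)^2}{M-1}=\frac{\sqrt M-1}{\sqrt M+1}$. Exploiting the linear decay of the chord toward the zero of $E_\gamma$ at $\gamma=M$ is exactly what sharpens the constant to the claimed value; everything else is a routine integral evaluation.
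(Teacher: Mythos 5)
Your proposal is correct and follows essentially the same route as the paper's proof: specialize the integral representation to $\alpha=\tfrac12$, truncate each integral at $\mathrm{e}^{D_\infty}$, apply the convexity chord bound $E_\gamma\le E_1\frac{M-\gamma}{M-1}$, and evaluate $\int_1^M\gamma^{-3/2}(M-\gamma)\,d\gamma=2(\sqrt M-1)^2$. The only cosmetic difference is that you invoke the symmetry $E_1(\sigma\|\rho)=E_1(\rho\|\sigma)$ explicitly, which the paper leaves implicit.
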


This essentially follows the proof of~\cite[Proposition 5.2]{hirche2023quantum} but gives, by always choosing the tightest known bound, a slightly better result than~\cite[Corollary 5.5]{hirche2023quantum}. Next, we give a bound on the maximum output trace distance of LDP channels. See Appendix \ref{appLem:E1-LDP-UB} for a proof: 
\begin{lemma}\label{Lem:E1-LDP-UB}
    We have,
    \begin{align}
        \sup_{\cA\in\LDP_\epsilon}\sup_{\rho,\sigma\in\mathcal{S}_A} E_1(\cA(\rho)\|\cA(\sigma)) \leq \frac{e^{-\epsilon}(e^{\epsilon}-1)^2}{e^\epsilon-e^{-\epsilon}}. 
    \end{align}
\end{lemma}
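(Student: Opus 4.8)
The plan is to first translate the local differential privacy constraint into a pair of operator inequalities, and then to reduce the quantity $E_1(\cA(\rho)\|\cA(\sigma))$ to an elementary one-variable optimization. Fix an arbitrary $\cA\in\LDP_\epsilon$ and states $\rho,\sigma\in\mathcal{S}_A$, and abbreviate $\mu:=\cA(\rho)$, $\nu:=\cA(\sigma)$. Since the defining condition $E_{e^\epsilon}(\cA(\rho)\|\cA(\sigma))=\tr(\cA(\rho)-e^\epsilon\cA(\sigma))_+=0$ is required to hold for \emph{every} ordered pair of input states, I would use it in both directions to obtain $\mu\leq e^\epsilon\nu$ and $\nu\leq e^\epsilon\mu$, equivalently $\mu-\nu\leq(e^\epsilon-1)\nu$ and $\nu-\mu\leq(e^\epsilon-1)\mu$.

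Next, let $P:=\{\mu\geq\nu\}$ be the projector onto the non-negative part of $\mu-\nu$ and set $Q:=I-P$. Writing $t:=E_1(\mu\|\nu)=\tr(P(\mu-\nu))$ and using that $\mu$ and $\nu$ have equal trace, I get the dual expression $t=\tr(Q(\nu-\mu))$. Testing the first operator inequality against $P\geq0$ and the second against $Q\geq0$ yields the two scalar bounds $t\leq(e^\epsilon-1)\tr(P\nu)$ and $t\leq(e^\epsilon-1)\tr(Q\mu)$. Introducing the single free parameter $a:=\tr(P\mu)\in[0,1]$ and eliminating the remaining partial traces via $\tr(P\nu)=a-t$ and $\tr(Q\mu)=1-a$, these collapse to $t\leq(1-e^{-\epsilon})a$ and $t\leq(e^\epsilon-1)(1-a)$.

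Finally, the right-hand side of the Lemma is exactly the maximum over $a\in[0,1]$ of the minimum of these two affine functions of $a$: the first is increasing in $a$, the second decreasing, so the extremal value is attained at their crossing point $a^\star=(e^\epsilon-1)/(e^\epsilon-e^{-\epsilon})$, giving $t\leq(1-e^{-\epsilon})a^\star=e^{-\epsilon}(e^\epsilon-1)^2/(e^\epsilon-e^{-\epsilon})$, precisely the claimed bound. As a sanity check, this expression simplifies to $(e^\epsilon-1)/(e^\epsilon+1)$, which matches the value produced by the achievability channel $\cB$ in the proof of Theorem~\ref{Thm:Achievability} when $E_1(\rho\|\sigma)=1$, confirming that the bound is tight.

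I anticipate the only real subtlety to lie in the second step: one must invoke the LDP condition in both orderings and exploit the equal-trace identity $\tr(P(\mu-\nu))=\tr(Q(\nu-\mu))$ in order to express all four partial traces through the single parameter $a$ and $t$; once this bookkeeping is done correctly, the remaining optimization is completely elementary. A secondary point to verify is that $a^\star$ indeed lies in $[0,1]$, which holds because $e^\epsilon-e^{-\epsilon}>e^\epsilon-1$.
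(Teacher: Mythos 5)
Your proof is correct, but it takes a genuinely different route from the paper's. The paper first observes that the output pair $(\cA(\rho),\cA(\sigma))$ satisfies $E_{e^\epsilon}=0$ in both orderings, then applies the two-outcome Helstrom measurement $\cM$ (which preserves $E_1$ exactly and, by data processing, preserves the constraints $E_{e^\epsilon}=0$), thereby reducing the problem to binary classical distributions, for which it imports the closed-form answer from \cite[Equation (43)]{asoodeh2022contraction}. You instead stay entirely at the operator level: you convert the two LDP conditions into the operator inequalities $\mu\le e^\epsilon\nu$ and $\nu\le e^\epsilon\mu$, pair them against the Helstrom projector $P=\{\mu\ge\nu\}$ and its complement, and solve the resulting one-parameter linear program in $a=\tr(P\mu)$. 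Each step checks out: $E_{e^\epsilon}(\mu\|\nu)=0$ is indeed equivalent to $\mu\le e^\epsilon\nu$; the equal-trace identity $\tr(P(\mu-\nu))=\tr(Q(\nu-\mu))$ is correct; the two scalar constraints $t\le(1-e^{-\epsilon})a$ and $t\le(e^\epsilon-1)(1-a)$ follow, and their max-min over $a\in[0,1]$ is attained at the crossing point and equals $e^{-\epsilon}(e^\epsilon-1)^2/(e^\epsilon-e^{-\epsilon})=(e^\epsilon-1)/(e^\epsilon+1)$. What your approach buys is self-containedness (no appeal to the external classical computation) and it makes the extremal structure explicit, essentially re-deriving the classical binary optimum directly in operator language; what the paper's approach buys is brevity and a cleaner conceptual reduction to the already-solved classical problem.
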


We are now ready to prove the main result of this section (see Appendix \ref{appProp:eta-tr-LDP-UB} for details). 
\begin{proposition}\label{Prop:eta-tr-LDP-UB}
    For any quantum channel $\cA\in\LDP_\epsilon$, 
    \begin{align}
        \eta_{\tr}(\cA)  \leq \frac{(e^\epsilon-1)}{(e^\epsilon+1)}. 
    \end{align}
\end{proposition}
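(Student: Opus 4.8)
The plan is to combine the Ruskai variational formula for $\eta_{\tr}$ with the maximum-output-trace-distance bound of Lemma~\ref{Lem:E1-LDP-UB}, thereby reducing the entire statement to an algebraic simplification. First I would invoke the identity recorded in the preamble to this section, namely
\[
\eta_{\tr}(\cA) = \sup_{\Psi\perp\Phi} E_1(\cA(\Psi)\|\cA(\Phi)),
\]
which expresses the trace-distance contraction coefficient as a supremum of the Hockey-Stick divergence $E_1$ over pairs of orthogonal pure inputs.

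Second, since the supremum over orthogonal pure pairs is taken over a subset of all state pairs, I would relax it and bound
\[
\eta_{\tr}(\cA) = \sup_{\Psi\perp\Phi} E_1(\cA(\Psi)\|\cA(\Phi)) \leq \sup_{\rho,\sigma\in\mathcal{S}_A} E_1(\cA(\rho)\|\cA(\sigma)).
\]
Applying Lemma~\ref{Lem:E1-LDP-UB} to the right-hand side for $\cA\in\LDP_\epsilon$ then gives $\eta_{\tr}(\cA)\leq \frac{e^{-\epsilon}(e^\epsilon-1)^2}{e^\epsilon - e^{-\epsilon}}$.

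Third, I would simplify this expression to the claimed form. Factoring the denominator as $e^\epsilon - e^{-\epsilon} = e^{-\epsilon}(e^{2\epsilon}-1) = e^{-\epsilon}(e^\epsilon-1)(e^\epsilon+1)$, the common factor $e^{-\epsilon}(e^\epsilon-1)$ cancels against the numerator, leaving exactly $\frac{e^\epsilon-1}{e^\epsilon+1}$, which is the desired bound.

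There is no genuine obstacle at this level: all the analytic content has already been packaged into Lemma~\ref{Lem:E1-LDP-UB} (which in turn rests on Lemma~\ref{Lem:H12-LDP-E1} and the defining $\LDP_\epsilon$ constraint $E_{e^\epsilon}(\cA(\rho)\|\cA(\sigma))=0$). The only points worth checking are that the relaxation from orthogonal pure states to arbitrary states preserves the inequality direction — it can only enlarge the supremum, so the bound is safe — and that the final cancellation is exact rather than merely asymptotic in $\epsilon$. If a self-contained argument were wanted, the real work would instead lie in reproving Lemma~\ref{Lem:E1-LDP-UB}, which is where the local differential privacy hypothesis is actually exploited.
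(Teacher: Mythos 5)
Your proposal is correct, and it is genuinely shorter than the route the paper takes. The paper's own proof also starts from the Ruskai formula $\eta_{\tr}(\cA)=\sup_{\Psi\perp\Phi}E_1(\cA(\Psi)\|\cA(\Phi))$, but then detours through the Fuchs--van de Graaf inequality, the bound $1-F\le\frac12 H_{\frac12}$, and Lemma~\ref{Lem:H12-LDP-E1}, obtaining $\eta_{\tr}(\cA)\le\sqrt{H_{\frac12}(1-\tfrac14 H_{\frac12})}$ with $H_{\frac12}\le 2\frac{(e^{\epsilon/2}-1)^2(1-e^{-\epsilon})}{e^\epsilon-e^{-\epsilon}}$, and only then simplifies to $\frac{e^\epsilon-1}{e^\epsilon+1}$; this mirrors the classical argument of Asoodeh et al. You instead relax the supremum over orthogonal pure pairs to all state pairs, apply Lemma~\ref{Lem:E1-LDP-UB} directly, and observe the exact identity $\frac{e^{-\epsilon}(e^\epsilon-1)^2}{e^\epsilon-e^{-\epsilon}}=\frac{(e^\epsilon-1)^2}{(e^\epsilon-1)(e^\epsilon+1)}=\frac{e^\epsilon-1}{e^\epsilon+1}$, which is correct. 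Remarkably, the two routes land on exactly the same constant (the intermediate inequalities in the paper's version are all saturated by the extremal binary pair), so your argument loses nothing here. What the paper's longer route \emph{would} buy is the template for the stronger statement one really wants --- a bound of the form $\eta_f(\cA)\le\Upsilon_\epsilon=\left(\frac{e^\epsilon-1}{e^\epsilon+1}\right)^2$ for operator convex $f$, as in the classical case --- but since the paper only extracts the square root of that from Equation~\eqref{Eq:eta-tr-H12-UB}, your direct derivation is an equally valid and cleaner proof of the proposition as stated.
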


Alternatively we could have used the previously known bound~\cite{hirche2023quantumDP} $\eta_{\tr}(\cA) \leq 1-e^{-\epsilon}$, which would simplify the proof a lot, but unfortunately only gives the weaker bound $\frac{e^\epsilon-1}{e^\epsilon} \geq \frac{e^\epsilon-1}{e^\epsilon+1} \equiv \sqrt{\Upsilon_\epsilon}$. 

Now the proposition implies directly that 
\begin{align}
    \eta_{f}(\cA)  \leq \eta_{\tr}(\cA)  \leq \sqrt{\Upsilon_\epsilon}. \label{Eq:eta-f-upsilon-UB} 
\end{align}
which then implies 
\begin{align}
    \sup_{\cA\in \LDP_\epsilon} H_{\frac12}(\cA(\rho)\|\cA(\sigma)) \leq \sqrt{\Upsilon_\epsilon} H_{\frac12}(\rho\|\sigma). 
\end{align}
Note that, in the classical setting,~\cite{asoodeh2022contraction} proves the stronger 
\begin{align}
    \eta_{f}(\cA)  \leq \Upsilon_\epsilon, 
\end{align}
for operator convex $f$. 
There they start with (modulo notation)
\begin{align}
    \eta_{\operatorname{KL}}(\cA) \leq \sup_{x,x'} H_{\frac12}(\cA(\cdot|x)\|\cA(\cdot|x')) - \frac14 H_{\frac12}(\cA(\cdot|x)\|\cA(\cdot|x'))^2,
\end{align}
where $\eta_{\operatorname{KL}}(\cA)$ stands for the contraction coefficient for the Kullback–Leibler divergence. Proving a quantum version of this bound remains an interesting open problem.  Applying the above results to the problem of sample complexity, we get the following result.

\begin{theorem}[Converse I of LDP hypothesis testing] For any two states $\rho,\sigma\in\mathcal{S}_A$, 
\begin{align}
\sup_{\cA\in \LDP_\epsilon} H_{\frac12}(\cA(\rho)\|\cA(\sigma))\leq \sqrt{\Upsilon_\epsilon} H_{\frac12}(\rho\|\sigma)\,.
\end{align}
and hence, 
\begin{align}
    \SC_{\sym,\epsilon}(\rho,\sigma) \geq \frac{(e^\epsilon+1)\log 2.5}{2(e^\epsilon-1)H_{\frac12}(\rho\|\sigma)}. 
\end{align}
\end{theorem}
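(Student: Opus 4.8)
The first inequality is, up to packaging, already assembled in the contraction chain appearing just above the statement, so my plan is to quote it and then extract the sample-complexity consequence. First I would recall Proposition~\ref{Prop:eta-tr-LDP-UB}, which gives $\eta_{\tr}(\cA)\leq\frac{e^\epsilon-1}{e^\epsilon+1}=\sqrt{\Upsilon_\epsilon}$ for every $\cA\in\LDP_\epsilon$. Since the Hellinger divergence $H_{\frac12}$ is an $f$-divergence with $f(x)=2(1-\sqrt{x})$ operator convex (from operator concavity of $x\mapsto\sqrt{x}$), the domination $\eta_f(\cA)\le\eta_{\tr}(\cA)$ recorded in \eqref{Eq:eta-f-upsilon-UB} applies, so $\eta_{H_{\frac12}}(\cA)\le\sqrt{\Upsilon_\epsilon}$. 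Unfolding the definition of the contraction coefficient and taking the supremum over $\cA\in\LDP_\epsilon$ then delivers the first displayed inequality $\sup_{\cA}H_{\frac12}(\cA(\rho)\|\cA(\sigma))\le\sqrt{\Upsilon_\epsilon}\,H_{\frac12}(\rho\|\sigma)$.

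For the sample-complexity bound I would start from the definition $\SC_{\sym,\epsilon}(\rho,\sigma)=\inf_{\cA\in\LDP_\epsilon}\SC_{\sym}(\cA(\rho),\cA(\sigma))$ and bound each term from below using the $p=\frac12$, $\delta=0.1$ instance of Proposition~\ref{prop:SC_original} (note $\tfrac{1}{4\delta}=2.5$). The cleanest route is a case split over a fixed $\cA$, which also sidesteps the admissibility restriction $H_{\frac12}(\cA(\rho)\|\cA(\sigma))\le 1$ carried by that proposition. If $H_{\frac12}(\cA(\rho)\|\cA(\sigma))\le 1$, the proposition gives $\SC_{\sym}(\cA(\rho),\cA(\sigma))\ge \frac{\frac12\log 2.5}{H_{\frac12}(\cA(\rho)\|\cA(\sigma))}\ge \frac{\frac12\log 2.5}{\sqrt{\Upsilon_\epsilon}\,H_{\frac12}(\rho\|\sigma)}$, where the second step uses the contraction bound from the first part. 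If instead $H_{\frac12}(\cA(\rho)\|\cA(\sigma))> 1$, then by that same contraction bound $\sqrt{\Upsilon_\epsilon}\,H_{\frac12}(\rho\|\sigma)>1$, so the target right-hand side is below $\frac12\log 2.5<1\le \SC_{\sym}(\cA(\rho),\cA(\sigma))$ and the inequality holds trivially. As the resulting lower bound is independent of $\cA$, taking the infimum over $\cA$ yields $\SC_{\sym,\epsilon}(\rho,\sigma)\ge \frac{\frac12\log 2.5}{\sqrt{\Upsilon_\epsilon}\,H_{\frac12}(\rho\|\sigma)}=\frac{(e^\epsilon+1)\log 2.5}{2(e^\epsilon-1)H_{\frac12}(\rho\|\sigma)}$, which is the claim.

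Most of this is bookkeeping, since the substantive inputs (Proposition~\ref{Prop:eta-tr-LDP-UB} and the operator-convex domination of contraction coefficients) are already established. The only genuine care points are verifying that $f(x)=2(1-\sqrt{x})$ is operator convex, so that \eqref{Eq:eta-f-upsilon-UB} is legitimately invoked for $H_{\frac12}$, and handling the $H_{\frac12}\le 1$ hypothesis of Proposition~\ref{prop:SC_original} through the case split above rather than silently assuming it. I expect no real obstacle beyond keeping these two points explicit; in particular the infimum-to-supremum passage $\inf_{\cA}\tfrac{c}{H_{\frac12}(\cA(\rho)\|\cA(\sigma))}=\tfrac{c}{\sup_{\cA}H_{\frac12}(\cA(\rho)\|\cA(\sigma))}$ for the positive constant $c=\tfrac12\log 2.5$ is exactly what converts the contraction estimate into the stated converse.
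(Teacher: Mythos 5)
Your proposal is correct and follows essentially the same route as the paper: Proposition~\ref{Prop:eta-tr-LDP-UB} combined with $\eta_f(\cA)\le\eta_{\tr}(\cA)$ from \eqref{Eq:eta-f-upsilon-UB} gives the contraction of $H_{\frac12}$, and the $p=\tfrac12$, $\delta=0.1$ lower bound of Proposition~\ref{prop:SC_original} then yields the sample-complexity statement. Your explicit case split to handle the $H_{\frac12}\le 1$ hypothesis of Proposition~\ref{prop:SC_original} is a small extra care point the paper leaves implicit, but it does not change the argument.
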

\begin{proof}
    This follows directly from Equation~\eqref{Eq:eta-f-upsilon-UB}. 
\end{proof}

\paragraph{Converse, Part II}

This proof is based on the $\chi^2$ divergence for which we have the following equivalent definitions~\cite{hirche2023quantum}, 
\begin{align*}
    &\chi^2(\rho\|\sigma)\equiv H_2(\rho\|\sigma) \\
    &= 2 \int_1^\infty ( E_\gamma(\rho\|\sigma) + \gamma^{-3} E_\gamma(\sigma\|\rho)) \d\gamma \\ 
    &= \int_0^\infty \tr[ (\rho-\sigma)(\sigma +s\Id)^{-1}(\rho-\sigma)(\sigma +s\Id)^{-1}] \,\d s .
\end{align*} 
The next Lemma is proved in Appendix \ref{appLem:chi2-UB-TV2} and is the core technical ingredient of this section. 
\begin{lemma}\label{Lem:chi2-UB-TV2}
For an arbitrary quantum channel $\cN$ and two input states $\rho,\sigma\in\mathcal{S}_A$, 
\begin{align*}
    \chi^2(\cN(\rho)\|\cN(\sigma)) \leq 2 E_1(\rho\|\sigma)^2 \max_{\Psi,\Phi} \chi^2(\cN(\Psi)\|\cN(\Phi)). 
\end{align*}
\end{lemma}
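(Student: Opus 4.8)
The plan is to reduce the statement to a Hilbert–Schmidt (seminorm) estimate through the resolvent representation of $\chi^2$, extracting the factor $E_1(\rho\|\sigma)^2$ from a Hahn–Jordan decomposition of $\rho-\sigma$. Write $E:=E_1(\rho\|\sigma)=\tr(\rho-\sigma)_+$ and set $\Psi:=(\rho-\sigma)_+/E$ and $\Phi:=(\rho-\sigma)_-/E$, two orthogonal states with $\rho-\sigma=E(\Psi-\Phi)$; by linearity $\cN(\rho)-\cN(\sigma)=E\big(\cN(\Psi)-\cN(\Phi)\big)$. I would also record the overlap decomposition $\rho=\omega+E\Psi$, $\sigma=\omega+E\Phi$ with $\omega:=\tfrac12(\rho+\sigma-|\rho-\sigma|)\ge 0$ and $\tr\omega=1-E$, so that $\cN(\sigma)=(1-E)\,\cN(\omega')+E\,\cN(\Phi)$ with $\omega':=\omega/(1-E)$ a state. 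This last identity is what controls the reference state. A useful by-product, proved via $\langle\psi|(\rho+\sigma-|\rho-\sigma|)|\psi\rangle\ge 0$, is $\sigma\ge E\Phi$, hence $\cN(\sigma)\ge E\,\cN(\Phi)$.

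Next I would invoke the second integral representation, writing $\chi^2(\cN(\rho)\|\cN(\sigma))=Q_{\cN(\sigma)}\big(\cN(\rho)-\cN(\sigma)\big)$ with the quadratic form $Q_M(X):=\int_0^\infty \tr\!\big[X(M+s\Id)^{-1}X(M+s\Id)^{-1}\big]\,\d s$. The structural point is that, by cyclicity of the trace,
\begin{align*}
Q_M(X)=\int_0^\infty \big\|(M+s\Id)^{-1/2}X(M+s\Id)^{-1/2}\big\|_2^2\,\d s\ge 0,
\end{align*}
so $Q_M$ is positive semidefinite and $X\mapsto\sqrt{Q_M(X)}$ is a genuine seminorm. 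Extracting the prefactor gives $\chi^2(\cN(\rho)\|\cN(\sigma))=E^2\,Q_{\cN(\sigma)}\big(\cN(\Psi)-\cN(\Phi)\big)$, which already isolates the desired $E_1(\rho\|\sigma)^2$.

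The core step is then a triangle inequality for this seminorm, fixing the reference at $\cN(\sigma)$ and inserting it as the midpoint:
\begin{align*}
\sqrt{Q_{\cN(\sigma)}\big(\cN(\Psi)-\cN(\Phi)\big)}\le \sqrt{Q_{\cN(\sigma)}\big(\cN(\Psi)-\cN(\sigma)\big)}+\sqrt{Q_{\cN(\sigma)}\big(\cN(\sigma)-\cN(\Phi)\big)}=\sqrt{\chi^2(\cN(\Psi)\|\cN(\sigma))}+\sqrt{\chi^2(\cN(\Phi)\|\cN(\sigma))}.
\end{align*}
Each term on the right is of the form $\chi^2(\cN(\cdot)\|\cN(\cdot))$ and hence bounded by $\max_{\Psi',\Phi'}\chi^2(\cN(\Psi')\|\cN(\Phi'))$ (taking the second state to be $\sigma$), so after squaring one obtains $\chi^2(\cN(\rho)\|\cN(\sigma))\le 4E^2\max_{\Psi,\Phi}\chi^2(\cN(\Psi)\|\cN(\Phi))$.

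I expect the main obstacle to be sharpening the constant from this crude $4$ to the claimed $2$: the symmetric two-leg triangle step together with $(a+b)^2\le 2(a^2+b^2)$ loses a factor of two. Here I would exploit the overlap decomposition, noting $\cN(\sigma)-\cN(\Phi)=(1-E)\big(\cN(\omega')-\cN(\Phi)\big)$ carries a $(1-E)^2$ saving on the second leg, and re-run the estimate so that the reference $\cN(\sigma)$ is compared against a single extremal state rather than through two legs; getting exactly the stated constant is the delicate accounting. The remaining, genuinely non-commutative points are routine: the seminorm identity above, and the (anti-)monotonicity $M_1\ge M_2\Rightarrow Q_{M_1}\le Q_{M_2}$ — which follows by combining operator monotonicity of $t\mapsto-(t+s)^{-1}$ with the fact that $N\mapsto\tr[XNXN]=\|N^{1/2}XN^{1/2}\|_2^2$ is monotone in $N\ge 0$ — together with well-definedness of the resolvents, handled by a regularization $M+\eta\Id$ and letting $\eta\to 0$.
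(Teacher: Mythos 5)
Your route is genuinely different from the paper's: you keep the reference $\cN(\sigma)$ fixed in the resolvent quadratic form $Q_{\cN(\sigma)}$, observe that $\sqrt{Q_M}$ is a seminorm, and apply the triangle inequality with $\cN(\sigma)$ as midpoint; the paper instead first replaces the reference $\cN(\sigma)$ by $\cN(|i\rangle\langle i|)$ for the worst eigenvector $|i\rangle$ of $\sigma$ (via operator convexity of $x^{-1}$ and of the quadratic form), then introduces a replacer channel $\cR(\cdot)=\cN(|i\rangle\langle i|)\tr(\cdot)$ so that $\cN(\rho-\sigma)=E_1(\rho\|\sigma)(\cN-\cR)(\hat X_+-\hat X_-)$, splits the square into the two Jordan components, and convex-decomposes $\hat X_\pm$ into pure states. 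Your Jordan decomposition and the extraction of the prefactor $E_1(\rho\|\sigma)^2$ coincide with the paper's; everything up to and including the triangle inequality is sound (modulo one small omission: the maximum in the lemma is over \emph{pure} states, so bounding $\chi^2(\cN(\Psi)\|\cN(\sigma))$ by that maximum requires invoking joint convexity of $\chi^2$ to decompose the mixed states $\Psi$ and $\sigma$ — routine, but it should be said).

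The genuine gap is the constant. Your argument delivers
\begin{align*}
\chi^2(\cN(\rho)\|\cN(\sigma))\le E_1(\rho\|\sigma)^2\left(\sqrt{\chi^2(\cN(\Psi)\|\cN(\sigma))}+\sqrt{\chi^2(\cN(\Phi)\|\cN(\sigma))}\right)^2\le 4\,E_1(\rho\|\sigma)^2\max_{\Psi',\Phi'}\chi^2(\cN(\Psi')\|\cN(\Phi')),
\end{align*}
and you concede that reaching the stated factor $2$ is unresolved. The refinement you sketch does not close it: writing $\cN(\sigma)-\cN(\Phi)=(1-E)\big(\cN(\omega')-\cN(\Phi)\big)$ gives $(1-E)^2Q_{\cN(\sigma)}\big(\cN(\omega')-\cN(\Phi)\big)$, which is \emph{not} a $\chi^2$ divergence because the resolvent reference is $\cN(\sigma)$ rather than one of the two compared states. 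Converting it into one via your anti-monotonicity $Q_{M_1}\le Q_{M_2}$ for $M_1\ge M_2$ costs you back what you saved: $\cN(\sigma)\ge(1-E)\cN(\omega')$ only yields $Q_{\cN(\sigma)}\le(1-E)^{-1}Q_{\cN(\omega')}$ (since $Q_{cM}=c^{-1}Q_M$), leaving a net factor $(1-E)$ on the second leg that tends to $1$ as $E\to0$ — precisely the regime of interest — so the bound degenerates back to $4$. The factor $2$ in the paper comes from a different mechanism: after the reference has been pushed onto a single pure-state output, the quadratic form of the difference $\hat X_+-\hat X_-$ is bounded by the \emph{sum} (not the squared sum of square roots) of the two one-sided contributions, each of which is then separately dominated by the maximum. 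As written, your proposal proves the lemma only with constant $4$, which would propagate a corresponding loss into Theorem~\ref{Thm:ConverseII} and the subsequent private sample-complexity lower bounds.
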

Note that this improves also the known classical result by a factor $2$. 
Now, similar to the classical case, we have 
\begin{align}
    \max_{\Psi,\Phi} \chi^2(\cN(\Psi)\|\cN(\Phi)) \leq \max_{\substack{\tau,\theta \\
    E_{e^\epsilon}(\tau\|\theta)=0 \\ E_{e^\epsilon}(\theta\|\tau)=0}} \chi^2(\tau\|\theta),
\end{align}
motivating the need for the following observation. 
\begin{lemma}
    We have
    \begin{align*}
        \max_{\substack{\tau,\theta \\
    E_{e^\epsilon}(\tau\|\theta)=0 \\ E_{e^\epsilon}(\theta\|\tau)=0}} E_1(\tau\|\theta) = e^{-\epsilon}\frac{(e^\epsilon-1)^2}{e^\epsilon-e^{-\epsilon}}, 
    \end{align*}
    and hence, 
    \begin{align*}
        \max_{\substack{\tau,\theta \\
    E_{e^\epsilon}(\tau\|\theta)=0 \\ E_{e^\epsilon}(\theta\|\tau)=0}} D_f(\tau\|\theta) \leq \frac{f(e^\epsilon)+e^\epsilon f(e^{-\epsilon})}{e^\epsilon-1}e^{-\epsilon}\frac{(e^\epsilon-1)^2}{e^\epsilon-e^{-\epsilon}}\,.
    \end{align*} 
\end{lemma}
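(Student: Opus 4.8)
The plan is to translate both constraints into operator inequalities: since $\tr(X)_+=0$ iff $X\le 0$, the hypotheses $E_{e^\epsilon}(\tau\|\theta)=E_{e^\epsilon}(\theta\|\tau)=0$ are exactly $\tau\le e^\epsilon\theta$ and $\theta\le e^\epsilon\tau$, i.e.\ $e^{-\epsilon}\theta\le\tau\le e^\epsilon\theta$. Everything then follows from these two sandwich relations together with the variational formula $\tr(X)_+=\max_{0\le Q\le I}\tr(QX)$, which is what lets the argument survive non-commutativity.

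For the first identity I would first prove the upper bound. Put $P=\{\tau\ge\theta\}$, so $E_1(\tau\|\theta)=\tr[P(\tau-\theta)]$, and let $x=\tr[P\theta]$. From $\tau-\theta\le(e^\epsilon-1)\theta$ one gets $E_1(\tau\|\theta)\le(e^\epsilon-1)x$; using $\tr(\tau-\theta)=0$ to rewrite $E_1(\tau\|\theta)=\tr[(I-P)(\theta-\tau)]$ and then $\theta-\tau\le(1-e^{-\epsilon})\theta$ gives $E_1(\tau\|\theta)\le(1-e^{-\epsilon})(1-x)$. Since both hold, $E_1(\tau\|\theta)\le\max_x\min\{(e^\epsilon-1)x,(1-e^{-\epsilon})(1-x)\}$, whose crossing point $x=\tfrac{1-e^{-\epsilon}}{e^\epsilon-e^{-\epsilon}}$ yields exactly $e^{-\epsilon}\tfrac{(e^\epsilon-1)^2}{e^\epsilon-e^{-\epsilon}}$. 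For achievability I would exhibit a commuting (classical) pair supported on two outcomes with likelihood ratios $e^{\pm\epsilon}$, masses $\theta=(a,1-a)$ and $\tau=(e^\epsilon a,e^{-\epsilon}(1-a))$ with $a=\tfrac{1-e^{-\epsilon}}{e^\epsilon-e^{-\epsilon}}$; this lies in the feasible set and saturates both bounds, giving equality. (This also pins down the equality case of Lemma~\ref{Lem:E1-LDP-UB}.)

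For the $D_f$ bound I would use the integral representation~\eqref{eq:D_f}. The constraints force $E_\gamma(\tau\|\theta)=E_\gamma(\theta\|\tau)=0$ for $\gamma\ge e^\epsilon$, so the integral runs only over $[1,e^\epsilon]$. The crucial step is the pointwise estimate, valid for $\gamma\in[1,e^\epsilon]$, $E_\gamma(\tau\|\theta)\le\tfrac{e^\epsilon-\gamma}{e^\epsilon-1}E_1(\tau\|\theta)$, which I would derive from the convex decomposition $\tau-\gamma\theta=\tfrac{e^\epsilon-\gamma}{e^\epsilon-1}(\tau-\theta)+\tfrac{\gamma-1}{e^\epsilon-1}(\tau-e^\epsilon\theta)$ by evaluating $\tr[Q(\tau-\gamma\theta)]$ at $Q=\{\tau\ge\gamma\theta\}$ and discarding the second term, which is $\le 0$ because $\tau-e^\epsilon\theta\le 0$. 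The symmetric bound holds for $E_\gamma(\theta\|\tau)$, and $E_1(\theta\|\tau)=E_1(\tau\|\theta)\le M$ with $M$ the value from the first part. Substituting into~\eqref{eq:D_f} (for convex $f$ with $f(1)=0$, so $f''\ge 0$) leaves the scalar integral $\int_1^{e^\epsilon}[f''(\gamma)+\gamma^{-3}f''(\gamma^{-1})](e^\epsilon-\gamma)\,d\gamma$, which I would evaluate by integration by parts in each term (with the substitution $u=1/\gamma$ in the second) to obtain $f(e^\epsilon)+e^\epsilon f(e^{-\epsilon})$; dividing by $e^\epsilon-1$ and multiplying by $M$ reproduces the stated bound.

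The care points rather than genuine obstacles are twofold. First, all the positive-part manipulations must be done through the variational characterization so they hold for non-commuting $\tau,\theta$; the convex decomposition above is precisely what makes the $E_\gamma$-to-$E_1$ comparison operator-valid. Second, the double integration by parts has to be bookkept carefully: the $\gamma=1$ boundary contributions $\pm(e^\epsilon-1)f'(1)$ cancel, and the residual $f(1)$ terms vanish by the normalization $f(1)=0$, leaving exactly $f(e^\epsilon)+e^\epsilon f(e^{-\epsilon})$ with no stray constants.
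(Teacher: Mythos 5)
Your proof is correct, but it takes a more self-contained route than the paper. For the first identity, the paper reduces the quantum optimization to the classical binary one: it applies the Helstrom measurement $\{\Pi_+,I-\Pi_+\}$, which preserves $E_1$ exactly and can only shrink $E_\gamma$ (data processing), so the constrained maximum over quantum pairs equals the constrained maximum over binary distributions, which is then quoted from the classical reference. You instead prove the upper bound directly from the operator sandwich $e^{-\epsilon}\theta\le\tau\le e^\epsilon\theta$ via the two estimates $E_1\le(e^\epsilon-1)x$ and $E_1\le(1-e^{-\epsilon})(1-x)$, and you exhibit the two-point achiever explicitly; this is essentially an unrolled version of the classical computation the paper cites, made manifestly valid for non-commuting states. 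For the second inequality, the paper simply invokes the reverse Pinsker inequality of the $f$-divergence framework, whereas you re-derive it from the integral representation via the convex decomposition $\tau-\gamma\theta=\tfrac{e^\epsilon-\gamma}{e^\epsilon-1}(\tau-\theta)+\tfrac{\gamma-1}{e^\epsilon-1}(\tau-e^\epsilon\theta)$ and the integration by parts yielding $f(e^\epsilon)+e^\epsilon f(e^{-\epsilon})$ -- which is precisely the content of the cited reverse Pinsker proof. Your version buys transparency and independence from external references (and makes the equality case of the trace-distance bound explicit); the paper's version buys brevity by leaning on the measurement-reduction trick and on results already established elsewhere. Both arguments are sound, and your bookkeeping of the boundary terms ($\pm(e^\epsilon-1)f'(1)$ cancelling, $f(1)=0$ killing the rest) checks out.
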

\begin{proof}
    We start with the first statement for the trace distance. Define the measurement $\{\Pi_+,\Pi_-=\Id-\Pi_+\}$ and the probabilities $\{p=\tr\Pi_+\tau,1-p\}$ and $\{q=\tr\Pi_+\theta,1-q\}$. It is known that this measurement achieves the trace distance and hence $E_1(\tau\|\theta)=E_1(p\|q)$, where the right hand side is the classical binary total variation distance. Furthermore, we have $E_\gamma(p\|q)\leq E_\gamma(\tau\|\theta)$. Putting everything together we get,
    \begin{align}
        \max_{\substack{\tau,\theta \\
    E_{e^\epsilon}(\tau\|\theta)=0 \\ E_{e^\epsilon}(\theta\|\tau)=0}} E_1(\tau\|\theta) = \max_{\substack{0\leq p,q\leq 1 \\
    E_{e^\epsilon}(p\|q)=0 \\ E_{e^\epsilon}(q\|p)=0}} E_1(p\|q). 
    \end{align}
    It has therefore taken the same expression as in the classical case for which the solution was derived in~\cite[Equation (45)]{asoodeh2022contraction}. 
    The second statement for $f$-divergences follows from the reverse Pinsker inequality proven in~\cite[Proposition 5.2]{hirche2023quantum}. 
\end{proof}
As a special case for $f(x)=x^2-1$, we have 
\begin{align}
    \max_{\substack{\tau,\theta \\
    E_{e^\epsilon}(\tau\|\theta)=0 \\ E_{e^\epsilon}(\theta\|\tau)=0}} \chi^2(\tau\|\theta) \leq e^{-\epsilon}(e^\epsilon-1)^2, \label{Eq:chi2-LDP-UB}
\end{align}
which matches exactly the classical case. 
In summary, we have shown
\begin{align*}
   \max_{\cN\in\LDP_\epsilon} \chi^2(\cN(\rho)\|\cN(\sigma)) \leq 2e^{-\epsilon}(e^\epsilon-1)^2 E_1(\rho\|\sigma)^2 . 
    \end{align*}

\begin{theorem}[Converse II of LDP hypothesis testing]\label{Thm:ConverseII} For any two states $\rho,\sigma\in\mathcal{S}_A$, 
\begin{align*}
\SC_{\sym,\epsilon}\left(\rho,\sigma,0.5,\delta\right) &\geq \frac{(1-2\delta)^2}{e^{-\epsilon}(e^\epsilon-1)^2\,E_1(\rho\|\sigma)^2 } \\
\SC_{\sym,\epsilon}(\rho,\sigma,0.5,0.1) &\geq \frac{16}{25} \frac1{e^{-\epsilon}(e^\epsilon-1)^2\,E_1(\rho\|\sigma)^2 }
\end{align*}
\end{theorem}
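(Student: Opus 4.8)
The plan is to combine the $\chi^2$-based contraction bound established immediately before the theorem with a standard $\chi^2$-divergence lower bound on sample complexity, much in the spirit of the Converse II derivation in the classical reference~\cite{asoodeh2022contraction}. First I would recall the summary inequality derived just above the theorem, namely that for any $\cN\in\LDP_\epsilon$,
\begin{align*}
    \chi^2(\cN(\rho)\|\cN(\sigma)) \leq 2e^{-\epsilon}(e^\epsilon-1)^2\, E_1(\rho\|\sigma)^2,
\end{align*}
and observe that since $\chi^2$ is additive under tensor products (being $H_2$, an $f$-divergence of the multiplicative Petz type, with $\tr((\rho^{\otimes n})^2(\sigma^{\otimes n})^{-1}) = \tr(\rho^2\sigma^{-1})^n$), the $n$-copy output $\chi^2$ grows at most linearly: applying the single-letter LDP bound to $\cN^{\otimes n}$ restricted to product inputs gives control of $\chi^2(\cN(\rho)^{\otimes n}\|\cN(\sigma)^{\otimes n})$ in terms of $n$ and the single-copy quantity.

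The core of the argument is a converse that says: if the error probability $p_e(\rho^{\otimes n},\sigma^{\otimes n},0.5)\leq\delta$, then $\chi^2$ of the $n$-fold states must be at least some explicit function of $\delta$. Concretely, for equal priors the optimal error is $\tfrac12(1-E_1(\rho^{\otimes n}\|\sigma^{\otimes n}))$, so $p_e\leq\delta$ forces $E_1(\rho^{\otimes n}\|\sigma^{\otimes n})\geq 1-2\delta$. I would then use the elementary bound $E_1(\mu\|\nu)^2\leq \chi^2(\mu\|\nu)$ (total variation squared is at most the $\chi^2$ divergence, valid classically and extending to the quantum $H_2$ via Lemma~\ref{Lem:chi2-UB-TV2} type reasoning, or directly from the integral representation and $E_1\le$ the relevant integral) to conclude $\chi^2(\cN(\rho)^{\otimes n}\|\cN(\sigma)^{\otimes n})\geq (1-2\delta)^2$. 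Chaining this with the additive upper bound $\chi^2(\cN(\rho)^{\otimes n}\|\cN(\sigma)^{\otimes n})\leq n\cdot 2e^{-\epsilon}(e^\epsilon-1)^2 E_1(\rho\|\sigma)^2$ yields
\begin{align*}
    (1-2\delta)^2 \leq n\cdot 2e^{-\epsilon}(e^\epsilon-1)^2 E_1(\rho\|\sigma)^2,
\end{align*}
which rearranges to $n\geq \frac{(1-2\delta)^2}{2e^{-\epsilon}(e^\epsilon-1)^2 E_1(\rho\|\sigma)^2}$. Since this holds for every $\cA\in\LDP_\epsilon$ and every valid $n$, it lower bounds $\SC_{\sym,\epsilon}$. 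The second displayed inequality is then just the specialization $\delta=0.1$, giving $(1-2\delta)^2=(0.8)^2=16/25$, together with a factor that should reconcile with the stated constant.

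The step I expect to be the main obstacle is pinning down the exact constant, in particular whether the factor of $2$ from Lemma~\ref{Lem:chi2-UB-TV2} survives into the final bound or is absorbed. The theorem as stated has denominator $e^{-\epsilon}(e^\epsilon-1)^2 E_1^2$ with no factor $2$, whereas my chaining produces a $2$ in the denominator; I would therefore need to check whether the intended $\chi^2$ lower bound is in fact $E_1^2\le \tfrac12\chi^2$ or uses a sharper relation (e.g. the quantum analogue of $\|\mu-\nu\|_{\mathrm{TV}}^2\le\tfrac14\chi^2$), or whether the single-letter contraction bound is applied without the extra $2$ in this instance. Resolving this bookkeeping, and confirming that additivity of $H_2$ applies cleanly to the post-channel product states $\cA(\rho)^{\otimes n}$ rather than to $\cA^{\otimes n}(\rho^{\otimes n})$, is where I would concentrate the care; the remaining manipulations are routine.
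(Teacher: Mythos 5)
Your overall strategy (chain an LDP upper bound on $\chi^2$ against a lower bound forced by a small error probability) is the right one, but the pivotal step in your chain is false: the $\chi^2$ divergence is \emph{not} subadditive under tensor products. Already classically, $1+\chi^2(P^{\otimes n}\|Q^{\otimes n})=(1+\chi^2(P\|Q))^n$, so $\chi^2(P^{\otimes n}\|Q^{\otimes n})\geq n\,\chi^2(P\|Q)$ — it grows \emph{at least} linearly, not at most linearly. Your inequality $\chi^2(\cN(\rho)^{\otimes n}\|\cN(\sigma)^{\otimes n})\leq n\,\chi^2(\cN(\rho)\|\cN(\sigma))$ therefore points the wrong way, and the argument as written does not close.

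The paper avoids this by routing through the relative entropy, which \emph{is} additive. Concretely, it applies Pinsker's inequality in the form $2E_1(\mu\|\nu)^2\leq D(\mu\|\nu)$ to the $n$-copy output states, uses $D(\cA(\rho)^{\otimes n}\|\cA(\sigma)^{\otimes n})=n\,D(\cA(\rho)\|\cA(\sigma))$, and only then passes to $\chi^2$ via the single-copy bound $D\leq\chi^2$ (from the cited reference \cite{temme2010chi}), before invoking Lemma~\ref{Lem:chi2-UB-TV2} and Equation~\eqref{Eq:chi2-LDP-UB}. This yields
\begin{align*}
2(1-2\delta)^2\leq n\,\chi^2(\cA(\rho)\|\cA(\sigma))\leq 2n\,e^{-\epsilon}(e^\epsilon-1)^2E_1(\rho\|\sigma)^2,
\end{align*}
and the factor $2$ from Pinsker exactly cancels the factor $2$ from Lemma~\ref{Lem:chi2-UB-TV2} — which also resolves the constant-bookkeeping issue you flagged. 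To repair your proof, replace the step ``$E_1(\text{$n$-copy})^2\leq\chi^2(\text{$n$-copy})\leq n\,\chi^2(\text{single copy})$'' with ``$2E_1(\text{$n$-copy})^2\leq D(\text{$n$-copy})=n\,D(\text{single copy})\leq n\,\chi^2(\text{single copy})$''; the rest of your outline then goes through.
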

\begin{proof}
    Consider the error probability of symmetric hypothesis testing, 
    \begin{align*}
        p_{e}(\rho,\sigma,0.5) = \frac12 ( 1 - E_1(\rho\|\sigma)).
    \end{align*}
    After rewriting that, we can continue with
    \begin{align*}
        2 ( 1 - 2p_{e}(\rho,\sigma,0.5))^2 = 2 E_1(\rho\|\sigma)^2 \leq D(\rho\|\sigma) \leq \chi^2(\rho\|\sigma), 
    \end{align*}
    where the first inequality is Pinsker's inequality and the second is from~\cite{temme2010chi}. Applying this to $n$ copies and using additivity of the relative entropy we get, 
    \begin{align*}
        &2 ( 1 - 2p_{e}(\cA(\rho)^{\otimes n},\cA(\sigma)^{\otimes n},0.5))^2  \\
        &\qquad \leq n \chi^2(\cA(\rho)\|\cA(\sigma)) \\
        &\qquad \leq 2 n E_1(\rho\|\sigma)^2 \max_{\Psi,\Phi} \chi^2(\cA(\Psi)\|\cA(\Phi)) \\
        &\qquad \leq 2 n E_1(\rho\|\sigma)^2 e^{-\epsilon}(e^\epsilon-1)^2, 
    \end{align*}
    where the second inequality is Lemma~\ref{Lem:chi2-UB-TV2} and the third Equation~\eqref{Eq:chi2-LDP-UB}. Choosing the error probability as $0.1$, this is gives
    \begin{align*}
        n \geq \frac{16}{25} \frac1{e^{-\epsilon}(e^\epsilon-1)^2\,E_1(\rho\|\sigma)^2 },
    \end{align*}
    from which the claim follows. 
\end{proof} 
Note that this is a factor 8 better than the classical result in~\cite{asoodeh2022contraction}: A factor 2 because of the improvement in Lemma~\ref{Lem:chi2-UB-TV2} and a factor 4 because of a suboptimal use of Pinskers inequality in~\cite[Lemma 2]{asoodeh2022contraction}.

\subsection{Symmetric hypothesis testing with arbitrary prior}

Lower bound with differential privacy. 
\begin{theorem}\label{lowpboundSCfirst}
    We have, for any $p\in(0,1)$ and $\delta = \frac{p}{4}$,
    \begin{align}
        \SC_{\sym,\epsilon}(\rho,\sigma,p, \delta) &\geq \frac3{16}\frac{e^\epsilon+1}{e^\epsilon-1} \frac{p\log{p^{-1}}}{\JS_p(\rho\|\sigma)}\geq \frac{3\sqrt{2}}{16} \frac{e^\epsilon+1}{e^\epsilon-1} \frac{\log2}{\lambda  H_{1-\lambda}(\rho\|\sigma)},  \\
        \SC_{\sym,\epsilon}(\rho,\sigma,p,\delta) &\geq \frac9{16}\frac{1}{e^{-\epsilon}(e^\epsilon-1)E_1^2(\rho\|\sigma)}.  
    \end{align}
\end{theorem}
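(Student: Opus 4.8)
The plan is to reduce both inequalities to the corresponding non-private arbitrary-prior lower bounds applied to the \emph{output} states $\cA(\rho),\cA(\sigma)$, and then take the infimum over $\cA\in\LDP_\epsilon$. Since $\SC_{\sym,\epsilon}(\rho,\sigma,p,\delta)=\inf_{\cA\in\LDP_\epsilon}\SC_{\sym}(\cA(\rho),\cA(\sigma),p,\delta)$, any bound of the form $\SC_{\sym}(\cA(\rho),\cA(\sigma),p,\tfrac{p}{4})\ge N(p)/D(\cA(\rho)\|\cA(\sigma))$, with $N$ a channel-independent numerator and $D$ a divergence, immediately gives $\SC_{\sym,\epsilon}\ge N(p)/\sup_{\cA\in\LDP_\epsilon}D(\cA(\rho)\|\cA(\sigma))$. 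The whole task then reduces to estimating the LDP-maximal value of the relevant divergence of the outputs.

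For the first line I would start from Proposition~\ref{Prop:prior-lower}, which gives $\SC_{\sym}(\cA(\rho),\cA(\sigma),p,\tfrac p4)\ge\tfrac3{16}\,p\log p^{-1}/\JS_p(\cA(\rho)\|\cA(\sigma))$ (after reducing to $p\le\tfrac12$ by swapping $\rho\leftrightarrow\sigma$, which is where $\log p^{-1}$ and $\lambda$ are well behaved). The Jensen–Shannon divergence is one of the $f$-divergences covered by the integral framework, so by Proposition~\ref{Prop:eta-tr-LDP-UB} and Equation~\eqref{Eq:eta-f-upsilon-UB} its contraction coefficient obeys $\eta_{\JS}(\cA)\le\eta_{\tr}(\cA)\le\tfrac{e^\epsilon-1}{e^\epsilon+1}$ for every $\cA\in\LDP_\epsilon$, whence $\sup_{\cA}\JS_p(\cA(\rho)\|\cA(\sigma))\le\tfrac{e^\epsilon-1}{e^\epsilon+1}\JS_p(\rho\|\sigma)$. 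Substituting this supremum into the denominator produces the first inequality; the second then follows from Corollary~\ref{Cor:L74} with $\alpha=p$ (trading $\JS_p(\rho\|\sigma)$ for $p\sqrt2\,H_{1-\lambda}(\rho\|\sigma)$) together with the identity $\log p^{-1}=\tfrac{\log2}{2\lambda}$.

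For the second line I would mirror the proof of Theorem~\ref{Thm:ConverseII}, replacing the uniform-prior error functional by its prior-$p$ counterpart at threshold $\delta=\tfrac p4$. Concretely I combine a Pinsker/Fano-type lower bound on the pertinent information quantity with the chain $D\le\chi^2$ and the LDP estimate $\chi^2(\cA(\rho)\|\cA(\sigma))\le 2E_1(\rho\|\sigma)^2\,e^{-\epsilon}(e^\epsilon-1)^2$ coming from Lemma~\ref{Lem:chi2-UB-TV2} and Equation~\eqref{Eq:chi2-LDP-UB}; additivity of the relative entropy over the $n$ copies then converts this into a lower bound on $n$. The constant $\tfrac9{16}=(1-2\cdot\tfrac18)^2$ arises from $\delta=\tfrac p4$ in exactly the way $(1-2\delta)^2$ appears in Theorem~\ref{Thm:ConverseII}.

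I expect the second line to be the main obstacle. The direct $\chi^2$ route delivers a \emph{quadratic} factor $(e^\epsilon-1)^2$, whereas the stated bound carries only the single power $e^{-\epsilon}(e^\epsilon-1)=1-e^{-\epsilon}$; reconciling this forces one to route through the trace-distance contraction $\eta_{\tr}(\cA)\le 1-e^{-\epsilon}$ (or the sharp $\tfrac{e^\epsilon-1}{e^\epsilon+1}$ of Proposition~\ref{Prop:eta-tr-LDP-UB}) rather than through $\chi^2$ alone, and to check that the prior-dependent numerator generated by $\delta=\tfrac p4$ cancels against the divergence bound uniformly in $p\in(0,1)$. Pinning down the exact comparison inequality between the prior-$p$ error functional (equivalently $\JS_p$ of the outputs) and the LDP-maximal trace distance, so that the final constant is genuinely $\tfrac9{16}$ with linear $\epsilon$-dependence, is the delicate point; the two resulting bounds are complementary and are meant to be combined as a maximum, each dominating in a different range of $\epsilon$.
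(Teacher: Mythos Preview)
Your approach is essentially that of the paper. For the first line the match is exact: the paper applies Proposition~\ref{Prop:prior-lower} to $\cA(\rho),\cA(\sigma)$, bounds $\JS_p(\cA(\rho)\|\cA(\sigma))\le\eta_{\tr}(\cA)\,\JS_p(\rho\|\sigma)\le\tfrac{e^\epsilon-1}{e^\epsilon+1}\JS_p(\rho\|\sigma)$ via Proposition~\ref{Prop:eta-tr-LDP-UB} and Equation~\eqref{Eq:eta-f-upsilon-UB}, and then invokes Corollary~\ref{Cor:L74}, exactly as you outline.

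For the second line the paper also mirrors Theorem~\ref{Thm:ConverseII}. The step you leave vague is the prior-$p$ replacement for $(1-2p_e)^2$: from $p_e^*=p-pE_{(1-p)/p}(\cA(\rho)^{\otimes n}\|\cA(\sigma)^{\otimes n})$ one gets $(1-p_e^*/p)^2=E_{(1-p)/p}^2\le E_1^2$, after which Pinsker, $D\le\chi^2$, Lemma~\ref{Lem:chi2-UB-TV2} and Equation~\eqref{Eq:chi2-LDP-UB} apply verbatim. Two clarifications. First, the constant $\tfrac{9}{16}$ arises as $(1-\delta/p)^2=(3/4)^2$, which is $p$-independent; your computation $(1-2\cdot\tfrac18)^2$ is only the $p=\tfrac12$ instance (where $\delta/p=2\delta$) and would not give the right constant for general $p$. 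Second, your concern about the $\epsilon$-dependence is a correct observation but the wrong diagnosis: the $\chi^2$ route of Theorem~\ref{Thm:ConverseII} genuinely yields $e^{-\epsilon}(e^\epsilon-1)^2$ in the denominator, and the linear factor $e^{-\epsilon}(e^\epsilon-1)$ appearing in the theorem statement (and in the paper's own Equation~\eqref{eq:lowpboundSCfirst}) is a typo. No alternative route through $\eta_{\tr}$ is needed; the stated denominator should simply read $e^{-\epsilon}(e^\epsilon-1)^2$.
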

\begin{proof}
    The proofs follow similarly to the $p=\frac12$ case discussed before. The first is simply by the upper bound on the trace distance contraction coefficient given in Proposition~\ref{Prop:eta-tr-LDP-UB}. 
    The second, follows from 
    \begin{align}
        p_e^*(\rho^{\otimes n},\sigma^{\otimes n},p) & = p - p E_\frac{1-p}{p}(\rho^{\otimes n}\|\sigma^{\otimes n}),
    \end{align}
    which implies, 
    \begin{align}
        2\left(1-\frac1p p_e^*(\cA(\rho)^{\otimes n},\cA(\sigma)^{\otimes n},p)\right)^2 
        &= 2 E^2_\frac{1-p}{p}(\cA(\rho)^{\otimes n}\|\cA(\sigma)^{\otimes n}) \\
        &\leq 2 E^2_1(\cA(\rho)^{\otimes n}\|\cA(\sigma)^{\otimes n}). 
    \end{align}
    Following the proof of Theorem~\ref{Thm:ConverseII}, this leads to 
    \begin{align} \label{eq:lowpboundSCfirst}
        n \geq \frac{(1-\frac1p p_e^*(\cA(\rho)^{\otimes n},\cA(\sigma)^{\otimes n},p))^2}{e^{-\epsilon}(e^\epsilon-1)E^2_1(\rho\|\sigma)} \geq \frac{(1-\frac1p \delta)^2}{e^{-\epsilon}(e^\epsilon-1)E^2_1(\rho\|\sigma)}.
    \end{align}
    The result then follows by choosing $\delta \leq\frac{p}4$. 
\end{proof}

As an alternative to Proposition~\ref{Prop:prior-upper}, the following sample complexity upper bound was shown in~\cite[Theorem 7]{Wilde2024_review}, 
\begin{align}
    \SC_{\sym}(\rho,\sigma,p,\delta) \leq \inf_{s\in[0,1]} \frac{\log{\frac{p^s(1-p)^{1-s}}{\delta}}}{-\log \widebar Q_s(\rho\|\sigma)}. 
\end{align}
We can use this to give the following LDP bound.
\begin{theorem}\label{upboundSCeps}
    We have,
    \begin{align} \label{eq:upboundSCeps}
        \SC_{\sym,\epsilon}(\rho,\sigma,p,\delta) &\leq 
         \left(\frac{e^\epsilon+1}{e^\epsilon-1}\right)^2 \frac{2\log{\delta^{-1}}}{ E_1(\rho\|\sigma)^2}, \\
        \SC_{\sym,\epsilon}\left(\rho,\sigma,p,\frac{p}4\right) &\leq 
         \left(\frac{e^\epsilon+1}{e^\epsilon-1}\right)^2 \frac{2\log{p^{-1}}+2\log{4}}{ E_1(\rho\|\sigma)^2}. 
    \end{align}
\end{theorem}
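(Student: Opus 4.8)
The plan is to combine the Wilde--review style upper bound
\begin{align*}
    \SC_{\sym}(\rho,\sigma,p,\delta) \leq \inf_{s\in[0,1]} \frac{\log\frac{p^s(1-p)^{1-s}}{\delta}}{-\log \widebar Q_s(\rho\|\sigma)}
\end{align*}
with the LDP channel $\cB$ already constructed in the proof of Theorem~\ref{Thm:Achievability}. The key point is that, since $\SC_{\sym,\epsilon}$ is an infimum over LDP channels, it suffices to exhibit a single good channel and bound its sample complexity. I would take $\cB$ with $\kappa=\frac{e^\epsilon}{1+e^\epsilon}$, whose output states are classical binary distributions with trace distance $E_1(\cB(\rho)\|\cB(\sigma)) = \frac{e^\epsilon-1}{e^\epsilon+1}E_1(\rho\|\sigma)$.

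First I would reduce the problem to the classical binary case: because $\cB(\rho)$ and $\cB(\sigma)$ commute, all the R\'enyi quantities coincide and I can work with the two Bernoulli distributions directly. The cleanest route is to avoid optimizing the general $s$ and instead pick a convenient value. Choosing $s=\tfrac12$ turns the denominator $-\log\widebar Q_{1/2}$ into $-\log F$ of the binary distributions, so that the bound from Proposition~\ref{prop:SC_original} applies to $\cB(\rho),\cB(\sigma)$. Then I would use the chain $-\log F \geq 1-F \geq \tfrac12 E_1^2$ (the relevant half of Lemma~\ref{Lem:FvdG}, namely $E_1^2\le 1-F^2\le 2(1-F)$ as in Equation~\eqref{E1H12}) evaluated on the output states. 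This converts the fidelity in the denominator into $\tfrac12 E_1(\cB(\rho)\|\cB(\sigma))^2 = \tfrac12\left(\frac{e^\epsilon-1}{e^\epsilon+1}\right)^2 E_1(\rho\|\sigma)^2$, producing the factor $\left(\frac{e^\epsilon+1}{e^\epsilon-1}\right)^2$ in front and an $E_1(\rho\|\sigma)^{-2}$ dependence.

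For the numerator I would bound $\log\frac{p^s(1-p)^{1-s}}{\delta}$ crudely by $\log\delta^{-1}$, using $p^s(1-p)^{1-s}\le 1$, which immediately yields the first displayed inequality $\SC_{\sym,\epsilon}(\rho,\sigma,p,\delta)\le \left(\frac{e^\epsilon+1}{e^\epsilon-1}\right)^2\frac{2\log\delta^{-1}}{E_1(\rho\|\sigma)^2}$. The second inequality then follows by simply substituting $\delta=\frac{p}{4}$, so that $\log\delta^{-1}=\log p^{-1}+\log 4$.

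The main obstacle I anticipate is matching constants: the three bounds in Proposition~\ref{prop:SC_original} are stated for the prior value $p$ appearing there, whereas here the effective prior after the channel $\cB$ and the relationship between $\delta$ and the target error must be tracked carefully so that the ceiling/infimum is respected and the constant $2$ is genuinely achieved rather than merely $\Theta(1)$. I would double-check that applying the upper bound of Proposition~\ref{prop:SC_original} (or equivalently the $s=\tfrac12$ specialization of the Wilde bound) to the classical binary outputs of $\cB$ is valid for the prior $p$ in question, and that the passage $-\log F\ge \tfrac12 E_1^2$ does not lose a factor; everything else is routine algebraic simplification.
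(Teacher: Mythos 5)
Your proposal is correct and follows essentially the same route as the paper: the Wilde-style upper bound $\SC_{\sym}\leq\inf_s \log\bigl(p^s(1-p)^{1-s}/\delta\bigr)/(-\log\widebar Q_s)$, the binary LDP channel $\cB$ with $\kappa=\frac{e^\epsilon}{1+e^\epsilon}$ giving $E_1(\cB(\rho)\|\cB(\sigma))=\frac{e^\epsilon-1}{e^\epsilon+1}E_1(\rho\|\sigma)$, the chain $-\log F\geq 1-F\geq\tfrac12 E_1^2$, and the crude bound $p^s(1-p)^{1-s}\leq1$. The only cosmetic difference is that you fix $s=\tfrac12$ and invoke commutativity of the outputs of $\cB$ to identify $\widebar Q_{1/2}$ with the fidelity, whereas the paper keeps the optimizing $s^\star$ and passes through $\widebar Q_{\min}\leq\widebar Q_{1/2}\leq F$ via Audenaert's inequality; both yield identical constants.
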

\begin{proof}
    Let $s^\star=\argmin_{s\in[0,1]} \widebar Q_{1-s}(\rho\|\sigma)$ and $Q_{\min}(\rho\|\sigma)=\min_{s\in[0,1]} \widebar Q_s(\rho\|\sigma)$. then, 
    \begin{align}
        \SC_{\sym}(\rho,\sigma,p,\delta) \leq \inf_{s\in[0,1]} \frac{\log{\frac{p^{1-s}(1-p)^{s}}{\delta}}}{-\log \widebar Q_{1-s}(\rho\|\sigma)} \leq
         \frac{\log{\frac{p^{1-s^\star}(1-p)^{s^\star}}{\delta}}}{-\log \widebar Q_{\min}(\rho\|\sigma)}         \leq
         \frac{\log{\frac{p^{1-s^\star}(1-p)^{s^\star}}{\delta}}}{1- \widebar Q_{\min}(\rho\|\sigma)} 
    \end{align}
    Now, from~\cite{audenaert2012comparisons}, we have
    \begin{align}
        \widebar Q_{\min}(\rho\|\sigma) \leq \widebar Q_{\frac12}(\rho\|\sigma) \leq F(\rho,\sigma). 
    \end{align}
    Together with, 
    \begin{align}
        1-F(\rho,\sigma) \geq \frac12 (1-F(\rho,\sigma)^2)\geq \frac12 E_1(\rho\|\sigma)^2, 
    \end{align}
    this results in
    \begin{align}
        \SC_{\sym}(\rho,\sigma,p,\delta) \leq 
         \frac{2\log{\frac{p^{1-s^\star}(1-p)^{s^\star}}{\delta}}}{ E_1(\rho\|\sigma)^2}
    \end{align}
    Following the proof of Theorem~\ref{Thm:Achievability}, we have
    \begin{align}
    \sup_{\cA\in \LDP_{\epsilon}} E_1(\cA(\rho)\|\cA(\sigma)) 
    \geq  \frac{e^\epsilon-1}{e^\epsilon+1} E_1(\rho\|\sigma)\,, 
\end{align}
leading to
    \begin{align}
        \SC_\epsilon(\rho,\sigma,p,\delta) \leq 
         \left(\frac{e^\epsilon+1}{e^\epsilon-1}\right)^2 \frac{2\log{\frac{p^{1-s^\star}(1-p)^{s^\star}}{\delta}}}{ E_1(\rho\|\sigma)^2}. 
    \end{align}
The first result then follows from $p^{1-s^\star}(1-p)^{s^\star}\leq 1$. The second by specializing to $\delta=\frac{p}4$.
\end{proof}

Using instead the bound in Proposition~\ref{Prop:prior-upper}, alternative bounds on the sample complexity can be obtained by relating the fidelity to the Hellinger divergence.

\begin{lemma}
For any $\alpha\in[0,1]$ and any two states $\rho,\sigma$, setting $\beta=\min\{\alpha,1-\alpha\}$,
\begin{align*}
\widebar{H}_\alpha(\rho\|\sigma)\ge \frac{1-F(\rho,\sigma)^{2\beta}}{\beta}\ge  \frac{1-\big(1- E_1(\rho\|\sigma)^2\big)^{\beta}}{\beta}\,.
\end{align*}
\end{lemma}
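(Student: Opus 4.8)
The plan is to treat the two inequalities separately, with the first one carrying essentially all of the content. Recalling that the Petz Hellinger divergence is algebraically tied to $\widebar{Q}_\alpha(\rho\|\sigma):=\tr(\rho^\alpha\sigma^{1-\alpha})$ through $\widebar{H}_\alpha(\rho\|\sigma)=\frac{1-\widebar{Q}_\alpha(\rho\|\sigma)}{1-\alpha}$ for $\alpha\in(0,1)$, the first inequality is equivalent, in the operative regime $\alpha\ge\frac12$ (so that $\beta=1-\alpha$ and the denominator $1-\alpha$ coincides with $\beta$), to the single clean estimate
\begin{align*}
\widebar{Q}_\alpha(\rho\|\sigma)\le F(\rho,\sigma)^{2\beta}=F(\rho,\sigma)^{2(1-\alpha)}.
\end{align*}
First I would reduce the claim to this statement and then devote the main effort to it; the complementary range $\alpha\le\frac12$ would be handled using the $\rho\leftrightarrow\sigma$, $\alpha\leftrightarrow 1-\alpha$ symmetry $\widebar{Q}_\alpha(\rho\|\sigma)=\widebar{Q}_{1-\alpha}(\sigma\|\rho)$ together with $F(\rho,\sigma)=F(\sigma,\rho)$.

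For the key estimate I would exploit the log-convexity of $\alpha\mapsto\log\widebar{Q}_\alpha(\rho\|\sigma)$, a standard consequence of Hölder's inequality for traces. Evaluating at the two anchor points $\alpha=\frac12$ and $\alpha=1$ gives $\widebar{Q}_{1/2}(\rho\|\sigma)=\tr(\sqrt\rho\sqrt\sigma)\le\|\sqrt\rho\sqrt\sigma\|_1=F(\rho,\sigma)$ and $\widebar{Q}_1(\rho\|\sigma)=\tr(\rho\{\sigma>0\})\le 1$, hence $\log\widebar{Q}_{1/2}\le\log F$ and $\log\widebar{Q}_1\le 0$. Writing $\alpha=2(1-\alpha)\cdot\tfrac12+(2\alpha-1)\cdot 1$ for $\alpha\in[\tfrac12,1]$ and applying convexity along this chord yields $\log\widebar{Q}_\alpha\le 2(1-\alpha)\log F$, i.e. exactly $\widebar{Q}_\alpha(\rho\|\sigma)\le F(\rho,\sigma)^{2(1-\alpha)}$. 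An equally short alternative is to invoke the ordering $\widetilde{D}_\alpha\le\widebar{D}_\alpha$ (sandwiched below Petz), which for $\alpha<1$ reads $\widebar{Q}_\alpha\le\widetilde{Q}_\alpha$, and then the monotonicity of $\alpha\mapsto\widetilde{D}_\alpha$ together with $\widetilde{D}_{\frac12}=-2\log F$ to get $\widetilde{Q}_\alpha\le F^{2(1-\alpha)}$. Substituting $\widebar{Q}_\alpha\le F^{2\beta}$ into $\widebar{H}_\alpha=\frac{1-\widebar{Q}_\alpha}{1-\alpha}=\frac{1-\widebar{Q}_\alpha}{\beta}$ then delivers the first inequality.

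The second inequality is elementary and merely repackages Fuchs--van-de-Graaf. Since $\beta>0$, the claim $\frac{1-F^{2\beta}}{\beta}\ge\frac{1-(1-E_1^2)^\beta}{\beta}$ is equivalent to $(F^2)^\beta\le(1-E_1^2)^\beta$, and because $t\mapsto t^\beta$ is nondecreasing on $[0,\infty)$ this follows from $F(\rho,\sigma)^2\le 1-E_1(\rho\|\sigma)^2$, which is precisely the third inequality $E_1(\rho\|\sigma)\le\sqrt{1-F(\rho,\sigma)^2}$ of Lemma~\ref{Lem:FvdG}.

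I expect the main obstacle to be the first inequality, and specifically the bookkeeping around the exponent and the denominator: $\widebar{H}_\alpha$ carries the denominator $1-\alpha$, whereas the target carries $\beta=\min\{\alpha,1-\alpha\}$, and these agree only for $\alpha\ge\frac12$. Getting the clean factor requires pinning the exponent in $\widebar{Q}_\alpha\le F^{2(1-\alpha)}$ to match $2\beta$, which is why I would anchor the log-convexity argument precisely at $\alpha=\frac12$ (where $\widebar{Q}_{1/2}\le F$) and $\alpha=1$ (where $\widebar{Q}_1\le 1$). The quantum step needing the most care relative to the classical setting is the passage $\tr(\sqrt\rho\sqrt\sigma)\le\|\sqrt\rho\sqrt\sigma\|_1=F$ and the validity of log-convexity for noncommuting $\rho,\sigma$; the sandwiched-divergence route sidesteps the latter by relying on the already-standard ordering and monotonicity of the sandwiched \Renyi family.
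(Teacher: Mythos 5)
Your argument is essentially the paper's, and it is correct on the range where the lemma is actually provable. The log-convexity of $\alpha\mapsto\log\tr(\rho^\alpha\sigma^{1-\alpha})$ anchored at $\alpha=\tfrac12$ (where $\widebar Q_{1/2}=\tr\sqrt\rho\sqrt\sigma\le\|\sqrt\rho\sqrt\sigma\|_1=F$) and at $\alpha=1$ (where $\widebar Q_1\le 1$) is precisely the real-axis content of the paper's Hadamard three-lines argument for $f(z)=\tr(\rho^z\sigma^{1-z})$, and your convex combination $\alpha=2(1-\alpha)\cdot\tfrac12+(2\alpha-1)\cdot 1$ reproduces the exponent $F^{2(1-\alpha)}$ exactly; the Fuchs--van-de-Graaf step for the second inequality is identical. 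Two caveats. First, ``H\"older's inequality for traces'' only yields the log-convexity when $\rho$ and $\sigma$ commute; for noncommuting states the standard proof of that convexity \emph{is} the complex-interpolation/three-lines argument the paper writes out, so your primary route is not really independent of it --- but you flag this yourself, and your fallback via $\widebar Q_\alpha\le\widetilde Q_\alpha$ (Araki--Lieb--Thirring) together with the monotonicity of $\alpha\mapsto\widetilde D_\alpha$ and $\widetilde D_{1/2}=-2\log F$ is a genuinely different, valid derivation that outsources the interpolation to known orderings. Second, the ``symmetry'' reduction to $\alpha\ge\tfrac12$, which you share with the paper, does not actually go through: since $\widebar H_\alpha(\rho\|\sigma)=\frac{1-\widebar Q_\alpha(\rho\|\sigma)}{1-\alpha}$ while $\widebar H_{1-\alpha}(\sigma\|\rho)=\frac{1-\widebar Q_\alpha(\rho\|\sigma)}{\alpha}$, swapping states and exponents changes the normalization, and for orthogonal states with $\alpha<\tfrac12$ the claimed bound would read $\frac{1}{1-\alpha}\ge\frac{1}{\alpha}$, which is false. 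This is a defect of the lemma's statement (and of the paper's own proof) rather than something you introduced, and it is immaterial to the only application, which uses $\alpha=1-\lambda\ge\tfrac12$, but it is worth recording.
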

\begin{proof}
By symmetry, we can assume that $\alpha\in[1/2,1]$ without loss of generality. The proof follows by an interpolation argument: we consider the function $f:S\to \mathbb{C}$ on the strip $S=\{z\in\mathbb{C}|\operatorname{Re}(z)\in [0,\frac{1}{2}]\}$ defined as
\begin{align*}
f(z):=\tr(\rho^{z}\sigma^{1-z}).
\end{align*}
We have that $\left|f\left(1+it\right)\right|\le 1$ for all $t\in\mathbb{R}$ and 
\begin{align*}
\left|f\left(\frac{1}{2}+it\right)\right|=\Big|\tr\Big(\rho^{\frac{1}{2}+it}\sigma^{\frac{1}{2}-it}\Big)\Big|\le F(\rho,\sigma)\,.
\end{align*}
Therefore, by Hadamard three-lines theorem, we directly get that
\begin{align*}
f(\alpha)\le F(\rho,\sigma)^{2(1-\alpha)} \,.
\end{align*}
Therefore, 
\begin{align*}
\widebar{H}_\alpha(\rho\|\sigma)=\frac{1}{1-\alpha}\left(1-\tr\rho^\alpha\sigma^{1-\alpha}\right)\ge \frac{1-F(\rho,\sigma)^{2(1-\alpha)}}{1-\alpha}\ge  \frac{1-\big(1-E_1(\rho\|\sigma)^2\big)^{1-\alpha}}{1-\alpha}\,,
\end{align*}
where the last inequality follows from Fuchs-van-de-Graaf's inequality, see Lemma \ref{Lem:FvdG}.
\end{proof}

As a result, we can derive the following bound on the private sample complexity. 

\begin{corollary}
For $\lambda=\frac{\log2}{2\log{p^{-1}}}$ and $\beta=\min\{\lambda,1-\lambda\}$, we have
\begin{align*}
\SC_{\sym,\epsilon}(\rho,\sigma,p,\delta)\le \frac{\beta\log{\left(p^{1-\lambda}(1-p)^\lambda/\delta\right)}}{\lambda\left(1-\left(1-\left(\frac{e^\epsilon-1}{e^\epsilon+1}\right)^2E_1(\rho\|\sigma)^2\right)^{\beta}\right)}\,.
\end{align*}
\end{corollary}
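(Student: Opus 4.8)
The plan is to combine the achievability channel of Theorem~\ref{Thm:Achievability} with the single-copy upper bound of Proposition~\ref{Prop:prior-upper} and the fidelity--Hellinger lower bound of the immediately preceding Lemma. Since $\SC_{\sym,\epsilon}$ is defined as an infimum over all $\cA\in\LDP_\epsilon$, it suffices to exhibit one good channel and then bound the resulting non-private sample complexity of its outputs.

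First I would take the binary channel $\cB$ with $\kappa=\frac{e^\epsilon}{1+e^\epsilon}$ constructed in the proof of Theorem~\ref{Thm:Achievability}; this channel lies in $\LDP_\epsilon$ and satisfies $E_1(\cB(\rho)\|\cB(\sigma))=\frac{e^\epsilon-1}{e^\epsilon+1}E_1(\rho\|\sigma)$. By the definition of the private sample complexity,
\[
\SC_{\sym,\epsilon}(\rho,\sigma,p,\delta)\le \SC_{\sym}(\cB(\rho),\cB(\sigma),p,\delta).
\]
Next I apply Proposition~\ref{Prop:prior-upper} to the output states $\cB(\rho),\cB(\sigma)$, which gives the upper bound $\frac{\log(p^{1-\lambda}(1-p)^\lambda/\delta)}{\lambda\,\widebar H_{1-\lambda}(\cB(\rho)\|\cB(\sigma))}$.

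Because $\widebar H_{1-\lambda}$ sits in the denominator, I need a \emph{lower} bound on it, which is exactly what the preceding Lemma supplies. Setting $\alpha=1-\lambda$ there (so that $\beta=\min\{1-\lambda,\lambda\}=\min\{\lambda,1-\lambda\}$) and inserting the value of $E_1(\cB(\rho)\|\cB(\sigma))$ into the second inequality of that Lemma yields
\[
\widebar H_{1-\lambda}(\cB(\rho)\|\cB(\sigma))\ge \frac{1-\Big(1-\big(\tfrac{e^\epsilon-1}{e^\epsilon+1}\big)^2 E_1(\rho\|\sigma)^2\Big)^{\beta}}{\beta}.
\]
Substituting this lower bound for $\widebar H_{1-\lambda}$ into the denominator only increases the fraction, so the resulting expression is still an upper bound on $\SC_{\sym,\epsilon}$; after clearing the factor $\beta$ this is precisely the claimed inequality.

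All steps are elementary substitutions, so there is no genuine obstacle; the only points requiring care are (i) recording that $\cB$ is indeed $\epsilon$-LDP and that its output trace distance equals $\frac{e^\epsilon-1}{e^\epsilon+1}E_1(\rho\|\sigma)$, as computed in the proof of Theorem~\ref{Thm:Achievability}, and (ii) keeping the inequality directions straight, since it is a \emph{lower} bound on the Petz Hellinger divergence that produces a valid \emph{upper} bound on the sample complexity.
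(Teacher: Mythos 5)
Your proposal is correct and follows essentially the same route as the paper: the paper likewise combines Proposition~\ref{Prop:prior-upper} with the fidelity--Hellinger lower bound of the preceding Lemma, and then invokes the binary channel from the proof of Theorem~\ref{Thm:Achievability} to replace $E_1(\rho\|\sigma)$ by $\frac{e^\epsilon-1}{e^\epsilon+1}E_1(\rho\|\sigma)$. The only cosmetic difference is the order of substitution (you fix the channel first and then bound its outputs, while the paper bounds the non-private quantity first and plugs in the channel at the end); your added care about monotonicity of $x\mapsto\bigl(1-(1-x^2)^{\beta}\bigr)/\beta$ and the inequality directions is exactly what makes the step valid.
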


\begin{proof}
By Proposition \ref{Prop:prior-upper} we have, for $\lambda=\frac{\log2}{2\log{p^{-1}}}$ and $\beta=\min\{\lambda,1-\lambda\}$,
    \begin{align}
        \SC_{\sym}(\rho,\sigma,p,\delta) &\leq \frac{\log{\left(p^{1-\lambda}(1-p)^\lambda/\delta\right)}}{\lambda \widebar H_{1-\lambda}(\rho\|\sigma)}\le \frac{\beta\log{\left(p^{1-\lambda}(1-p)^\lambda/\delta\right)}}{\lambda\big(1-\big(1-E_1(\rho\|\sigma)^2\big)^{\beta}\big)} \,.
    \end{align}
     Following once again the proof of Theorem~\ref{Thm:Achievability}, we have
    \begin{align}
    \sup_{\cA\in \LDP} E_1(\cA(\rho)\|\cA(\sigma)) 
    \geq  \frac{e^\epsilon-1}{e^\epsilon+1} E_1(\rho\|\sigma)\,
\end{align} 
    the result follows.
\end{proof}

\subsection{Asymmetric hypothesis testing} \label{sec:asym}

In the previous sections, we studied the sample complexity for the \emph{symmetric hypothesis testing} (which is also called the \emph{Bayesian hypothesis testing}), where the prior probabilities for the hypotheses are provided and fixed.
In the following, we discuss the relation to the \emph{asymmetric hypothesis testing} (which is also called the \emph{prior-free hypothesis testing}), where the aim is to balance the so-called type-I and type-II errors without knowing priors.

For the asymmetric setting, 
we denote $\SC_{\asym}(\rho,\sigma,\alpha,\beta )$ as the smallest integer $n$ such that there exists a test $0\leq T_n \leq I$ satisfying
\begin{align} \label{eq:defn:asym}
    \tr\left( \rho^{\otimes n} (I - T_n)  \right) \leq \alpha, \quad
    \tr \left( \sigma^{\otimes n} T_n \right) \leq \beta.
\end{align}
Similar to the definition for the symmetric setting, \eqref{eq:defn:sym},
we use the subscript `PF' to designate the sample complexity to the asymmetric setting.

Ref.~\cite[Corollary 4.8]{pensia2024sample} show  that the sample complexities for both settings are equivalent up to a multiplicative constant in the classical scenario.
Here, we argue that the statement naturally extends to the quantum scenario:

\begin{theorem} \label{theorem:relation}
    Let $\alpha \in (0,1/2]$ and $\delta \leq \alpha/4$.
    Then $\SC_{\sym}(\rho,\sigma,\alpha,\delta) \asymp \SC_{\asym}(\rho,\sigma,\frac{\delta}{\alpha}, \frac{\delta}{1-\alpha} )$, where ``$\asymp$'' means equality up to a universal constant.

    Similarly, let $\alpha, \beta \in (0,1/8]$ satisfying $\beta \leq \alpha$.
    Then $\SC_{\asym}(\rho,\sigma,\alpha, \beta)
    \asymp \SC_{\sym}(\rho,\sigma,\frac{\beta}{\alpha+\beta},\frac{\alpha\beta}{\alpha+\beta}) $.
\end{theorem}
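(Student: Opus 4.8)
The plan is to establish each of the two equivalences by proving two one-sided bounds, exploiting that the two statements are dual: the parameter maps $(\alpha,\delta)\mapsto(\tfrac\delta\alpha,\tfrac{\delta}{1-\alpha})$ and $(\alpha,\beta)\mapsto(\tfrac{\beta}{\alpha+\beta},\tfrac{\alpha\beta}{\alpha+\beta})$ are mutually inverse, so the second part is the first read backwards with the parameter ranges adjusted. In each case one direction is a purely operational observation and the other requires a constant-blow-up amplification argument, which is where the real work sits.

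For the easy direction I would use that a Bayesian-optimal test is automatically a good prior-free test. Suppose $n=\SC_{\sym}(\rho,\sigma,\alpha,\delta)$, so there is a test $T$ with $\alpha\tr(\rho^{\otimes n}(I-T))+(1-\alpha)\tr(\sigma^{\otimes n}T)\le\delta$. Since both summands are nonnegative, $\tr(\rho^{\otimes n}(I-T))\le\delta/\alpha$ and $\tr(\sigma^{\otimes n}T)\le\delta/(1-\alpha)$, so $T$ already witnesses the asymmetric constraints and $\SC_{\asym}(\rho,\sigma,\tfrac\delta\alpha,\tfrac{\delta}{1-\alpha})\le\SC_{\sym}(\rho,\sigma,\alpha,\delta)$. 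The identical computation with $p=\tfrac{\beta}{\alpha+\beta}$, $\delta'=\tfrac{\alpha\beta}{\alpha+\beta}$ (noting $\delta'/p=\alpha$ and $\delta'/(1-p)=\beta$) gives $\SC_{\asym}(\rho,\sigma,\alpha,\beta)\le\SC_{\sym}(\rho,\sigma,p,\delta')$ for the second part.

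For the reverse direction I would start from a prior-free test $T$ on $n_0$ copies with type-I error $a\le\delta/\alpha$ and type-II error $b\le\delta/(1-\alpha)$ and turn it into a cheap symmetric test by a majority vote: measure $\{T,I-T\}$ independently on each of $k$ blocks of $n_0$ copies (a legitimate POVM on $kn_0$ copies, which is the only place the quantum structure enters, and only trivially through product states) and output the majority guess. Feeding independent blocks, the amplified type-I error is $\Pr[\mathrm{Bin}(k,a)\ge\lceil k/2\rceil]\le a^{\lceil k/2\rceil}R_k$ with $R_k:=\sum_{j\ge\lceil k/2\rceil}\binom kj 4^{-(j-\lceil k/2\rceil)}$ a universal constant (using $a\le 1/4$), and similarly for type-II. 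The resulting Bayesian error at prior $\alpha$ is then at most $\alpha a^{\lceil k/2\rceil}R_k+(1-\alpha)b^{\lceil k/2\rceil}R_k$, and the crucial cancellation $\alpha(\delta/\alpha)^{m}=\delta(\delta/\alpha)^{m-1}\le\delta\,4^{-(m-1)}$ (valid because $\delta\le\alpha/4$, hence $\delta/\alpha\le1/4$, and likewise $\delta/(1-\alpha)\le1/4$) bounds this by $2R_k4^{-(\lceil k/2\rceil-1)}\delta$. Since $R_k4^{-\lceil k/2\rceil}\to0$, a universal odd constant $k$ makes the prefactor at most $1$, giving $\SC_{\sym}(\rho,\sigma,\alpha,\delta)\le k\,\SC_{\asym}(\rho,\sigma,\tfrac\delta\alpha,\tfrac{\delta}{1-\alpha})$. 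The second part is identical with $\alpha,\beta\le\tfrac18$ in place of the $\tfrac14$ bounds, which only strengthens the cancellation and shrinks the admissible $k$.

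The main obstacle is purely in the bookkeeping of this amplification: a single prior-free test only certifies Bayesian error $2\delta$ rather than $\delta$, and one must check that a fixed number of repetitions, independent of the states and of $\alpha,\delta,\beta$, crushes this back below the exact target threshold and not merely below a constant multiple of it. This is exactly what the hypotheses $\delta\le\alpha/4$ and $\alpha,\beta\le1/8$ buy: they force each per-hypothesis error entering the vote to be at most $1/4$, so that the binomial-tail prefactor $R_k$ is beaten by the geometric gain $4^{-(\lceil k/2\rceil-1)}$ for a universal $k$. Everything else is classical, and the quantum content reduces to applying a product measurement to a product state.
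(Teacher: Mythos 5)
Your proposal is correct and follows essentially the same route as the paper: the easy direction is the Bayes-risk decomposition showing an optimal symmetric test already witnesses the asymmetric error constraints, and the hard direction is majority-vote boosting over independent blocks, where the only quantum input is that a product measurement followed by classical post-processing is a valid test (the paper packages this as Lemma~\ref{lemma:repetition} and otherwise defers to the classical argument of \cite[Claim 4.6, Proposition 4.7, Corollary 4.8]{pensia2024sample}). The one step you state a bit too casually is that $R_k 4^{-\lceil k/2\rceil}\to 0$: the trivial bound $R_k\le 2^k$ only gives $O(1)$, so one needs the central-binomial estimate $\binom{k}{\lceil k/2\rceil}=O(2^k/\sqrt{k})$ (or a Chernoff/Bennett bound, as the paper uses) to conclude, but this is a routine fix.
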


Below we explain why the ideas in \cite{pensia2024sample} holds for the quantum scenario (Theorem~\ref{theorem:relation}).
First, \cite[Claim 4.6]{pensia2024sample} shows that both the complexities can be related.
By definitions given in \eqref{eq:defn:sym} and \eqref{eq:defn:asym}, such an assertion of the relation holds regardless of the classical or quantum scenario.
Hence, the following hold:
\begin{align} \label{eq:relation1}
\SC_{\sym}\left(\rho,\sigma,\frac{\beta}{\alpha+\beta},\frac{2\alpha\beta}{\alpha+\beta}\right)
\leq 
\SC_{\asym}\left(\rho,\sigma,\alpha,\beta\right)
\leq 
\SC_{\sym}\left(\rho,\sigma,\frac{\beta}{\alpha+\beta},\frac{\alpha\beta}{\alpha+\beta}\right), \quad \alpha,\beta \in (0,1),
\end{align}
and
\begin{align} \label{eq:relation2}
\SC_{\asym}\left(\rho,\sigma,\frac{\delta}{\alpha},\frac{\delta}{1-\alpha}\right)
\leq 
\SC_{\sym}\left(\rho,\sigma,\alpha,\beta\right)
\leq 
\SC_{\sym}\left(\rho,\sigma,\frac{\delta}{2\alpha},\frac{\delta}{2(1-\alpha)}\right), \quad \alpha,\delta \in (0,1),
\end{align}

Second, the upper and lower bounds in the above inequalities can reversed by a multiplicative factor
as claimed by \cite[Proposition 4.7]{pensia2024sample}.
This can be done by resorting to a useful classical argument (see also \cite[Fact 4.9]{pensia2024sample}):
\begin{lemma}[Repetition to boost success probability] \label{lemma:repetition}
    Let $\rho$ be a quantum state.
    Consider a test $0\leq T_n \leq I$ satisfying
    $\tr\left( \rho^{\otimes n} (I - T_n) \right) \leq \alpha $ for $\alpha \leq 1/4$.
    Then, there exists a modified test $0\leq T_{nr} \leq I$ defined by the majority of $r$-rounds of independent tests via $T_n$ such that the boosted procedure is at most $\alpha' \leq \alpha$, i.e.,
    \begin{align}
    \tr\left( \rho^{\otimes nr} (I - T_{nr}) \right) \leq \alpha'
    \end{align}
    provided that $r \geq \frac{2^5 \log (1/\alpha')}{\log (1/\alpha) }$.
\end{lemma}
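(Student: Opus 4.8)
The plan is to realize $T_{nr}$ as an honest majority-vote POVM element and then reduce the whole analysis to a purely classical concentration bound, exploiting the product structure $\rho^{\otimes nr}=(\rho^{\otimes n})^{\otimes r}$. Concretely, I would group the $nr$ copies into $r$ disjoint blocks of $n$ copies, apply the binary POVM $\{T_n,\,I-T_n\}$ independently to each block, and set $T_{nr}:=\sum \bigotimes_{i=1}^r M_i^{(b_i)}$, where $M_i^{(1)}=T_n$ (``accept''), $M_i^{(0)}=I-T_n$ (``reject''), and the sum runs over all outcome strings $b\in\{0,1\}^r$ whose number of accepts exceeds $r/2$. This is manifestly an operator with $0\le T_{nr}\le I$. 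Since $\rho^{\otimes nr}$ is a product across blocks and the measurements act on disjoint tensor factors, the reject indicators $X_1,\dots,X_r$ are independent Bernoulli variables with $\mathbb{E}[X_i]=\tr\big(\rho^{\otimes n}(I-T_n)\big)\le\alpha$. The boosted test fails exactly when a majority of blocks reject, so $\tr\big(\rho^{\otimes nr}(I-T_{nr})\big)=\Pr\big[\sum_i X_i\ge r/2\big]$, and the problem becomes classical.

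Next I would control this tail with a standard Chernoff argument. Using the moment generating function with an exponential tilt $t>1$ together with the monotonicity of $1-q+qt$ in $q$, one gets $\Pr\big[\sum_i X_i\ge r/2\big]\le t^{-r/2}(1-\alpha+\alpha t)^r$; optimizing at $t=(1-\alpha)/\alpha$ yields the clean estimate
\begin{align*}
\tr\big(\rho^{\otimes nr}(I-T_{nr})\big)\le\big(4\alpha(1-\alpha)\big)^{r/2}.
\end{align*}
Any Chernoff/Hoeffding bound with the same exponential decay would serve equally well; the constant $2^5$ in the statement is comfortably loose, so no sharpness is needed here.

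It remains to convert this into the stated threshold on $r$. The key elementary inequality is $4\alpha(1-\alpha)\le\alpha^{1/16}$ for all $\alpha\in(0,\tfrac14]$, equivalently $-\log\big(4\alpha(1-\alpha)\big)\ge\tfrac1{16}\log\tfrac1\alpha$. I would verify this by showing that $g(\alpha):=-\log4-\tfrac{17}{16}\log\alpha-\log(1-\alpha)$ is positive on $(0,\tfrac14]$: its derivative $g'(\alpha)=-\tfrac{17}{16\alpha}+\tfrac1{1-\alpha}$ vanishes only at $\alpha=17/33>1/4$, so $g$ is monotone (decreasing) on the interval, whence its minimum is $g(\tfrac14)=\tfrac1{16}\log4+\log\tfrac43>0$. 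Plugging this in gives $\tr\big(\rho^{\otimes nr}(I-T_{nr})\big)\le\alpha^{r/32}$, and the hypothesis $r\ge 2^5\log(1/\alpha')/\log(1/\alpha)$ makes the exponent at least $\log(1/\alpha')$, i.e. $\alpha^{r/32}\le\alpha'$, as required. I expect the only genuinely delicate point to be the first step: one must make sure that the majority rule really is implementable as a single POVM element on $\rho^{\otimes nr}$ and that measuring the product state blockwise yields honestly independent bits. Once that reduction is clean, the remaining steps are routine concentration together with a one-variable calculus check.
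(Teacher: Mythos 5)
Your proposal is correct and follows the same route the paper takes: reduce to a classical majority vote over $r$ independent blockwise measurements of the product state $(\rho^{\otimes n})^{\otimes r}$ and then invoke classical concentration. The paper only sketches this step, deferring to the classical argument of Pensia et al.\ (via Bennett's inequality), whereas you carry out the reduction and the Chernoff computation explicitly, which is a welcome addition; your construction of $T_{nr}$ as the sum of tensor products over majority outcome strings is exactly the right way to make ``majority vote is classical post-processing'' precise. One small arithmetic slip: to verify $4\alpha(1-\alpha)\le\alpha^{1/16}$ you should study $g(\alpha)=-\log 4-\tfrac{15}{16}\log\alpha-\log(1-\alpha)$, not $-\tfrac{17}{16}\log\alpha$ (the $+\tfrac1{16}\log\alpha$ term from $\alpha^{1/16}$ partially cancels the $-\log\alpha$ rather than adding to it); the corrected $g$ is still monotone on $(0,\tfrac14]$ with $g(\tfrac14)=-\tfrac1{16}\log 4+\log\tfrac43>0$, so the conclusion $\tr\bigl(\rho^{\otimes nr}(I-T_{nr})\bigr)\le\alpha^{r/32}\le\alpha'$ stands.
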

Since majority vote is a classical post-processing, thereby classical concentration such as Bennet's inequality \cite[Fact A.1]{pensia2024sample}, \cite[Theorem 2.9.2]{Vershynin2018book}.
Using Lemma~\ref{lemma:repetition}, \cite[Proposition 4.7]{pensia2024sample} straightforwardly extends to the quantum scenario:
For $\alpha_1, \alpha_2 \in (0,1/2]$, $\delta_1 \in (0,\alpha_1/4)$, and $\delta_2 \in (0, \alpha_2/4)$,
\begin{align} \label{eq:reverse_control1}
\frac{ \SC_{\sym}(\rho,\sigma, \alpha_1, \delta_1)  }{ \SC_{\sym}(\rho,\sigma, \alpha_2, \delta_2) } \leq O\left( \max\left\{ 1, \frac{\log(\alpha_1/\delta_1)}{ \log(\alpha_2/\delta_2)  }, \frac{ \log (1/\delta_1)}{ \log (1/\delta_2) } \right\} \right) .
\end{align}
Similarly, for $\alpha_1, \beta_1, \alpha_2, \beta_2 \in (0,1/4]$, 
\begin{align} \label{eq:reverse_control2}
\frac{ \SC_{\asym}(\rho,\sigma, \alpha_1, \beta_1)  }{ \SC_{\asym}(\rho,\sigma, \alpha_2, \beta_2) } \leq O\left( \max\left\{ 1, \frac{\log(1/\alpha_1)}{ \log(1/\alpha_2)  }, \frac{ \log (1/\beta_1)}{ \log (1/\beta_2) } \right\} \right) .
\end{align}
Then, inequalities \eqref{eq:relation1}, \eqref{eq:relation2}, \eqref{eq:reverse_control1}, and \eqref{eq:reverse_control2} together imply Theorem~\ref{theorem:relation} via the argument in \cite[Corollary 4.8]{pensia2024sample}.

\bigskip
Next, we are at a position to move to the private setting for asymmetric hypothesis testing.
By defining
\begin{align}
  \SC_{\epsilon, \asym}(\rho,\sigma, \alpha, \beta) 
  &:= \inf_{\cA\in \LDP_\epsilon}\,\SC_{\asym} (\mathcal{A}(\rho),\mathcal{A}(\sigma),\alpha, \beta),
\end{align}
then the equivalence relation, Theorem~\ref{theorem:relation}, and the bounds  established for the symmetric setting in previous sections
give us the following bounds for the $\epsilon$-locally differentially private asymmetric hypothesis testing.
\begin{corollary}
Let $\alpha,\beta \in (0,1/8]$. Then,
    \begin{align}
        \frac{(1-\alpha)^2}{e^{-\epsilon}(e^\epsilon-1)E_1(\rho\|\sigma)^2}
        \lesssim \SC_{\asym,\epsilon }\left(\rho,\sigma,\alpha,\beta\right) &\lesssim
         \left(\frac{e^\epsilon+1}{e^\epsilon-1}\right)^2 \frac{\log{ \frac{\alpha+\beta}{\alpha\beta}  }}{ E_1(\rho\|\sigma)^2}\,,
\end{align}
where `$\lesssim$' means an inequality up to a universal constant.
\end{corollary}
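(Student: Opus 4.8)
The plan is to combine the equivalence between symmetric and asymmetric sample complexity (Theorem~\ref{theorem:relation}) with the LDP bounds already established for the symmetric setting, since the private asymmetric quantity is defined by an infimum over $\cA\in\LDP_\epsilon$ of the \emph{non-private} asymmetric complexity of the output states. The key structural observation is that Theorem~\ref{theorem:relation} is a relation between $\SC_{\asym}$ and $\SC_{\sym}$ that holds for \emph{any} fixed pair of states; in particular it applies to the output states $\cA(\rho),\cA(\sigma)$ for every $\cA\in\LDP_\epsilon$, with a universal multiplicative constant independent of $\cA$. Hence I would first apply the second half of Theorem~\ref{theorem:relation} to $\cA(\rho),\cA(\sigma)$ to obtain, for $\alpha,\beta\in(0,1/8]$ with $\beta\le\alpha$,
\begin{align*}
\SC_{\asym}\big(\cA(\rho),\cA(\sigma),\alpha,\beta\big) \asymp \SC_{\sym}\Big(\cA(\rho),\cA(\sigma),\tfrac{\beta}{\alpha+\beta},\tfrac{\alpha\beta}{\alpha+\beta}\Big),
\end{align*}
and then take the infimum over $\cA\in\LDP_\epsilon$ on both sides. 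Because the constants in ``$\asymp$'' are universal, the infimum passes through and yields
\begin{align*}
\SC_{\asym,\epsilon}(\rho,\sigma,\alpha,\beta) \asymp \SC_{\sym,\epsilon}\Big(\rho,\sigma,\tfrac{\beta}{\alpha+\beta},\tfrac{\alpha\beta}{\alpha+\beta}\Big).
\end{align*}

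Next I would feed this reduction into the symmetric LDP bounds proved earlier, with the specialized priors $p=\frac{\beta}{\alpha+\beta}$ and error level $\delta=\frac{\alpha\beta}{\alpha+\beta}=\alpha p$. For the upper bound I would invoke Theorem~\ref{upboundSCeps}, whose first inequality~\eqref{eq:upboundSCeps} reads $\SC_{\sym,\epsilon}(\rho,\sigma,p,\delta)\le (\frac{e^\epsilon+1}{e^\epsilon-1})^2\frac{2\log\delta^{-1}}{E_1(\rho\|\sigma)^2}$. Substituting $\delta^{-1}=\frac{\alpha+\beta}{\alpha\beta}$ directly produces the claimed $\log\frac{\alpha+\beta}{\alpha\beta}$ numerator. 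For the lower bound I would use Theorem~\ref{lowpboundSCfirst}, specifically the $\chi^2$-based bound which, as shown in~\eqref{eq:lowpboundSCfirst}, actually holds in the sharper form $\SC_{\sym,\epsilon}(\rho,\sigma,p,\delta)\ge \frac{(1-\delta/p)^2}{e^{-\epsilon}(e^\epsilon-1)E_1(\rho\|\sigma)^2}$ for any $\delta\le p$; here $\delta/p=\alpha$, so $(1-\delta/p)^2=(1-\alpha)^2$, giving exactly the stated lower bound. I would need to verify that the chosen priors satisfy the hypotheses of these theorems: since $\beta\le\alpha\le 1/8$ we have $p=\frac{\beta}{\alpha+\beta}\in(0,\tfrac12]$ and $\delta=\alpha p\le\frac{p}{8}\le\frac{p}{4}$, so the regime $\delta\le p/4$ required by the arbitrary-prior bounds is comfortably met.

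The main obstacle is bookkeeping rather than conceptual: I must check that the hypotheses of Theorem~\ref{theorem:relation} (namely $\alpha,\beta\in(0,1/8]$ and $\beta\le\alpha$) are stable under the reduction and that the induced symmetric parameters land in the valid ranges of Theorems~\ref{lowpboundSCfirst} and~\ref{upboundSCeps}, so that the universal constants absorbed into ``$\lesssim$'' remain independent of $\rho,\sigma,\epsilon$. A subtle point worth flagging is the order of operations with the infimum: one must confirm that applying the equivalence pointwise for each $\cA$ and then optimizing is legitimate, which it is precisely because the equivalence constant in Theorem~\ref{theorem:relation} does not depend on the underlying states. I would also remark that the lower bound could alternatively be routed through the Jensen--Shannon bound of Theorem~\ref{lowpboundSCfirst}, but the $\chi^2/E_1$ route is cleaner here because it reproduces the $E_1(\rho\|\sigma)^{-2}$ scaling matching the upper bound, yielding bounds that are tight up to the $\epsilon$-dependent prefactors $e^{-\epsilon}(e^\epsilon-1)$ versus $(\frac{e^\epsilon+1}{e^\epsilon-1})^2$ and the logarithmic factor in the error parameters.
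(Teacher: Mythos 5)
Your proposal is correct and follows essentially the same route as the paper: apply Theorem~\ref{theorem:relation} with $p=\frac{\beta}{\alpha+\beta}$ and $\delta=\frac{\alpha\beta}{\alpha+\beta}$ (so $\delta/p=\alpha$), then invoke the upper bound \eqref{eq:upboundSCeps} and the lower bound \eqref{eq:lowpboundSCfirst}. Your explicit checks — that the equivalence constants are state-independent so the infimum over $\cA\in\LDP_\epsilon$ passes through, and that the induced parameters satisfy $\delta\le p/4$ — are details the paper leaves implicit but are exactly the right ones to verify.
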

\begin{proof}
Theorem~\ref{theorem:relation} and the bounds \eqref{eq:upboundSCeps} and \eqref{eq:lowpboundSCfirst} with $p = \frac{\beta}{\alpha+\beta}$ and $\delta = \frac{\alpha\beta}{\alpha+\beta} \leq \frac14$
give us the desired bounds.
\end{proof}

\section*{Acknowledgement}
HC is supported by Grants No.~NSTC 112-2636-E-002-009, No.~NSTC 112-2119-M-007-006, No.~NSTC 112-2119-M-001-006, No.~NSTC 112-2124-M-002-003, No.~NTU-112V1904-4, NTU-CC-113L891605, and NTU-113L900702.

\bibliographystyle{abbrv}
\bibliography{bib}

\begin{thebibliography}{10}

\bibitem{kamath2020LDP}
A primer on private statistics.
\newblock {\em arXiv preprint arXiv:2005.00010}, 2020.

\bibitem{asoodeh2022contraction}
S.~Asoodeh and H.~Zhang.
\newblock Contraction of locally differentially private mechanisms.
\newblock {\em arXiv preprint arXiv:2210.13386}, 2022.

\bibitem{audenaert2012comparisons}
K.~M. Audenaert.
\newblock Comparisons between quantum state distinguishability measures.
\newblock {\em arXiv preprint arXiv:1207.1197}, 2012.

\bibitem{ACM+07}
K.~M.~R. Audenaert, J.~Calsamiglia, R.~Mu{\~{n}}oz-Tapia, E.~Bagan, L.~Masanes,
  A.~Acin, and F.~Verstraete.
\newblock Discriminating states: The quantum {Chernoff} bound.
\newblock {\em Physical Review Letters}, 98:160501, Apr 2007.

\bibitem{ANS+08}
K.~M.~R. Audenaert, M.~Nussbaum, A.~Szko{\l}a, and F.~Verstraete.
\newblock Asymptotic error rates in quantum hypothesis testing.
\newblock {\em Communications in Mathematical Physics}, 279(1):251--283, Feb
  2008.

\bibitem{bar2002complexity}
Z.~Bar-Yossef.
\newblock {\em The complexity of massive data set computations}.
\newblock University of California, Berkeley, 2002.

\bibitem{canonne2019structure}
C.~L. Canonne, G.~Kamath, A.~McMillan, A.~Smith, and J.~Ullman.
\newblock The structure of optimal private tests for simple hypotheses.
\newblock In {\em Proceedings of the 51st Annual ACM SIGACT Symposium on Theory
  of Computing}, pages 310--321, 2019.

\bibitem{Wilde2024_review}
H.-C. Cheng, N.~Datta, N.~Liu, T.~Nuradha, R.~Salzmann, and M.~M. Wilde.
\newblock An invitation to the sample complexity of quantum hypothesis testing.
\newblock {\em arXiv preprint arXiv:2403.17868}, 2024.

\bibitem{duchi2013local}
J.~C. Duchi, M.~I. Jordan, and M.~J. Wainwright.
\newblock Local privacy and statistical minimax rates.
\newblock In {\em 2013 IEEE 54th Annual Symposium on Foundations of Computer
  Science}. IEEE, Oct. 2013.
\newblock arXiv preprint arXiv:1302.3203.

\bibitem{duchi2018minimax}
J.~C. Duchi, M.~I. Jordan, and M.~J. Wainwright.
\newblock Minimax optimal procedures for locally private estimation.
\newblock {\em Journal of the American Statistical Association},
  113(521):182--201, 2018.

\bibitem{Evfimievski_2003}
A.~Evfimievski, J.~Gehrke, and R.~Srikant.
\newblock Limiting privacy breaches in privacy preserving data mining.
\newblock In {\em Proceedings of the twenty-second ACM SIGMOD-SIGACT-SIGART
  symposium on Principles of database systems}, SIGMOD/PODS03. ACM, June 2003.

\bibitem{evfimievski2003limiting}
A.~Evfimievski, J.~Gehrke, and R.~Srikant.
\newblock Limiting privacy breaches in privacy preserving data mining.
\newblock In {\em Proceedings of the twenty-second ACM SIGMOD-SIGACT-SIGART
  symposium on Principles of database systems}, pages 211--222, 2003.

\bibitem{FG99}
C.~Fuchs and J.~van~de Graaf.
\newblock Cryptographic distinguishability measures for quantum-mechanical
  states.
\newblock {\em {IEEE} Transactions on Information Theory}, 45(4):1216--1227,
  may 1999.

\bibitem{hiai2016contraction}
F.~Hiai and M.~B. Ruskai.
\newblock Contraction coefficients for noisy quantum channels.
\newblock {\em Journal of Mathematical Physics}, 57(1), 2016.

\bibitem{hirche2022contraction}
C.~Hirche, C.~Rouz{\'e}, and D.~S. Fran{\c{c}}a.
\newblock On contraction coefficients, partial orders and approximation of
  capacities for quantum channels.
\newblock {\em Quantum}, 6:862, 2022.

\bibitem{hirche2023quantumDP}
C.~Hirche, C.~Rouz{\'e}, and D.~S. Fran{\c{c}}a.
\newblock Quantum differential privacy: An information theory perspective.
\newblock {\em IEEE Transactions on Information Theory}, 2023.

\bibitem{hirche2023quantum}
C.~Hirche and M.~Tomamichel.
\newblock {Quantum Rényi and $f$-divergences from integral representations}.
\newblock {\em arXiv preprint arXiv:2306.12343}, 2023.

\bibitem{Kasiviswanathan_2008}
S.~P. Kasiviswanathan, H.~K. Lee, K.~Nissim, S.~Raskhodnikova, and A.~Smith.
\newblock What can we learn privately?
\newblock In {\em 2008 49th Annual IEEE Symposium on Foundations of Computer
  Science}. IEEE, Oct. 2008.

\bibitem{kasiviswanathan2011can}
S.~P. Kasiviswanathan, H.~K. Lee, K.~Nissim, S.~Raskhodnikova, and A.~Smith.
\newblock What can we learn privately?
\newblock {\em SIAM Journal on Computing}, 40(3):793--826, 2011.

\bibitem{li2022optimal}
Y.~Li, V.~Y. Tan, and M.~Tomamichel.
\newblock Optimal adaptive strategies for sequential quantum hypothesis
  testing.
\newblock {\em Communications in Mathematical Physics}, 392(3):993--1027, 2022.

\bibitem{muller2013quantum}
M.~M{\"u}ller-Lennert, F.~Dupuis, O.~Szehr, S.~Fehr, and M.~Tomamichel.
\newblock On quantum r{\'e}nyi entropies: A new generalization and some
  properties.
\newblock {\em Journal of Mathematical Physics}, 54(12):122203, 2013.

\bibitem{Theshani2024}
T.~Nuradha and M.~M. Wilde.
\newblock Contraction of private quantum channels and private quantum
  hypothesis testing.
\newblock {\em arXiv preprint arXiv:TBC}, 2024.

\bibitem{pensia2023simple}
A.~Pensia, A.~R. Asadi, V.~Jog, and P.-L. Loh.
\newblock Simple binary hypothesis testing under local differential privacy and
  communication constraints.
\newblock In {\em The Thirty Sixth Annual Conference on Learning Theory}, pages
  3229--3230. PMLR, 2023.

\bibitem{pensia2024sample}
A.~Pensia, V.~Jog, and P.-L. Loh.
\newblock The sample complexity of simple binary hypothesis testing.
\newblock {\em arXiv preprint arXiv:2403.16981}, 2024.

\bibitem{petz1986quasi}
D.~Petz.
\newblock Quasi-entropies for finite quantum systems.
\newblock {\em Reports on mathematical physics}, 23(1):57--65, 1986.

\bibitem{ruskai1994beyond}
M.~B. Ruskai.
\newblock Beyond strong subadditivity? improved bounds on the contraction of
  generalized relative entropy.
\newblock {\em Reviews in Mathematical Physics}, 6(05a):1147--1161, 1994.

\bibitem{temme2010chi}
K.~Temme, M.~J. Kastoryano, M.~B. Ruskai, M.~M. Wolf, and F.~Verstraete.
\newblock The $\chi^2$-divergence and mixing times of quantum markov processes.
\newblock {\em Journal of Mathematical Physics}, 51(12), 2010.

\bibitem{vargas2021quantum}
E.~M. Vargas, C.~Hirche, G.~Sent{\'\i}s, M.~Skotiniotis, M.~Carrizo,
  R.~Mu{\~n}oz-Tapia, and J.~Calsamiglia.
\newblock Quantum sequential hypothesis testing.
\newblock {\em Physical review letters}, 126(18):180502, 2021.

\bibitem{Vershynin2018book}
R.~Vershynin.
\newblock {\em High-Dimensional Probability: An Introduction with Applications
  in Data Science}.
\newblock Cambridge University Press, 2018.

\bibitem{wilde2014strong}
M.~M. Wilde, A.~Winter, and D.~Yang.
\newblock Strong converse for the classical capacity of entanglement-breaking
  and hadamard channels via a sandwiched r{\'e}nyi relative entropy.
\newblock {\em Communications in Mathematical Physics}, 331:593--622, 2014.

\end{thebibliography}

\newpage
\onecolumn
\appendix
\section{Proofs}
In this section we provide the proofs missing in the main text.

\subsection{Proof of Lemma~\ref{Lem:H12-LDP-E1}}\label{appLem:H12-LDP-E1}
\begin{proof}
    By specializing Equation~\eqref{eq:H_alpha} we get, 
        \begin{align}
            &H_{\frac12}(\rho\|\sigma) \\
            &= \frac12\int_1^\infty\gamma^{-\frac32}\Big(E_\gamma(\rho\|\sigma)+E_\gamma(\sigma\|\rho)\,\Big)d\gamma \\
            &= \frac12\int_1^{e^{D_\infty(\rho\|\sigma)}}\gamma^{-\frac32}E_\gamma(\rho\|\sigma)d\gamma + \frac12\int_1^{e^{D_\infty(\sigma\|\rho)}}\gamma^{-\frac32}E_\gamma(\sigma\|\rho)d\gamma \\
            &\leq \frac12\int_1^{e^{D_\infty(\rho\|\sigma)}}\gamma^{-\frac32}\frac{e^{D_\infty(\rho\|\sigma)}-\gamma}{e^{D_\infty(\rho\|\sigma)}-1}E_1(\rho\|\sigma)d\gamma + \frac12\int_1^{e^{D_\infty(\sigma\|\rho)}}\gamma^{-\frac32}\frac{e^{D_\infty(\sigma\|\rho)}-\gamma}{e^{D_\infty(\sigma\|\rho)}-1}E_1(\sigma\|\rho)d\gamma \\
            &= \frac{(e^{\frac12D_\infty(\rho\|\sigma)}-1)^2}{e^{D_\infty(\rho\|\sigma)}-1}E_1(\rho\|\sigma) + \frac{(e^{\frac12D_\infty(\sigma\|\rho)}-1)^2}{e^{D_\infty(\sigma\|\rho)}-1}E_1(\sigma\|\rho),
        \end{align}
        where the second equality holds because $E_\gamma(\rho\|\sigma)=0$ for $\gamma\geq D_\infty(\rho\|\sigma)$, the inequality by convexity of $E_\gamma$ and the last equality by evaluating the integral. 
\end{proof}

\subsection{Proof of Lemma~\ref{Lem:E1-LDP-UB}}\label{appLem:E1-LDP-UB}
\begin{proof}
    We follow closely the classical proof in~\cite[Appendix C]{asoodeh2022contraction}. First, observe,
    \begin{align}
        \sup_{\cA}\sup_{\rho,\sigma} E_1(\cA(\rho)\|\cA(\sigma)) 
        \leq \sup_{\substack{\rho,\sigma \\ E_{e^\epsilon}(\rho\|\sigma)=0 \\ E_{e^\epsilon}(\sigma\|\rho)=0}} E_1(\rho\|\sigma), \label{Eq:trace-LDP-1}
    \end{align}
    which holds because the channel outputs are always close in the corresponding Hockey-stick divergence by definition. Next, define the measurement, 
    \begin{align}
        \cM(\cdot) = |0\rangle\langle 0| \tr\{P_+ \,\cdot\,\} + |1\rangle\langle 1|\tr\{(\id-P_+)\,\cdot\,\}, 
    \end{align}
    where 
    $P_+=\{\rho\geq 0\sigma\}$
    It is now easy to check that 
    \begin{align}
        E_1(\rho\|\sigma)=E_1(\cM(\rho)\|\cM(\sigma)), 
    \end{align}
    and, by data processing, 
    \begin{align}
         E_\gamma(\rho\|\sigma)=0 \quad\Rightarrow\quad E_\gamma(\cM(\rho)\|\cM(\sigma))=0. 
    \end{align}
    The output of the measurement can simply be seen as a binary probability distribution and hence, 
    \begin{align}
        \sup_{\substack{\rho,\sigma \\ E_{e^\epsilon}(\rho\|\sigma)=0 \\ E_{e^\epsilon}(\sigma\|\rho)=0}} E_1(\rho\|\sigma) 
        = \sup_{\substack{p,q \\ E_{e^\epsilon}(p\|q)=0 \\ E_{e^\epsilon}(q\|p)=0}} E_1(p\|q) =  \frac{e^{-\epsilon}(e^{\epsilon}-1)^2}{e^\epsilon-e^{-\epsilon}}, 
    \end{align}
    where $p,q$ are said binary probability distributions and the final equality was shown in~\cite[Equation (43)]{asoodeh2022contraction}. 
    Inserting this back into Equation~\eqref{Eq:trace-LDP-1} gives the claimed result. 
\end{proof}

\subsection{Proof of Proposition~\ref{Prop:eta-tr-LDP-UB}}\label{appProp:eta-tr-LDP-UB}

\begin{proof}
    We start with, 
    \begin{align}
    \eta_{\tr}(\cA) 
    &= \sup_{\Psi\perp\Phi} E_1(\cA(\Psi)\|\cA(\Phi)) \\
    &\leq \sup_{\Psi\perp\Phi} \sqrt{1 - F(\cA(\Psi)\|\cA(\Phi))^2}, 
    \\
    &\leq \sup_{\Psi\perp\Phi} \sqrt{H_{\frac12}(\cA(\Psi)\|\cA(\Phi)) - \frac14 H_{\frac12}(\cA(\Psi)\|\cA(\Phi))^2}, \label{Eq:eta-tr-H12-UB}
    \end{align}
    which follows from the Fuchs-van-de-Graaf inequality \cite{FG99} and then the first inequality, $1-F(\cdot,\cdot) \leq \frac{ H_{\frac12}(\cdot,\cdot)}{2}$, in Lemma~\ref{Lem:FvdG}.
    As observed in~\cite{asoodeh2022contraction}, due to the monotonicity of $t\rightarrow t(1-\frac14 t)$ in $[0,2]$, it is sufficient to continue with
    \begin{align}
        \sup_{\Psi\perp\Phi} H_{\frac12}(\cA(\Psi)\|\cA(\Phi))  \leq 2\frac{(e^{\frac12\epsilon}-1)^2}{e^{\epsilon}-1} \sup_{\Psi\perp\Phi} E_1(\cA(\Psi)\|\cA(\Phi)),
    \end{align}
    which follows from Lemma~\ref{Lem:H12-LDP-E1} because $\cA$ being LDP implies that $D_\infty(\cA(\rho)\|\cA(\sigma))\leq \epsilon$ and $D_\infty(\cA(\sigma)\|\cA(\rho))\leq \epsilon$. By then applying Lemma~\ref{Lem:E1-LDP-UB}, we can further bound this by, 
    \begin{align}
        \sup_{\Psi\perp\Phi} H_{\frac12}(\cA(\Psi)\|\cA(\Phi))  
        &\leq 2\frac{(e^{\frac12\epsilon}-1)^2}{e^{\epsilon}-1} \frac{e^{-\epsilon}(e^{\epsilon}-1)^2}{e^\epsilon-e^{-\epsilon}} \\
        &= 2\frac{(e^{\frac12\epsilon}-1)^2 (1-e^{-\epsilon})}{e^\epsilon-e^{-\epsilon}}. 
    \end{align}
    Inserting this back into Equation~\eqref{Eq:eta-tr-H12-UB} gives, after some calculation, 
    \begin{align}
        \eta_{\tr}(\cA) \leq \frac{e^\epsilon-1}{e^\epsilon+1}, 
    \end{align}
    which is was we set out to prove. 
\end{proof}

\subsection{Proof of Lemma~\ref{Lem:chi2-UB-TV2}}\label{appLem:chi2-UB-TV2}
\begin{proof}
Let $\sigma=\sum_i p_i |i\rangle\langle i|$. We observe the following, 
\begin{align}
    &\chi^2(\cN(\rho)\|\cN(\sigma))\\
    &= \int_0^\infty \tr[ (\cN(\rho)-\cN(\sigma))(\cN(\sigma) +s\Id)^{-1}(\cN(\rho)-\cN(\sigma))(\cN(\sigma) +s\Id)^{-1}] \d s \\
    &\leq \int_0^\infty \tr[\sum_{i,j}p_ip_j (\cN(\rho)-\cN(\sigma))(\cN(|i\rangle\langle i|) +s\Id)^{-1}(\cN(\rho)-\cN(\sigma))(\cN(|j\rangle\langle j|) +s\Id)^{-1}] \d s \\
    &\leq \sum_{i} p_i\int_0^\infty \tr[ (\cN(\rho)-\cN(\sigma))(\cN(|i\rangle\langle i|) +s\Id)^{-1}(\cN(\rho)-\cN(\sigma))(\cN(|i\rangle\langle i|) +s\Id)^{-1}] \d s \\
    &\leq \max_i \int_0^\infty \tr[ (\cN(\rho)-\cN(\sigma))(\cN(|i\rangle\langle i|) +s\Id)^{-1}(\cN(\rho)-\cN(\sigma))(\cN(|i\rangle\langle i|) +s\Id)^{-1}] \d s,
\end{align}
where the first inequality is operator convexity of $x^{-1}$ and the second for $x^2$ 
Now, define the replacer channel
\begin{align}
    \cR(\cdot) = \cN(|i\rangle\langle i|) \tr(\cdot), 
\end{align}
and set $\rho-\sigma=X=X_+-X_-$, with 
\begin{align}
    X_+=\sum_n p_n |n\rangle\langle n|, \qquad X_-=\sum_m p_m |m\rangle\langle m|. 
\end{align}
Finally define 
\begin{align}
    \hat X_+ = \frac{X_+}{\tr X_+}, \qquad \hat X_- = \frac{X_-}{\tr X_-},
\end{align}
where $\tr X_+ = \tr X_- = E_1(\rho\|\sigma)$. 
Note that
\begin{align}
    \cN(\rho-\sigma) &= (\cN-\cR)(\rho-\sigma)=(\cN-\cR)(X_+-X_-)=(\cN-\cR)((\tr X_+)\hat X_+-(\tr X_-)\hat X_-) \\
    &= E_1(\rho\|\sigma) (\cN-\cR)(\hat X_+-\hat X_-). 
\end{align}
With this, we get
\begin{align}
    \chi^2(\cN(\rho)\|\cN(\sigma)) &\leq E_1(\rho\|\sigma)^2 \max_i \int_0^\infty \tr\left[ \left( ((\cN-\cR)(\hat X_+-\hat X_-))(\cN(|i\rangle\langle i|) +s\Id)^{-1} \right)^2\right] \d s \\
    &\leq E_1(\rho\|\sigma)^2 \max_i \left(\int_0^\infty \tr\left[ \left( ((\cN-\cR)(\hat X_+))(\cN(|i\rangle\langle i|) +s\Id)^{-1} \right)^2\right] \d s \right.\\
    &\quad \left.+ \int_0^\infty \tr\left[ \left( ((\cN-\cR)(\hat X_-))(\cN(|i\rangle\langle i|) +s\Id)^{-1} \right)^2\right] \d s\right),  \\
    &\leq E_1(\rho\|\sigma)^2 \max_i \left(\sum_n p_n \int_0^\infty \tr\left[ \left( ((\cN-\cR)(|n\rangle\langle n|))(\cN(|i\rangle\langle i|) +s\Id)^{-1} \right)^2\right] \d s \right.\\
    &\quad \left.+ \sum_m p_m \int_0^\infty \tr\left[ \left( ((\cN-\cR)(|m\rangle\langle m|))(\cN(|i\rangle\langle i|) +s\Id)^{-1} \right)^2\right] \d s\right), \\
    &\leq 2 E_1(\rho\|\sigma)^2 \max_{\Psi,\Phi}  \int_0^\infty \tr\left[ \left( (\cN(\Psi)-\cN(\Phi))(\cN(\Phi) +s\Id)^{-1} \right)^2\right] \d s \\
    &= 2 E_1(\rho\|\sigma)^2 \max_{\Psi,\Phi} \chi^2(\cN(\Psi)\|\cN(\Phi)),
\end{align}
where the second inequality is by dropping all negative terms, the third by convexity and the forth by optimizing over general pure states. 
\end{proof}
\end{document}